\newtheorem{longtheorem}{Theorem}
\newtheorem{longlemma}[longtheorem]{Lemma}
\title{Extending Orthogonal Planar Graph Drawings is Fixed-Parameter Tractable } 
\titlerunning{Extending Orthogonal Planar Graph Drawings is FPT} 
\Crefname{property}{Property}{Properties}
\DeclareMathOperator{\bdist}{bd}
\DeclareMathOperator{\bvect}{vect}
\tikzset{
  circ/.style = {circle,draw,fill=white,inner sep=1.3pt}
}
\tikzset{
 wcirc/.style = {circle,draw,fill,inner sep=3.5pt,color=white}
}
\tikzset{
 bcirc/.style = {circle,draw,fill,inner sep=1pt,color=black}
}
\def\calG{\mathcal G}
\def\calF{\mathcal F}
\def\calU{\mathcal U}
\def\calO{\mathcal O}
\def\calT{\mathcal T}
\def\calP{\mathcal P}
\newcommand{\bigoh}{\mathcal{O}}
\DeclareMathOperator{\tw}{tw}
\newcommand{\gridsols}{\texttt{gridsols}}
\newcommand{\old}[1]{{}}
\newcommand{\TBOE}{\textsc{BMOE}\xspace}
\newcommand{\FTBOE}{\textsc{F-BMOE}\xspace}
\newcommand{\directions}{\{\uparrow,\downarrow,\leftarrow,\rightarrow\}}
\newcommand{\sequiv}{shape-equivalent\xspace}
\newtheorem{property}[lemma]{Property}
 \newcommand{\rec}{\texttt{Record}}
 \author{Sujoy Bhore}{Indian Institute of Technology Bombay, India}{sujoy.bhore@gmail.com}{https://orcid.org/0000-0003-0104-1659}{}
 \author{Robert Ganian}{Technische Universit\"at Wien, Austria}{rganian@ac.tuwien.ac.at}{https://orcid.org/0000-0002-7762-8045}{}
 \author{Liana Khazaliya}{Technische Universit\"at Wien, Austria}{liana.khazaliya@tuwien.ac.at}{}{}
 \author{Fabrizio Montecchiani}{Università degli Studi di Perugia, Italy}{fabrizio.montecchiani@unipg.it}{https://orcid.org/0000-0002-0543-8912}{}
 \author{Martin N\"ollenburg}{Technische Universit\"at Wien, Austria}{noellenburg@tuwien.ac.at}{[orcid]}{}
 \authorrunning{S. Bhore, R. Ganian, L. Khazaliya, F. Montecchiani, M. N\"ollenburg} 
\authorrunning{Anonymous Author(s)} 
\keywords{orthogonal drawings, bend minimization, extension problems, parameterized complexity} 
\begin{document}

\maketitle

\begin{abstract}
The task of finding an extension to a given partial drawing of a graph while adhering to constraints on the representation has been extensively studied in the literature, with well-known results providing efficient algorithms for fundamental representations such as planar and beyond-planar topological drawings. In this paper, we consider the extension problem for bend-minimal orthogonal drawings of planar connected graphs, which is among the most fundamental geometric graph drawing representations. While the problem was known to be \NP-hard, it is natural to consider the case where only a small part of the graph is still to be drawn. Here, we establish the fixed-parameter tractability of the problem when parameterized by the size of the missing subgraph.
Our algorithm is based on multiple novel ingredients which intertwine geometric and combinatorial arguments. These include the identification of a new graph representation of bend-equivalent regions for vertex placement in the plane, establishing a bound on the treewidth of this auxiliary graph, and a global point-grid that allows us to discretize the possible placement of bends and vertices into locally bounded subgrids for each of the above regions. 
\end{abstract}

\section{Introduction}
\label{sec:intro}

Extending partial drawings of graphs while preserving certain desirable properties such as planarity is an algorithmic problem that received considerable attention in the last decade in graph theory, graph drawing, and computational geometry. 
Drawing extension problems are motivated, for instance, by visualizing networks, in which certain subgraphs represent important motifs that require a specific drawing, or by visualizing dynamic networks, in which new edges and vertices must be integrated in an existing, stable drawing. 
Generally speaking, we are given a graph $G$ and a (typically connected) subgraph $H$ of $G$ with a drawing $\Gamma(H)$, which is called a \emph{partial} drawing of $G$. 
The drawing $\Gamma(H)$ typically satisfies certain topological or geometric properties, e.g., planarity, upward planarity, or 1-planarity, and the goal of the corresponding extension problem is to extend $\Gamma(H)$ to a drawing $\Gamma(G)$ of the whole graph $G$ (if possible) by inserting the missing vertices and edges into $\Gamma(H)$ while maintaining the required drawing properties. 

A fundamental result in this line of research is the work of Angelini et al.~\cite{adfjkp-tppeg-15}, who showed that for planar graphs with a given partial planar drawing, the extension problem can be solved in linear time, thus matching the time complexity of unconstrained planarity testing.
In fact, there is also a corresponding combinatorial characterization of planar graphs with extensible partial planar drawings via forbidden substructures~\cite{jkr-ktppeg-13}.
In contrast to the above results, which consider topological graph embeddings, the planar drawing extension problem is \NP-hard in its geometric variant, where one has to decide if a partial planar straight-line drawing $\Gamma(H)$ can be extended to a planar straight-line drawing of $G$~\cite{Patrignani06}.

In this paper, we study the geometric drawing extension problem arising in the context of one of the most fundamental graph drawing styles: orthogonal drawings~\cite{efk-ogd-01,dg-popda-13,DBLP:books/ph/BattistaETT99,DBLP:books/ws/NishizekiR04}. 
In a planar orthogonal drawing, edges are represented as 
polylines comprised of (one or more) horizontal and vertical segments with as few overall bends as possible, where edges are not allowed to intersect except at common endpoints.
Orthogonal drawings find applications in various domains from VLSI and printed circuit board (PCB) design, to schematic network visualizations, e.g., UML diagrams in software engineering, argument maps, or flow charts. 

Given the above, a key optimization goal in orthogonal drawings is bend minimization. This task is known to be \NP-hard~\cite{GargT01} when optimizing over all possible combinatorial embeddings of a given graph, but can be solved in polynomial time for a fixed combinatorial embedding using the network flow model of Tamassia~\cite{t-eggmn-87}. 
Interestingly, the complexity of the bend minimization problem without a fixed embedding depends on the vertex degrees, which in the classical case of vertices being represented as points is naturally bounded by $4$.
If, however, the maximum vertex degree is $3$, then there is a polynomial-time algorithm for bend minimization~\cite{BattistaLV98}, and this result has recently been improved to linear time~\cite{dlop-oodp3lt-20}; more generally, the problem is fixed-parameter tractable (FPT) in the number of degree-$4$ vertices~\cite{dl-codves-98}. In addition, it has been recently shown that the bend minimization problem is in \XP\ (slice-wise polynomial) parameterized by the treewidth of the input graph~\cite{DBLP:journals/jcss/GiacomoLM22}.


Despite the general popularity of planar orthogonal graph drawings, the corresponding extension problem has only been considered recently~\cite{AngeliniRS21}. 
While the authors of that paper showed that the existence of a planar orthogonal extension can be decided in linear time, the orthogonal bend-minimal drawing extension problem in general is easily seen to be \NP-complete as it generalizes the case in which the pre-drawn part of the graph is empty~\cite{GargT01}. 
Our paper addresses the parameterized complexity of the bend-minimal extension problem for planar orthogonal graph drawings under the most natural parameterization of the problem, which is the size of the subgraph that is still missing from the drawing. This parameter can be assumed to be small in many applications, e.g., when extending drawings of dynamic graphs with few added edges and vertices, and has been used broadly in the study of previous topological drawing extension problems. 

\subparagraph{Contributions.}
In this paper, we establish the fixed-parameter tractability of the \textsc{Bend-Minimal Orthogonal Extension (\TBOE)} problem when parameterized by the size $\kappa$ of the missing subgraph (see the formal problem statement in \cref{se:preliminaries}). A general difficulty we had to overcome on our way to obtain our fixed-parameter algorithm is the fact that while there have been numerous recent advances in the parameterized study of drawing extension problems~\cite{eghkn-ep1d-20,GanianHKPV21,HammH22}, 
the specific drawing styles considered in those papers were primarily topological in nature, while for bend minimization the geometry of the instance is crucial. In order to overcome this difficulty, we develop a new set of tools summarized below.

In Section~\ref{sec:red}, we make the first and simplest step towards fixed-parameter tractability of \TBOE by applying an initial branching step to simplify the problem. This step allows us to reduce our target problem to \textsc{Bend-Minimal Orthogonal Extension on a Face (\FTBOE)}, where the missing edges and vertices are drawn only in a marked face $f$ and we have some additional information about how the edges are geometrically connected. 

Next, in Section~\ref{sec:prep}, we focus on solving an instance of \FTBOE. We show that certain parts of the marked face $f$ are irrelevant and can be pruned away, and also use an involved argument to reduce the case of $f$ being the outer face to the case of $f$ being an inner face. 

Once that is done, we enter the centerpiece of our approach in Section~\ref{sec:discret}, where the aim is to obtain a suitable discretization of our instance. To this end, we split the face $f$ into so-called \emph{sectors}, which group together points that have the same ``bend distances'' to all of the connecting points on the boundary of $f$. Furthermore, we construct a \emph{sector-grid}---a point set such that each sector contains a bounded number of points from this set, and every bend-minimal extension can be modified to only use points from this set for all vertices and bends. While this latter result would make it easy to handle each individual sector by brute force, the issue is that the number of sectors can be very large, hindering tractability. 

To deal with this obstacle, we capture the connections between sectors via a \emph{sector graph} whose vertices are precisely the sectors and edges represent geometric adjacencies between sectors. 
Crucially, in Section~\ref{sec:tw} we show that the sector graph has treewidth bounded by a function of $\kappa$. This is non-trivial and relies on the previous application of the pruning step in Section~\ref{sec:prep}. Having obtained this bound on the treewidth, the last step simply combines the already constructed sector grid with dynamic programming to solve \FTBOE (and hence also \TBOE). It is perhaps worth pointing out the interesting contrast between the use of treewidth here as an implicit structural property of the sector graph---a crucial tool in our fixed-parameter algorithm---with the previously considered use of treewidth directly on the input graph---which is not known to lead to fixed-parameter tractability~\cite{DBLP:journals/jcss/GiacomoLM22}.

\subparagraph{Related work.}
Several variants of drawing extension problems have been studied over the years. For instance, Chambers et al.~\cite{cegl-dgpwpofpa-12} studied the problem of drawing a planar graph using straight-line edges with a prescribed convex polygon as the outer face, and proposed a method that produces drawings with polynomial area. 
Mchedlidze et al.~\cite{mnr-ecpdg-15} provide a characterization (which can be tested in linear time) to determine whether given a planar straight-line convex drawing of a biconnected subgraph $G'$ of a planar graph $G$ with a fixed planar embedding, this drawing can be extended to a planar straight-line drawing of $G$. 
Recently, Eiben et al. studied the problem of extending 1-planar drawings. While the problem was known to be \NP-complete, they showed~\cite{eghkn-ep1d-20} that the problem is \FPT\ when parameterized by the edge deletion distance. 
Later, in~\cite{eghkn-enc1dpt-20}, they showed that the 1-planar extension is polynomial-time solvable when the number of vertices and edges to be added to the partial drawing is bounded.
Hamm and Hlin\v en\' y also studied the parameterized complexity of the extension problem in the setting of crossing minimization~\cite{HammH22}.

Other types of extension problems have also been investigated, e.g., Da Lozzo et al.~\cite{lbf-eupgd-20} studied the upward planarity extension problem, and showed that this is \NP-complete even for very restricted settings. 
Br\"{u}ckner and Rutter~\cite{br-pclp-17} showed that the partial level planarity problem is \NP-complete again in severely restricted settings. 
For non-planar graph drawings, it is even \NP-hard to determine whether a single edge can be inserted into a simple partial drawing of the remaining graph, i.e., a drawing in which any two edges intersect in at most one point~\cite{arroyo2022inserting}.
Extension problems have been investigated also for other types of graph representations, in particular for intersection representations such as circular arc graphs~\cite{FialaRSZ22} or circle graphs~\cite{BrucknerRS22}.
In the context of bend-minimal planar orthogonal drawing extension, Angelini et al. showed that the problem remains \NP-hard even when a planar embedding of the whole graph is provided in the input~\cite{AngeliniRS21}.

\section{Preliminaries and Basic Tools}\label{se:preliminaries}

We assume familiarity with basic concepts in parameterized complexity theory, notably fixed-parameter tractability~\cite{CyganFKLMPPS15}. 

\subparagraph{Treewidth.}
A \emph{tree decomposition}~$\mathcal{T}_G$ of a graph $G=(V,E)$ is a pair 
$(T,\chi)$, where $T$ is a tree (whose vertices we call \emph{nodes}) rooted at a node $r$ and $\chi$ is a function that assigns each node $t$ a set $\chi(t) \subseteq V$ such that the following holds: 
for every $uv \in E$ there is a node	$t$ such that $u,v\in \chi(t)$, and for every vertex $v \in V$, the set of nodes $t$ satisfying $v\in \chi(t)$ forms a nonempty subtree of~$T$. 

\begin{itemize}
	\item For every $uv \in E$ there is a node	$t$ such that $u,v\in \chi(t)$.
	\item For every vertex $v \in V$, the set of nodes $t$ satisfying $v\in \chi(t)$ forms a nonempty subtree of~$T$.
\end{itemize}

A tree decomposition is \emph{nice} if the following two conditions are also satisfied:
\begin{itemize}
	\item $|\chi(\ell)|=0$ for every leaf $\ell$ of $T$ and $|\chi(r)|=0$.
	\item There are only three kinds of non-leaf nodes in $T$:
	\begin{itemize}
        \item \textbf{Introduce node:} a node $t$ with exactly
          one child $t'$ such that $\chi(t)=\chi(t')\cup
          \{v\}$ for some vertex $v\not\in \chi(t')$.
        \item \textbf{Forget node:} a node $t$ with exactly
          one child $t'$ such that $\chi(t)=\chi(t')\setminus
          \{v\}$ for some vertex $v\in \chi(t')$.
        \item \textbf{Join node:} a node $t$ with two children $t_1$,
          $t_2$ such that $\chi(t)=\chi(t_1)=\chi(t_2)$.
	\end{itemize}
\end{itemize}
The \emph{width} of a tree decomposition $(T,\chi)$ is the size of a largest set $\chi(t)$ minus~$1$, and the \emph{treewidth} of the graph $G$
is the minimum width of a tree decomposition of~$G$. We use $T_t$ to denote the subtree of $T$ rooted at $t$, and $\chi_\downarrow(t)$ to denote the set $\bigcup_{t' \in V(T_t)}\chi(t')$.

\subparagraph{Basic definitions.} 
Let $G$ be a simple connected  planar graph with vertex degree at most four. We use the notation $V(G)$ and $E(G)$ to denote the vertex and edge set of $G$. A \emph{drawing} $\Gamma(G)$ is a mapping of the vertices in $V(G)$ to points of the plane, and of the edges in $E(G)$ to Jordan arcs connecting their corresponding endpoints but not passing through any other vertex.
We only consider \emph{simple drawings}, in which any two arcs representing two edges have at most one point in common, which is either a common endpoint or a common interior point where the two arcs properly cross each other. 
A drawing is \emph{planar} if no two edges cross each other. 
It follows from the above definitions that in a simple planar drawing any two edges share at most one point which is a common endpoint. 
A planar drawing partitions the plane into topologically connected regions called \emph{faces}, one of which is unbounded and called the \emph{outer face}, in contrast with all other faces which are \emph{inner faces}.
A planar drawing $\Gamma(G)$ is \emph{orthogonal} if each edge is a polyline consisting of horizontal and vertical segments.
A \emph{bend} in a polygonal chain representing an edge in $\Gamma(G)$ is a point shared by two consecutive segments of the chain. When this creates no ambiguities, we make no distinction between the vertices of $G$ and the corresponding points of $\Gamma(G)$, as well as between the edges of $G$ and the corresponding polylines of $\Gamma(G)$.
For instance, \cref{fi:example-1} shows an orthogonal drawing of a graph $G$, in which edge $ax$ has three bends.

\begin{figure}[t]
    \centering
    \begin{minipage}[t]{0.45\textwidth}
    \centering
        \includegraphics[page=2, scale=1]{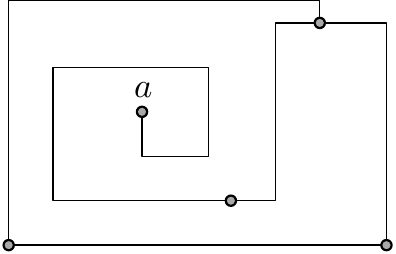}
        \subcaption{$\Gamma(G)$\label{fi:example-1}}
    \end{minipage}
    \begin{minipage}[t]{0.45\textwidth}
    \centering
        \includegraphics[page=1, scale=1]{figs/extexample.pdf}
        \subcaption{$\Gamma(H)$\label{fi:example-2}}
    \end{minipage}
    \caption{An orthogonal drawing of (a) a graph $G$ and (b) a subgraph $H$ of $G$.}
    \label{fig:example}
\end{figure}

\subparagraph{Problem Statement.}
Let $G$ be a planar graph and $H$ be a connected subgraph of $G$. 
We call the complement $X = V(G) \setminus V(H)$ the \emph{missing vertex set} of $G$, and $E_X = E(G) \setminus E(H)$ the \emph{missing edge set}. 
Let $\Gamma(H)$ be a planar orthogonal drawing of $H$. 
A planar orthogonal drawing $\Gamma(G)$ \emph{extends} $\Gamma(H)$ if its restriction to the vertices and edges of $H$ coincides with $\Gamma(H)$. 
Moreover, $\Gamma(G)$ is a $\beta$-extension of $\Gamma(H)$ if it extends $\Gamma(H)$ and the total number of bends along the edges of $E_X$ is at most $\beta$, for some $\beta \in \mathbb{N}$. For example, \cref{fi:example-1} shows a 7-extension $\Gamma(G)$ of the drawing $\Gamma(H)$ in \cref{fi:example-2}, with the missing vertices drawn in red.  

Our problem of interest is defined as follows. 

\medskip\noindent\fbox{%
  \parbox{0.95\textwidth}{
  \textsc{Bend-Minimal Orthogonal Extension (\TBOE)}\\
  \textbf{Input:} ($G, H, \Gamma(H)$), integer $\beta$\\
  \textbf{Problem:} Is there a $\beta$-extension $\Gamma(G)$ of $\Gamma(H)$?
  }}
\medskip

We remark that $\TBOE$ is known to be \NP-hard even when restricted to the case where $\beta=0$ and $V(H)=\emptyset$~\cite{GargT01}.   Also, unless specified otherwise, in the rest of the paper we only consider orthogonal drawings which are planar.
Our parameter of interest is the number of vertices and edges missing from $H$, i.e., $\kappa=|V(G)\setminus V(H)|+|E(G)\setminus E(H)|$.

\subparagraph{Basic Tools.}
We 
introduce a set of redrawing operations that will be used as basic tools in several 
proofs. It is worth noting that similar operations as the ones introduced here, which are based on shortening or prolonging sets of parallel edges in orthogonal drawings, are well known (see, e.g.,~\cite{BLPS13}). However, in our specific setting we have parts of the drawing that are given and cannot be modified, and handling this requires additional care in our arguments.

A \emph{feature point} of an orthogonal drawing is a point representing either a vertex or a bend of an edge. 
An \emph{edge-segment} of an orthogonal drawing is a segment that belongs to a polyline representing an edge. 
Two orthogonal drawings $\Gamma(G)$ and $\Gamma'(G)$ of a planar graph $G$ are \emph{\sequiv} if one can be obtained from the other by only shortening or lengthening some edge-segments. \cref{fig:operations} shows an example of two \sequiv drawings; in particular, the one on the right can be obtained from the one on the left by suitably shortening the blue (thicker) edge-segments. (We note that in the literature on orthogonal drawings, this is equivalent to saying that $\Gamma(G)$ and $\Gamma'(G)$ have the same \emph{shape} but two different \emph{metrics}.) 

\begin{figure}[t]
    \centering
    \begin{minipage}[t]{0.45\textwidth}
     \centering
        \includegraphics[page=1,width=\textwidth]{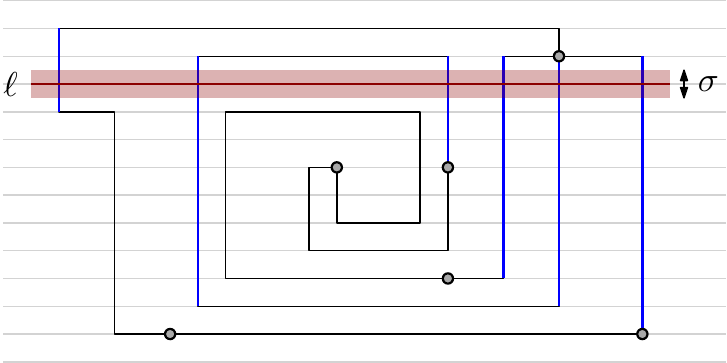}
    \end{minipage}\hfill
    \begin{minipage}[t]{0.45\textwidth}
     \centering
        \includegraphics[page=2,width=\textwidth]{figs/operations.pdf}
    \end{minipage}
    \caption{Two \sequiv orthogonal drawings such that the one on the right is obtained from the one on the left by applying a $(\sigma,\ell)$-strip removal operation.}
    \label{fig:operations}
\end{figure}

Let $\Gamma(G)$ be an orthogonal drawing of a graph $G$. Let $\ell$ be a horizontal (vertical) line that contains no feature points of $\Gamma(G)$ but intersects a set $S$ of vertical (horizontal) edge-segments. Let $l$ be the shortest distance between the endpoints of the segments in $S$ and $\ell$. For any $0 < \sigma < l$, a \emph{$(\sigma,\ell)$-strip removal operation} consists of decreasing the $y$-coordinates ($x$-coordinates) of all feature points above  (to the right of) $\ell$ by $\sigma$. Analogously, for any $\sigma > 0$, a \emph{$(\sigma,\ell)$-strip addition operation} consists of increasing the $y$-coordinates ($x$-coordinates) of all feature points above (to the right of) $\ell$ by $\sigma$. See \cref{fig:operations} for an illustration of a $(\sigma,\ell)$-strip removal operation. The following property readily follows.

\begin{property}\label{prop:sigma}
Let $\Gamma(G)$ and $\Gamma'(G)$ be two orthogonal drawings  such that $\Gamma'(G)$ is obtained from $\Gamma(G)$ by applying a $(\sigma,\ell)$-strip removal or addition operation. Then $\Gamma(G)$ and $\Gamma'(G)$ are \sequiv.
\end{property}

\begin{figure}[t]
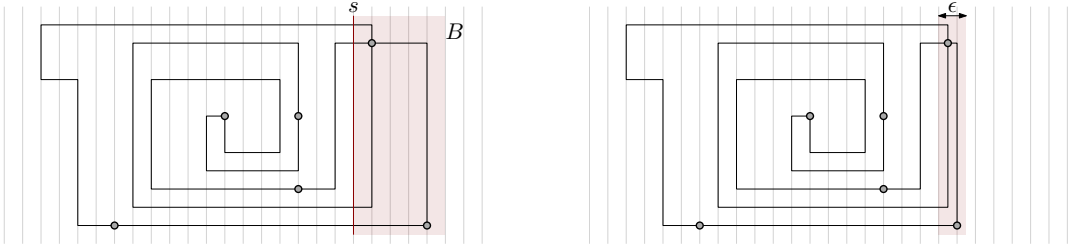

    \centering
    \begin{minipage}[t]{0.45\textwidth}
     \centering
        \includegraphics[page=3,width=\textwidth]{figs/operations.pdf}
    \end{minipage}\hfill
    \begin{minipage}[t]{0.45\textwidth}
     \centering
        \includegraphics[page=4,width=\textwidth]{figs/operations.pdf}
    \end{minipage}
    \caption{Two \sequiv orthogonal drawings such that the one on the right is obtained from the one on the left by applying \cref{le:box} with the v-selection $B$.}
    \label{fig:operations-2}
\end{figure}

Let $B$ be a rectangle that intersects $\Gamma(G)$ such that only one side $s$ of $B$ is crossed by edges of $G$. We call $B$ a \emph{v-selection} if $s$ is vertical or a \emph{h-selection} otherwise. Also, the subdrawing of $\Gamma(G)$ inside $B$  is called the \emph{$B$-selected drawing}; see \cref{fig:operations-2} for an illustration of a v-selection and of the next lemma (whose proof easily follows from \cref{prop:sigma}). 

\begin{lemma}\label{le:box}
Let $\Gamma(G)$ be an orthogonal drawing and let $B$ be a $v$-selection ($h$-selection) of $\Gamma(G)$. For any $\epsilon>0$, there is a drawing $\Gamma'(G)$ that is \sequiv to $\Gamma(G)$ and such that the $B$-selected drawing has width (height) at most $\epsilon$ and height (width) equal as in $\Gamma(G)$.
\end{lemma}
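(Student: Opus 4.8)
The plan is to reduce the statement of \cref{le:box} to iterated applications of \cref{prop:sigma}. Fix a $v$-selection $B$ of $\Gamma(G)$, so that only the vertical side $s$ of $B$ is crossed by edges of $G$, and all other crossings of $\partial B$ with $\Gamma(G)$ do not occur. The key observation is that the $B$-selected drawing consists of a collection of horizontal edge-segments entering through $s$, together with whatever bends and vertices lie strictly inside $B$; since only one side is crossed, no edge-segment exits $B$ through the top, bottom, or the side opposite to $s$. I would first argue that, because $B$ contains only finitely many feature points in its interior, there are only finitely many distinct $x$-coordinates among them, and similarly finitely many distinct $x$-coordinates among the vertical edge-segments inside $B$.

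The core of the argument is a sweep from the side $s$ inward (say $s$ is the left side of $B$, so we sweep in the direction of decreasing distance to $s$). Between any two consecutive relevant $x$-coordinates inside $B$ we can place a vertical line $\ell$ that contains no feature point of $\Gamma(G)$ and crosses some set of horizontal edge-segments; applying a $(\sigma,\ell)$-strip removal operation with suitably small $\sigma$ shifts everything to the right of $\ell$ leftward, shrinking the horizontal extent of $B$ on that strip without altering the shape, by \cref{prop:sigma}. Iterating this over all the $O(1)$ gaps between consecutive $x$-coordinates of feature points inside $B$, and choosing each $\sigma$ small enough that no two feature points collide and no new crossings are introduced, we can compress the total width of the $B$-selected drawing below any prescribed $\epsilon > 0$. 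Crucially, because every strip line $\ell$ used here is vertical and internal to $B$, and because only the side $s$ of $B$ is crossed by edges, none of these operations changes any $y$-coordinate; hence the height of the $B$-selected drawing is preserved exactly, and the part of $\Gamma(G)$ outside $B$ is unaffected up to being rigidly translated, which keeps the overall drawing \sequiv to $\Gamma(G)$. The $h$-selection case is symmetric, exchanging the roles of horizontal and vertical.

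I would take care to handle the boundary bookkeeping: the lines $\ell$ must be chosen generically (avoiding the finitely many feature-point coordinates, including those outside $B$), and each $\sigma$ must be strictly smaller than the distance $l$ guaranteed in the definition of the strip removal operation so that \cref{prop:sigma} applies verbatim. One also has to note that after compression, the edges crossing $s$ remain horizontal segments of positive length reaching $s$, so $B$ is still a valid $v$-selection of the new drawing; this is immediate since the only coordinates changed are $x$-coordinates of points to the right of each internal line $\ell$, and $s$ itself is the left boundary.

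The main obstacle is purely organizational rather than deep: one must make precise that the strip removal operations inside $B$, which by definition act on \emph{all} feature points to the right of $\ell$ (including those outside $B$ and to the right of $B$), do not disturb the relative shape of the rest of the drawing — but this is exactly what \cref{prop:sigma} gives, since shape-equivalence is preserved by composition. A secondary subtlety is ensuring the induction is finite: this follows because the number of distinct internal $x$-coordinates of feature points in $B$ is a fixed finite number, so only that many strip removals are needed regardless of $\epsilon$, with the magnitudes $\sigma$ scaled down to achieve the desired width bound.
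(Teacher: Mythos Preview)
Your proposal is correct and follows essentially the same approach as the paper: both arguments enumerate the finitely many distinct $x$-coordinates of feature points inside $B$, place a vertical line $\ell$ in each gap between consecutive such coordinates, and apply a $(\sigma,\ell)$-strip removal (invoking \cref{prop:sigma}) to compress each gap, iterating until the total width is at most $\epsilon$. The paper is slightly more explicit in two places---it treats the intersection points of edges with $s$ as dummy feature points, and it bounds the number of columns by $n+k$ to fix a uniform target gap of $\epsilon/(n+k)$---but these are bookkeeping details you have effectively covered.
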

\begin{proof}
We prove the case in which $B$ is a v-selection, as the case in which it is a h-selection is symmetric. Let $s$ be the side of $B$ crossed by edges of $\Gamma(G)$. In what follows, we view the intersection points along $s$, if any, as (dummy) feature points. A \emph{column} of $\Gamma(G)$ is a maximal set of feature points with the same $x$-coordinate. If $n$ and $k$ are the number of vertices and bends in $\Gamma(G)$, then there exist at most $n+k$ distinct columns. Order the columns within $B$ from the one containing the feature points closest to $s$ towards the one whose feature points are farthest away from $s$. Let $C$ and $C'$ be any two consecutive columns such that their horizontal distance is $w > \frac{\epsilon}{n+k}$. Observe that, between $C$ and $C'$, there is a vertical line $\ell$ that intersects no feature points but only a (possibly empty) set of horizontal edge-segments. Hence, we let $\sigma = \frac{\epsilon}{n+k}$ and apply a $(\sigma,\ell)$-strip removal operation. By repeating this procedure exhaustively we obtain the desired output $\Gamma'(G)$.
\end{proof}

\section{Initial Branching}
\label{sec:red}

In this section, we make the first step towards the fixed-parameter tractability of \TBOE\ by applying an initial branching step to simplify the problem---notably, this will allow us to focus on only extending the drawing inside a single face of $H$, and to assume that $H$ is an induced subgraph of $G$. 

We begin by introducing some additional notation that will be useful throughout the paper. 
Let $\langle (G, H, \Gamma(H)), \beta\rangle$  be an instance of \TBOE. A vertex $w\in V(H)$ is called an \emph{anchor} if it is incident to an edge in $E_X$. 
For a missing edge $vw\in E_X$ incident to a vertex $v\in V(H)$, we will use ``ports'' to specify a direction that $vw$ could potentially use to reach $v$ in an extension of $\Gamma(H)$; we denote these directions as $d$ which is an element from $\{\downarrow$ (north), $\uparrow$ (south), $\leftarrow$ (east), $\rightarrow$ (west)$\}$. Formally, a \emph{port candidate} for $vw\in E_X$ and $v\in V(H)$ is a pair $(v,d)$. A \emph{port-function} is an ordered set of port candidates which contains precisely one port candidate for each $vw\in E_X, v\in V(H)$, ordered lexicographically by $v$ and then by $w$.

We can now formalize the target problem that we will obtain from \TBOE\ via our exhaustive branching, which will be the focus of our considerations in Sections~\ref{sec:prep}-\ref{sec:tw}.
 
 \medskip\noindent\fbox{%
  \parbox{0.95\textwidth}{
  \textsc{Bend-Minimal Orthogonal Extension on a Face (\FTBOE)}\\
  \textbf{Input:} A planar graph $G_f$; an induced subgraph $H_f$ of $G_f$ with $k=|X_f|$, where $X_f=V(G_f)\setminus V(H_f)$; a drawing $\Gamma(H_f)$ of $H_f$ consisting of a single inner face $f$; a port-function $\mathcal{P}$.\\
  \textbf{Task:} Compute the minimum $\beta$ for which a $\beta$-extension of $\Gamma(H_f)$ exists and such that (1) missing edges and vertices are only drawn in the face $f$ and (2) each edge $xa\in E_X$ where $a\in V(H)$ connects to $a$ via its port candidate defined by $\mathcal{P}$, or determine that no such extension exists.
  }}\medskip

For the Turing reduction formalized in the next lemma, it will be useful to recall the definition of \TBOE\ and $\kappa$ from Section~\ref{se:preliminaries}.

\begin{lemma}\label{lem:reduction}
There is an algorithm that solves an instance $\mathcal I$ of \TBOE\ in time $3^{\bigoh(\kappa)} \cdot T(|I|,k)$, where $T(a,b)$ is the time required to solve an instance of \FTBOE\ with instance size $a$ and parameter value $b$.
\end{lemma}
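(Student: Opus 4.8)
The plan is to design an exhaustive branching procedure that, given an instance $\mathcal I = \langle (G,H,\Gamma(H)),\beta\rangle$ of \TBOE, produces $3^{\bigoh(\kappa)}$ instances of \FTBOE\ such that $\mathcal I$ is a yes-instance if and only if at least one of the produced \FTBOE\ instances admits a $\beta$-extension (equivalently, has optimal value at most $\beta$). The branching must fix all the combinatorial choices that an \FTBOE\ instance takes as additional input but that a \TBOE\ instance leaves open: (i) which face of $\Gamma(H)$ hosts the missing part, (ii) how the missing vertices and edges are attached to the anchors (i.e.\ the port-function), and (iii) enough topological information so that the residual problem on a single face is self-contained and so that we may assume $H_f$ is induced. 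The key observation making this possible is that $|X| \le \kappa$ and $|E_X|\le\kappa$, so each of these choices ranges over a set of size bounded by a function of $\kappa$, and in fact by $3^{\bigoh(\kappa)}$ after we are slightly careful.

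Concretely, I would carry out the following steps. \textbf{Step 1 (choosing the face).} Since $H$ is connected and $\Gamma(H)$ is a planar orthogonal drawing, its faces are well defined; because the missing edges and vertices form a connected addition attached to $H$ (or, if there are several components, we may still argue component-wise), all of $X$ and $E_X$ must be drawn inside a single face $f$ of $\Gamma(H)$. There are at most $|E(H)|+2$ faces, but we cannot afford to branch over all of them — instead we observe that $f$ must be incident to every anchor, and since there is at least one anchor (if $E_X=\emptyset$ the problem is trivial) and each anchor has degree at most $4$, it is incident to at most $4$ faces; hence there are at most $4$ candidate faces, and we branch over them. \textbf{Step 2 (port-function).} For each missing edge $vw\in E_X$ with $v\in V(H)$, the edge must leave $v$ through one of the at most $4$ ports of $v$; but $v$ already has $\deg_H(v)$ edges of $H$ attached, so only $4-\deg_H(v)$ ports are free, and in any case there are at most $4$ choices per such edge-endpoint incidence. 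Since $\sum_{vw\in E_X, v\in V(H)} 1 \le 2|E_X| \le 2\kappa$, branching over all port-functions costs $4^{2\kappa} = 2^{\bigoh(\kappa)}$. \textbf{Step 3 (making $H$ induced).} If $G$ has an edge $uv$ with $u,v\in V(H)$ but $uv\notin E(H)$, then $uv\in E_X$ and it too must be drawn in $f$; we guess the ``side'' / homotopy class in which this chord is routed relative to the other missing features, or more simply we reduce to the induced case by the standard trick of subdividing such edges or by absorbing them: here I would argue that after fixing the face and ports, any such $H$-$H$ missing edge can be handled by also guessing, for each of the at most $\kappa$ such edges, a constant amount of routing information, which again multiplies the count by $2^{\bigoh(\kappa)}$; this lets us pass to an instance where $H_f$ is induced in $G_f$. \textbf{Step 4 (assembling the \FTBOE\ instance).} For a fixed face $f$, we set $G_f$ and $H_f$ to be the graphs induced by $V(H)$-vertices on the boundary of $f$ together with $X$ (plus the guessed chords), take $\Gamma(H_f)$ to be the restriction of $\Gamma(H)$ to (the boundary of) $f$ redrawn as a single inner face, and feed the guessed port-function $\mathcal P$; if $f$ was the outer face we temporarily treat it as an inner face, which is legitimate because \FTBOE\ as stated requires an inner face and Section~\ref{sec:prep} later shows the outer-face case reduces to it. We then call the \FTBOE\ oracle, obtaining the minimum number of bends usable in $f$, and we return ``yes'' iff some branch yields a value $\le \beta$. \textbf{Step 5 (accounting).} Multiplying the branching factors from Steps 1--3 gives $4 \cdot 4^{2\kappa}\cdot 2^{\bigoh(\kappa)} = 3^{\bigoh(\kappa)}$ branches (the constants are absorbed into the $\bigoh(\kappa)$ exponent since $4^{2\kappa}=2^{4\kappa}=3^{\bigoh(\kappa)}$), and each branch does polynomial work plus one \FTBOE\ call on an instance of size $\bigoh(|I|)$ and parameter $k=|X_f|\le\kappa$, giving total time $3^{\bigoh(\kappa)}\cdot T(|I|,k)$ as claimed; correctness follows because the branching is exhaustive over all combinatorial data that any valid $\beta$-extension must realize, and conversely any solution to an \FTBOE\ branch, glued back into $\Gamma(H)$, is a valid $\beta$-extension of $\Gamma(H)$.

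The main obstacle I expect is \textbf{Step 3}: cleanly reducing to the case that $H_f$ is an induced subgraph of $G_f$ while keeping the branching factor at $3^{\bigoh(\kappa)}$ and without losing bend-optimality. A missing edge both of whose endpoints lie in $H$ is genuinely a "chord" of the face $f$, and its routing can interact with the routing of the truly new vertices; one has to guess just enough about its homotopy type inside $f$ (for instance, which of the sub-faces created by the other missing features it lives in, or — more robustly — a cyclic ordering refinement of the boundary-incidences) so that the residual instance decomposes, while arguing that this guessed information is bounded by a function of $\kappa$. A secondary subtlety is making sure, in Steps 1 and 4, that "the missing part lies in a single face" is actually forced — this uses connectivity of $H$ together with the fact that $\Gamma(G)$ restricted to $H$ equals $\Gamma(H)$, so every new vertex sits in exactly one face of $\Gamma(H)$ and every new edge, being a curve, stays within the union of faces it visits, but since it may not cross $\Gamma(H)$ it actually stays within one face; if $X=\emptyset$ and only $H$-$H$ edges are missing one still argues they all lie in one face incident to an anchor, or else the instance splits into independent per-face subproblems that one can solve and recombine. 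I would handle that edge case explicitly rather than sweep it under the rug.
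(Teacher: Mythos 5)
Your overall strategy --- exhaustive branching over the hosting face(s), the port-function, and the routing of the $H$--$H$ chords, followed by one \FTBOE\ call per branch and a final recombination against $\beta$ --- is the same as the paper's, and Steps 1, 2, 4 and 5 are essentially sound. (The paper's face-counting observation is slightly sharper: deleting a missing vertex $v$ with one or two neighbours in $H$ leaves at most three faces containing all of them on their boundary, and at most two if $v$ has three or four neighbours; but your constant-per-component bound serves the same purpose, and you correctly note that the face must be guessed per connected component of the missing part rather than once globally.)

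The one genuine gap is exactly the one you flag in Step 3: you never actually carry out the reduction to the case where $H_f$ is induced, and the alternatives you sketch (guessing a homotopy class or a cyclic-ordering refinement of the boundary incidences) are both harder to bound and unnecessary. The resolution is purely local: for each missing edge $ab$ with $a,b\in V(H)$, branch on whether $ab$ is drawn with zero bends or with at least one bend. In the zero-bend case the drawing of $ab$ is a uniquely determined axis-parallel segment, so one simply tests whether inserting that segment into $\Gamma(H)$ preserves planarity and discards the branch otherwise. In the at-least-one-bend case one subdivides $ab$ once, declares the subdivision vertex $u_{ab}$ a new missing vertex (it will sit at a bend of $ab$), and decrements the target $\beta$ by one; under the assumption of that branch the new instance is equivalent, $H$ becomes induced, and $u_{ab}$ is then fed into the same face-branching as every other missing vertex. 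The $\beta$-decrement is essential --- without it the subdivision destroys bend-optimality, since $u_{ab}$ absorbs one bend of $ab$ --- and the accumulated offset must be added back when comparing the summed \FTBOE\ outputs against $\beta$ in your Step 5. With this piece filled in, your argument coincides with the paper's proof.
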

\begin{proof}
Consider an instance $\langle (G, H, \Gamma(H)), \beta\rangle$ of \TBOE, and recall that $G$ is a planar graph of maximum vertex degree four. In addition, the graph $G$ may be assumed to be connected; if it is not, each connected component may be solved entirely independently. We begin with a simple observation. Let $\Gamma(G)$ be a planar drawing of $G$ and let $v$ be a vertex of $G$. Let $\Gamma_v(G)$ be the drawing obtained from $\Gamma(G)$ by removing $v$ and its incident edges. If $v$ has either one or two neighbors in $G$, then there are at most three faces of $\Gamma_v(G)$ that contain all neighbors of $v$ on each of their boundaries (because each neighbor of $v$ has degree at most three in $\Gamma_v(G)$). If $v$ has three or four neighbors, then by planarity there exists at most two faces containing all neighbors of $v$ on its boundary (in fact only one unless $\Gamma_v(G)$ is a cycle).

Based on this observation, for each missing vertex $x\in X$ adjacent to some vertex $y\in V(H)$, we branch to determine one of the at most three faces of $\Gamma(H)$ whose boundaries contain the vertices in $V(H)$ that are adjacent to $x$. By planarity, these are the only faces of $\Gamma(H)$ in which $x$ can be inserted in order to obtain a $\beta$-extension. In each branch, we obtain a specific face $f$ where $x$ is assumed to lie, and this in turn implies that all vertices in $X$ that admit a path to $x$ consisting solely of missing vertices and edges must also be placed in $f$. Since $G$ is connected, this implies that after at most $\kappa$ branching steps we will have assigned each missing vertex to some face of $\Gamma(H)$. 

Turning now to edges, each missing edge $ab\in E_X$ with at least one endpoint not in $V(H)$, say $a\in X$, must lie in the same face as $x$. To deal with edges $ab\not \in E(H)$ where $a,b\in V(H)$, we apply an additional branching step to determine whether they are drawn with no bends or with at least one bend. In the former case, we simply check if it is possible to extend the drawing $\Gamma(H)$ by a straight-line drawing of that edge (and if not, we discard the branch). In the latter case, we alter $G$ by subdividing the edge once and adding the newly created vertex $u_{ab}$ to $V$ while reducing the target value of $\beta$ by $1$. It is easy to observe that under the assumption of $ab$ being drawn with at least one bend, the resulting instance of \TBOE\ is equivalent to the original one. At that point, we naturally repeat the branching step for vertices in $X$ described in the previous paragraph to determine a face for $u_{ab}$. Crucially, this results in $H$ being an induced connected subgraph of $G$ while reducing the target value of $\beta$ by some fixed offset.

Finally, for each missing edge $vx$ where $v\in V(H)$ and $x\in V_X$ we exhaustively branch over all port candidates $p(v,d)$ for $d\in \directions$, restricting the choices to only those ports that are incident to the face we previously guessed for $x$. Note that there are at most three such ports for each missing edge, since $H$ is connected and therefore at least one port per vertex is already used by an edge of $H$. This procedure yields $3^{\bigoh(\kappa)}$ instances of \FTBOE, each of which contains $k\leq \kappa$ many missing vertices and $n_f\leq n$ many vertices in total. In particular, the output of \TBOE is true if and only if at least one branch leads to a set of \FTBOE\ instances such that if we sum up the integers returned by solving these instances plus the offset defined in the previous paragraph, we obtain a number that is at most $\beta$.
\end{proof}
  
\noindent We note that the marked face $f$ can be either the single inner face of $\Gamma(H_f)$ or the outer face. On a different note,  while \TBOE\ was stated as a decision problem for complexity-theoretic purposes, the output for \FTBOE is either an integer or ``No''. Two instances of \FTBOE are said to be \emph{equivalent} if their outputs are the same.
Note that checking whether an instance of \FTBOE\ admits some $\beta$-extension can be done in polynomial time by using the algorithm in~\cite{AngeliniRS21}. The pre-drawn graph given as input to the algorithm in~\cite{AngeliniRS21} will be $\Gamma(H_f)$ with a slight modification: if a vertex $v$ makes an angle larger than $\frac{\pi}{2}$  in the non-marked face $g$ of $\Gamma(H_f)$, then we add dummy vertices and connect them to $v$ until all angles around $v$ in $g$ are $\frac{\pi}{2}$. This guarantees that a solution only draws missing vertices inside the marked face $f$ (and not in $g$). Hence, we will assume to be dealing with instances where such an extension exists, and the task is to identify the minimum value of $\beta$. We will call a $\beta$-extension minimizing the value of $\beta$ a \emph{solution}.

\section{Preprocessing}
\label{sec:prep}

We can now focus on solving an instance of \FTBOE\ with only a single marked face $f$ being of interest. The aim of this section is to make the first two steps that will allow us to solve \FTBOE. This includes pruning out certain parts of the face which are provably irrelevant, and reducing the case of $f$ being the outer face to the case of $f$ being an inner face.

\subsection{Pruning}

Let $\Gamma(G)$ be an orthogonal drawing of a graph $G$ and let $f$ be a face of $\Gamma(G)$. A \emph{reflex corner} $p$ of $f$ is a feature point that makes an angle larger than $\pi$ inside $f$. Also, if $p$ is an anchor, then it is called an \emph{essential} reflex corner. A projection $\ell$ of a reflex corner $p$ is a horizontal or vertical line-segment in the interior of $f$ that starts at $p$ and ends at its first intersection with the boundary of $f$. \cref{fig:clean} (left) shows two projections $\ell_1$ and $\ell_2$ of a reflex corner $p$. 

Observe that each projection $\ell$ of a reflex corner $p$ divides the face $f$ into two connected regions, which are themselves orthogonal polygons. If $p$ is not essential and one of the two regions contains no reflex corners of its own (notice that inside this region, $p$ needs no longer be a reflex corner) and no anchors, we call the region \emph{redundant}. Our aim will be to show that such regions can be safely removed from the instance. More formally, recall that $\ell$ intersects the boundary of $f$ in $p$ on one side and in an element $e$ that is either a vertex $u$ or a point $q$ on an edge of $H$ on the other side of $f$. The \emph{pruning operation} at $\ell$ for a redundant region $\iota$ works as follows.  (1) If both $p$ and $e$ are vertices (which are therefore vertically or horizontally aligned) we add the edge $pu$ into $H$, whose representation in $\Gamma(H)$ is $\ell$. (2) If $p$ is a vertex and $e$ is an edge, we modify $H$ by replacing $q$ with a dummy vertex $v_q$ that subdivides $e$ and by adding the edge $pv_q$ (whose representation in $\Gamma(H)$ is $\ell$). (3) If $p$ is part of an edge $e'$ and $e$ is also an edge, we modify $H$ by replacing $p$ and $q$ with two dummy vertices $v_p$ and $v_q$ that subdivide $e'$ and $e$ and by adding the edge $v_pv_q$ (whose representation in $\Gamma(H)$ is $\ell$). We finally remove the  boundary of $\iota$ from $H$ and $\Gamma(H)$, except for the edge-segment $\ell$ and its end-vertices.

\begin{figure}[t]
    \centering
    \begin{minipage}[t]{0.31\textwidth}
    \centering
        \includegraphics[page=1,width=\textwidth]{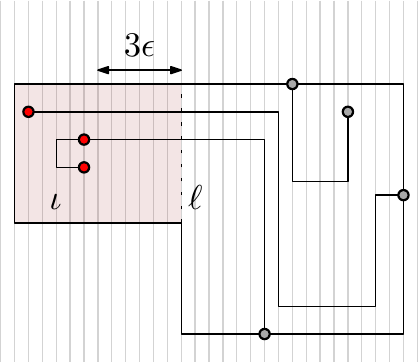}
    \end{minipage}\hfill
    \begin{minipage}[t]{0.31\textwidth}
    \centering
        \includegraphics[page=2,width=\textwidth]{figs/pruning.pdf}
    \end{minipage}\hfill
    \begin{minipage}[t]{0.31\textwidth}
    \centering
        \includegraphics[page=3,width=\textwidth]{figs/pruning.pdf}
    \end{minipage}
    \caption{Illustration for the proof of \cref{lem:singlecut}.}
    \label{fig:pruning}
\end{figure}
\begin{lemma}
\label{lem:singlecut}
Let $\mathcal{I}=\langle G_f, H_f, \Gamma(H_f), f, \mathcal{P} \rangle$ be an instance of \textsc{F-TBOE}. Let $\ell$ be a projection of some non-essential reflex corner in $f$, which gives rise to a redundant region $\iota$. Then pruning $\iota$ at $\ell$ results in an instance $\mathcal{I}^\iota$ that is equivalent to $\mathcal{I}$.
\end{lemma}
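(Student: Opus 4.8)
The plan is to establish equivalence by showing both directions of the implication: every $\beta$-extension of $\mathcal{I}$ can be turned into a $\beta$-extension of $\mathcal{I}^\iota$ with the same bend count, and vice versa. The key observation driving both directions is that the redundant region $\iota$ contains no anchors and (inside $\iota$) no reflex corners, so no missing vertex or missing edge of a solution is forced to enter $\iota$; intuitively, $\iota$ is ``drawing-wise inert'' and can be treated as a thin strip. First I would make this precise: since $\iota$ contains no anchor, every port candidate in $\mathcal{P}$ lies on the part of $\partial f$ that survives the pruning (this is exactly what rules out the essential case and why we demand no anchors in $\iota$), so $\mathcal{P}$ is still a valid port-function for $\mathcal{I}^\iota$, and the face $f^\iota$ of $\Gamma(H_f^\iota)$ obtained after pruning is still a single inner face bounded by $\ell$ together with the non-$\iota$ portion of $\partial f$. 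I would also check that the three cases (1)--(3) of the pruning operation each produce a legitimate orthogonal drawing of an induced subgraph: subdividing edges of $H$ with dummy vertices keeps the drawing orthogonal, and adding $\ell$ as the drawing of the new edge is fine because $\ell$ is an axis-parallel segment in the interior of $f$ meeting $\partial f$ only at its two endpoints, hence crosses nothing.

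For the direction $\mathcal{I} \Rightarrow \mathcal{I}^\iota$: take a solution $\Gamma(G_f)$ of $\mathcal{I}$ with $\beta$ bends on $E_X$. Since $\iota$ is redundant, no missing vertex lies in the interior of $\iota$, because a missing vertex in $\iota$ would only connect (via missing edges drawn in $f$) to anchors or other missing vertices, and since $G_f$ is connected some such path would have to exit $\iota$ through $\partial \iota \setminus \ell$, which is part of the fixed boundary of $f$ and cannot be crossed — contradiction. However, a missing \emph{edge} could still route through $\iota$, so I would use \cref{le:box} to ``squeeze'' $\iota$: observe that $\ell$ is the unique side of (a rectangle enclosing) $\iota$ through which edges of the solution cross into $\iota$, so $\iota$ behaves like a $v$-selection or $h$-selection, and by \cref{le:box} we can shape-equivalently redraw so that the part of the solution inside $\iota$ has width (resp.\ height) at most $\epsilon$ without changing the shape, hence without changing the number of bends. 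Then I would argue that any missing edge entering $\iota$ through $\ell$ and coming back out must re-cross $\ell$ and can be short-cut: replacing the portion of the edge inside $\iota$ by a sub-segment of $\ell$ removes that detour without increasing (in fact only possibly decreasing) bends, because the entry and exit points on $\ell$ are connected by a straight sub-segment of the axis-parallel line containing $\ell$. After this cleanup, nothing from $E_X$ lies in $\iota$, so deleting the interior of $\iota$ and applying the pruning operation yields a valid $\beta$-extension of $\mathcal{I}^\iota$ (the new $H$-edge along $\ell$ and the dummy subdivision vertices exactly match $\Gamma(H_f^\iota)$).

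For the reverse direction $\mathcal{I}^\iota \Rightarrow \mathcal{I}$: given a solution of $\mathcal{I}^\iota$ drawn in $f^\iota$, I reinstate $\iota$. The boundary $\partial\iota\setminus\ell$ together with the identification of $\ell$ glues back the orthogonal polygon $\iota$ onto $f^\iota$ along the segment $\ell$; since the pruning only deleted vertices/edges of $H$ that lay on $\partial\iota\setminus\ell$ and replaced $\ell$ by a single edge (or edge plus dummy vertices), restoring them and removing that single edge gives back exactly $\Gamma(H_f)$ and $G_f$. Using \cref{le:box} again (or \cref{prop:sigma}) we may first ensure the solution leaves enough room near $\ell$ so that $\iota$ can be re-inserted as a thin region without overlapping anything; then the restored drawing is a planar orthogonal $\beta$-extension of $\Gamma(H_f)$ using the same port candidates, since we added no bends. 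The main obstacle I anticipate is the first direction — specifically, arguing cleanly that missing edges passing through $\iota$ can be rerouted out of it without increasing the bend count, handling the interaction with other missing edges that might also traverse $\iota$ and the bookkeeping of the dummy-vertex subdivisions in cases (2) and (3); this is where \cref{le:box} and \cref{prop:sigma} do the real work, and where the hypotheses ``no reflex corners in $\iota$'' and ``no anchors in $\iota$'' are used to guarantee that such rerouting is always geometrically possible.
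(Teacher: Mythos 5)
There is a genuine gap in the hard direction ($\mathcal{I}\Rightarrow\mathcal{I}^\iota$). Your argument that no missing vertex can lie in the interior of $\iota$ is incorrect: you claim a path from such a vertex to an anchor ``would have to exit $\iota$ through $\partial\iota\setminus\ell$'', but $\ell$ is a projection segment lying in the \emph{interior} of $f$ in the original instance, so missing edges may freely cross it, and a solution may perfectly well place a missing vertex inside $\iota$ with its incident edges crossing $\ell$ to reach anchors outside. Once vertices are allowed inside $\iota$, your subsequent cleanup --- shortcutting each edge's excursion into $\iota$ by a sub-segment parallel to $\ell$ --- no longer suffices, and even for pure edge excursions it glosses over the planarity bookkeeping when several nested excursions must be rerouted into parallel tracks next to $\ell$, as well as the fact that this is really a handle-removal argument of the kind the paper only develops later (\cref{le:handles}).

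The paper's proof avoids all of this with one extra move that you are missing. Like you, it encloses the entire sub-drawing $\Gamma^\iota$ of the solution inside $\iota$ in a rectangle $B$ with one side on $\ell$, notes that $B$ is a v-selection (only $\ell$ is crossed, since $\partial\iota\setminus\ell$ is fixed boundary of $f$ with no anchors), and applies \cref{le:box} to squeeze $\Gamma^\iota$ to width $\epsilon$. But then, instead of shortcutting, it \emph{translates} the squeezed $\Gamma^\iota$ --- vertices, bends and all --- by $2\epsilon$ across $\ell$ into the non-redundant region. This is shape-equivalent, so it preserves planarity and the bend count, and it is legal precisely because the reflex corner $p$ emitting $\ell$ is non-essential, so nothing in $\Gamma^\iota$ is attached to $p$ and the translation cannot detach an edge from its anchor. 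After the translation $\iota$ is empty and the pruning can be applied. Your proposal correctly identifies the v-selection/\cref{le:box} tool and the role of non-essentiality, but without the translation step (or a correct argument excluding vertices from $\iota$) the direction does not go through. The easy direction is fine, though your concern about ``making room'' to reinsert $\iota$ is unnecessary: the pruned face is geometrically $f\setminus\iota$, so a solution of $\mathcal{I}^\iota$ already avoids $\iota$ and undoing the pruning is immediate.
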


\begin{proof}
One direction is easy, since a solution of $\mathcal{I}^\iota$  can be readily transformed into a solution of $\mathcal{I}$ by undoing the pruning operation at $\ell$. Hence, consider a solution of $\mathcal{I}$. Let $\Gamma(G_f)$ be the orthogonal drawing of $G_f$ corresponding to the solution and let $\Gamma^\iota$ be the orthogonal drawing formed by the missing vertices and edge-segments of $\Gamma(G_f)$ in $\iota$.  

If $\Gamma^\iota$ is empty, then we can directly apply the pruning operation at $\ell$ and obtain a solution for $\mathcal{I}^\iota$. Thus, suppose that $\Gamma^\iota$ is not empty. 

We assume that $\ell$ is vertical, the case in which it is horizontal is analogous. Refer to \cref{fig:pruning} for an illustration. Let $\epsilon>0$ be a value such that any feature point has horizontal distance larger than $3\epsilon$ from $\ell$. Also, let $B$ be a rectangle such that: (a) one of its vertical sides $s$ is contained in $\ell$, (b) its opposite vertical side $s'$ is in $\iota$ and $s'$ is not intersected by any edge-segment, (c) it contains the whole drawing $\Gamma^\iota$. By construction, $B$ is a v-selection, and by \cref{le:box} we can compute a new \sequiv solution in which the $B$-selected drawing, namely $\Gamma^\iota$, has width $\epsilon$. 
Finally, since the reflex corner $p$ emanating $\ell$ is not essential, it cannot be connected to any vertex in $\Gamma_\iota$, thus we can translate $\Gamma_\iota$ by $2\epsilon$ such that it moves to the other side of $\ell$, i.e., it lies in the interior of the non-redundant region defined by $\ell$. The obtained orthogonal drawing has no feature point in $\iota$, and hence we are again in the position to apply the pruning operation at $\ell$ so to obtain a solution for $\mathcal{I}^\iota$.
\end{proof}

We can show that exhaustively applying \cref{lem:singlecut} results in an instance with the following property: each projection of each non-essential reflex corner in $f$ splits $f$ into two faces, each of which has at least one port on its boundary. We call such instances \emph{clean}; see \cref{fig:clean}.


\begin{figure}[t]
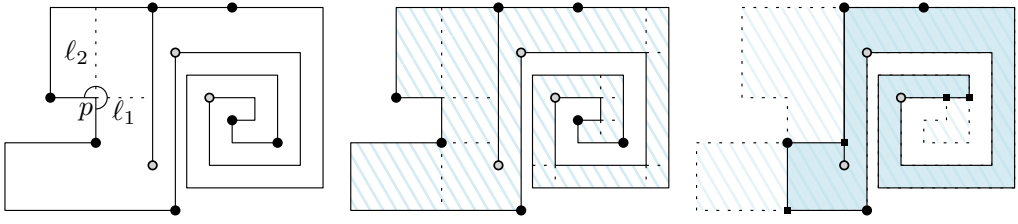

    \centering
    \begin{minipage}[t]{0.3\textwidth}
    \centering
        \includegraphics[page=4,width=\textwidth]{figs/pruning.pdf}
    \end{minipage}\hfil
    \begin{minipage}[t]{0.3\textwidth}
    \centering
        \includegraphics[page=5,width=\textwidth]{figs/pruning.pdf}
    \end{minipage}\hfil
    \begin{minipage}[t]{0.3\textwidth}
    \centering
        \includegraphics[page=6,width=\textwidth]{figs/pruning.pdf}
    \end{minipage}
    \caption{Left: A reflex corner $p$ and its projections $\ell_1$ and $\ell_2$. Middle: A face (striped) with all its non-essential reflex corners and projections (anchor vertices have a gray filling while non-anchors are solid). Right: The corresponding clean instance (dummy vertices are drawn as small squares).}
    \label{fig:clean}
\end{figure}

\begin{lemma}
\label{lem:globalcut}
There is a polynomial-time algorithm that takes as input an arbitrary instance of \textsc{F-TBOE} and outputs an equivalent instance which is clean.
\end{lemma}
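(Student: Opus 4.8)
The plan is to show that the pruning operation from \cref{lem:singlecut} can be applied exhaustively and terminates in polynomially many steps, yielding a clean instance. The key conceptual point is a \emph{monotonicity} observation: whenever we prune a redundant region $\iota$ at a projection $\ell$, the number of feature points on the boundary of $f$ strictly decreases (the boundary of $\iota$, except $\ell$ and its two endpoints, is deleted, and the deleted part is nonempty because a redundant region is bounded by at least one genuine segment of the original boundary). At the same time, pruning never creates new reflex corners of $f$: the only new boundary feature is the segment $\ell$ together with its (at most two) new dummy endpoints, and by the definition of the pruning operation these endpoints form convex or flat angles inside the reduced face. Hence each application strictly decreases a nonnegative integer potential (say, the number of feature points on $\partial f$), so the process halts after at most a linear number of rounds.

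Concretely, I would proceed as follows. First I would fix the candidate clean-ness condition: an instance is clean iff for every non-essential reflex corner $p$ of $f$ and every projection $\ell$ of $p$, both of the two orthogonal subpolygons into which $\ell$ splits $f$ contain at least one port on their boundary. Second, I would observe that an instance fails this condition exactly when some non-essential reflex corner has a projection $\ell$ one of whose two sides contains no port; I must then argue that this side is in fact a redundant region in the sense of \cref{lem:singlecut}, i.e.\ it additionally contains no other reflex corner and no anchor. The absence of anchors and of ports are the same thing after noting that an anchor is precisely a vertex of $H$ incident to a missing edge, hence carries a port of $\mathcal P$ on the boundary of $f$; so "no port on this side" already gives "no anchor on this side." For reflex corners I would pick, among all violating (non-essential reflex corner, projection) pairs, one whose empty side is \emph{inclusion-minimal}; minimality forces that empty side to contain no further reflex corner (any such corner would, via one of its projections, yield a strictly smaller empty side, or that corner would itself be essential — but essential corners are anchors, contradicting emptiness of ports). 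Thus a minimal violator is genuinely a redundant region, \cref{lem:singlecut} applies, and pruning it produces an equivalent instance with strictly smaller potential.

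Third, I would wrap this into the algorithm: repeatedly search (in polynomial time — there are polynomially many reflex corners, each with at most two projections, and testing port-emptiness of a side is a linear-time planar traversal) for a violating pair, take an inclusion-minimal one, and prune; stop when none exists. Correctness of the output follows because each step preserves equivalence by \cref{lem:singlecut}, and by transitivity the final instance is equivalent to the input. Termination and the polynomial bound follow from the strictly decreasing integer potential together with the polynomial per-round cost. I would also remark that subdivision dummy vertices introduced by pruning do not affect subsequent rounds: a dummy vertex has degree two and sits at a flat or convex angle in $f$, so it is never a reflex corner and never an anchor.

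The main obstacle I anticipate is the careful bookkeeping in the minimality argument — precisely ruling out that an inclusion-minimal empty side still contains a reflex corner (one must handle the case where such a corner's projection hits $\ell$ itself rather than the original boundary, and the case where the corner is essential) — together with verifying the claim that pruning never introduces a new reflex corner of $f$, which requires a small case analysis over the three cases (vertex–vertex, vertex–edge, edge–edge) in the definition of the pruning operation to confirm the angles at the newly added dummy endpoints of $\ell$ are at most $\pi$ inside the reduced face. Everything else (equivalence, polynomial running time) is a direct consequence of \cref{lem:singlecut} and standard planar-graph traversal bounds.
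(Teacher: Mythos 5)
Your overall plan---exhaustively apply the pruning operation of \cref{lem:singlecut} and certify termination via a strictly decreasing integer potential---is the same as the paper's, and your inclusion-minimality argument for locating a genuinely redundant region is a welcome addition (the paper's proof glosses over why a non-clean instance must contain a prunable region at all). The obstacle you flag in that argument is real but resolvable: a reflex corner $p'$ inside the port-free side $\iota$ always has at least one projection \emph{parallel} to $\ell$ (an angle-$\frac{3\pi}{2}$ corner has exactly one horizontal and one vertical projection, and an angle-$2\pi$ corner has both orientations as well), and a projection parallel to $\ell$ cannot cross $\ell$, so it stays inside $\overline{\iota}$ and cuts off a strictly smaller port-free region, contradicting minimality.

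The genuine gap is in your termination measure. You justify the strict decrease of the number of boundary feature points only by saying that the deleted part of $\partial\iota$ is nonempty, while simultaneously conceding that up to two new dummy endpoints of $\ell$ are created; a nonempty deleted arc could a priori contain no feature point at all, so by your own accounting the potential could increase. To repair this you need two further observations: (i) the path $P=\partial\iota\setminus\ell$ closes an orthogonal polygon together with the single segment $\ell$, hence consists of at least three segments and therefore contains at least two interior corners, each of which is a pre-existing feature point (a vertex or a bend) that gets deleted; and (ii) at most one of the two endpoints of $\ell$ is a genuinely new feature point, because in case (3) of the pruning operation the dummy $v_p$ merely replaces the bend $p$, which is already a feature point by the very definition of a reflex corner. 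With (i) and (ii) the net change per round is at most $-1$ and your argument goes through. For comparison, the paper sidesteps this bookkeeping by using a different potential, namely the total number of projections: pruning eliminates all projections of $p$ when $p$ has angle $\frac{3\pi}{2}$ (it ceases to be a reflex corner) and at least one of its three projections when $p$ has angle $2\pi$, and it introduces no new reflex corners, so that count drops by at least one per round. Either potential works once properly justified, and both yield the claimed polynomial bound.
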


\begin{proof}
Let $\mathcal{I}=\langle G_f, H_f, \Gamma(H_f), f, \mathcal{P} \rangle$ be an instance of \textsc{F-TBOE}. 
It suffices to observe that every time we apply a pruning operation the number of projections strictly decreases. Namely, if this claim holds, we can iteratively look for a projection in the current instance and apply a pruning operation at such projection, until our instance becomes clean. Since we have at most three projections for each reflex corner, the algorithm performs a number of pruning operations that is linear in the size of $\Gamma(G_f)$.

To prove the claim, let $\ell$ be a projection at a reflex corner $p$ in an orthogonal drawing $\Gamma(H_f)$. Let $\Gamma'(H_f)$ be the orthogonal drawing obtained by applying the pruning operation at $\ell$. Suppose first that $p$ makes an angle of $\frac{3\pi}{2}$ in the marked face $f$. Then in $\Gamma'(H_f)$, point $p$ in not a reflex corner anymore, as it makes angle that is either $\frac{\pi}{2}$ or $\pi$. Moreover, the pruning operation does not introduce any new reflex corner in $\Gamma'(H_f)$, hence $\Gamma'(H_f)$ has one projection less than $\Gamma(H_f)$. Consider now the remaining case in which $p$ forms an angle $2\pi$ in $f$. Then there exist three projections at $p$ in $f$, and $\Gamma'(H_f)$ contains at most two of them. Again the pruning operation does not introduce any new reflex corner in $\Gamma'(H_f)$, hence $\Gamma'(H_f)$ has at least one projection less than~$\Gamma(H_f)$.
\end{proof}

Given \cref{lem:globalcut}, we will hereinafter assume that our instances of \textsc{F-TBOE} are clean.

\subsection{Outer Face}

Given an instance of \FTBOE\ where the marked face $f$ is the outer face of $\Gamma(H_f)$, let us begin by constructing a rectangle that bounds $\Gamma(H_f)$ and will serve as a ``frame''~for~any~solution.

\begin{observation}\label{ob:frame}
Let $\mathcal{I}=\langle G_f, H_f, \Gamma(H_f), f, \mathcal{P} \rangle$ be an instance of \FTBOE\ and let $R$ be a rectangle that contains $\Gamma(H_f)$ in its interior. Then $\mathcal{I}$ admits a solution that lies in the interior of $R$.
\end{observation}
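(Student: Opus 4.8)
The plan is to take an arbitrary solution $\Gamma(G_f)$ of $\mathcal{I}$ — that is, a bend-minimal $\beta$-extension of $\Gamma(H_f)$ in which the missing vertices and edges lie in the outer face $f$ — and to transform it, without changing its shape (and hence without changing the number of bends), into a \sequiv\ drawing that is entirely contained in the interior of the prescribed rectangle $R$. The key observation is that $\Gamma(G_f)$ is a finite orthogonal drawing, so it is contained in some axis-parallel bounding box $R'$; if $R' \subseteq \operatorname{int}(R)$ we are already done, so assume otherwise. Since both $R$ and $R'$ are axis-parallel rectangles, the parts of $\Gamma(G_f)$ sticking out of $R$ do so through at most the four ``sides'' of $R$, and I will handle one direction at a time.

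The tool for this is \cref{le:box}. Consider, say, the part of $\Gamma(G_f)$ lying to the right of the right side of $R$. Pick a vertical line $\ell$ strictly between the right side of $R$ and the leftmost feature point of $\Gamma(G_f)$ that lies (weakly) to the right of $R$'s right side — such an $\ell$ exists since there are only finitely many feature points — and let $B$ be the rectangle bounded on the left by $\ell$ and large enough to contain all of $\Gamma(G_f)$ that lies to the right of $\ell$. By the choice of $\ell$, only edge-segments cross the left side of $B$ and no feature points lie on that side, so $B$ is a v-selection. Applying \cref{le:box} with a sufficiently small $\epsilon$ (namely $\epsilon$ smaller than the horizontal gap between $\ell$ and the right side of $R$) yields a \sequiv\ drawing in which the $B$-selected part has width at most $\epsilon$, hence the entire drawing now lies to the left of the right side of $R$; crucially the height is unchanged and nothing on the left of $\ell$ moves, so progress in the other three directions is preserved. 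Repeating this for the left, top, and bottom sides (using h-selections for the horizontal ones) brings the whole drawing strictly inside $R$. Since every step is a shape-equivalence, the final drawing has exactly as many bends on the edges of $E_X$ as $\Gamma(G_f)$, so it is still a solution, and by construction it lies in $\operatorname{int}(R)$.

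A minor subtlety to address is that $\Gamma(H_f)$ itself is contained in $\operatorname{int}(R)$ by hypothesis and must not be moved; this is automatic because the feature points of $H_f$ are already strictly inside $R$, so they lie on the $\ell$-side that \cref{le:box} leaves fixed in each of the four applications. One should also note that the $B$-selected subdrawing may contain both missing and pre-drawn material in principle, but since the pre-drawn part is inside $R$ and our lines $\ell$ are chosen outside $R$, each $B$ actually contains only missing edge-segments (and possibly missing vertices), so the operation stays within the parts of the drawing we are allowed to modify.

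The main obstacle — really the only thing requiring a moment's care — is ensuring that successive applications of \cref{le:box} in the four directions do not undo one another. This is handled by the observation already made: a $(\sigma,\ell)$-strip removal only translates feature points on one side of $\ell$ and leaves the other side fixed, and by always choosing $\ell$ just outside the relevant side of $R$ we guarantee that the compressed strip, having width or height at most $\epsilon$, ends up on the correct side of $R$ and stays there under the subsequent three operations. Everything else is routine, and the converse direction (a solution inside $R$ is trivially a solution of $\mathcal{I}$) needs no argument.
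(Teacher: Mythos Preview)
Your overall strategy—compress each of the four overhanging parts using \cref{le:box}—is exactly what the paper does. However, your placement of the line $\ell$ is on the wrong side of $R$, and this breaks the argument as written. You put $\ell$ strictly to the \emph{right} of the right side of $R$ (between $R$ and the leftmost overflowing feature point), and then claim that compressing the $B$-selected drawing to width $\epsilon$ smaller than the gap between $\ell$ and $R$ brings everything inside $R$. But \cref{le:box}, as its proof shows, compresses the $B$-selected drawing \emph{toward} the crossed side $s=\ell$ via strip removals; it does not translate it across $\ell$. So after the operation the content occupies a strip of width at most $\epsilon$ immediately to the right of $\ell$, which is still entirely outside $R$. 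Your sentence ``hence the entire drawing now lies to the left of the right side of $R$'' therefore does not follow, and the same issue recurs in the remark that you ``always choose $\ell$ just outside the relevant side of $R$''.

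The fix is simple and is precisely what the paper does: place $\ell$ \emph{inside} $R$, strictly between the rightmost feature point of $\Gamma(H_f)$ and the right side of $R$ (such a line exists since $\Gamma(H_f)\subset\operatorname{int}(R)$). Then $B$ is still a valid v-selection (only its left side is crossed, and only by missing edge-segments, since all of $\Gamma(H_f)$ lies to the left of $\ell$), and compressing to width $\epsilon$ smaller than the gap between $\ell$ and the right side of $R$ now genuinely brings the overflow inside $R$ while leaving $\Gamma(H_f)$ fixed. With this correction, your remarks about not disturbing $\Gamma(H_f)$ and about preserving progress in the other three directions go through verbatim.
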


\begin{proof}
\begin{figure}[t]
    \centering
    \begin{minipage}[t]{0.42\textwidth}
    \centering
        \includegraphics[page=2,width=\textwidth]{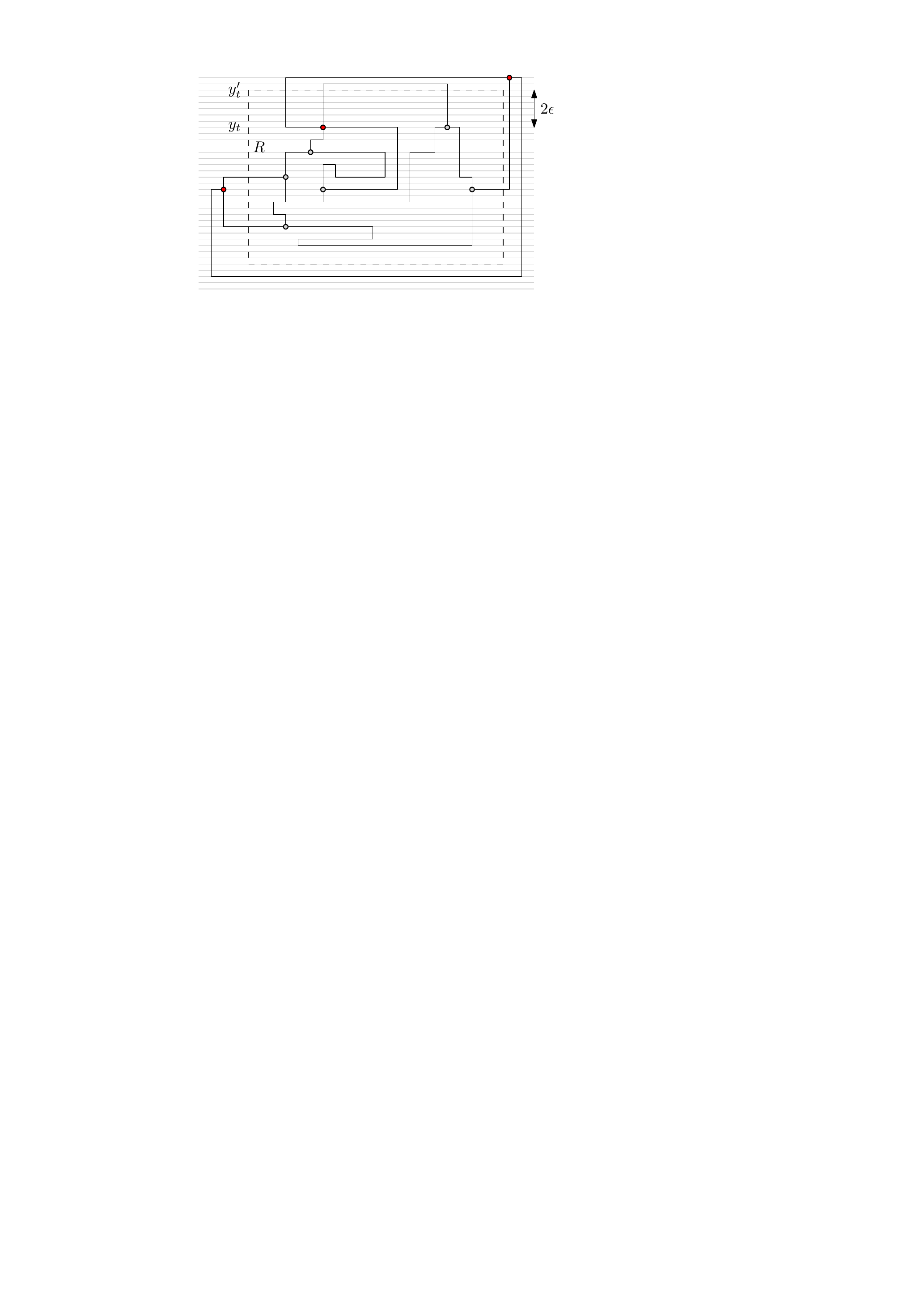}
        \subcaption{$\Gamma(G_f)$\label{fi:of-1}}
    \end{minipage}\hfill
    \begin{minipage}[t]{0.42\textwidth}
    \centering
        \includegraphics[page=3,width=\textwidth]{figs/outerface.pdf}
        \subcaption{$\Gamma'(G_f)$\label{fi:of-2}}
    \end{minipage}\vspace{0.3cm}
    \begin{minipage}[t]{0.42\textwidth}
    \centering
        \includegraphics[page=4,width=\textwidth]{figs/outerface.pdf}
        \subcaption{\label{fi:of-3}}
    \end{minipage}\hfill
    \begin{minipage}[t]{0.42\textwidth}
    \centering
        \includegraphics[page=5,width=\textwidth]{figs/outerface.pdf}
        \subcaption{\label{fi:of-4}}
    \end{minipage}
    \caption{Illustration for the proof of \cref{ob:frame}.}
    \label{fig:outerface}
\end{figure}
Consider an orthogonal drawing $\Gamma(G_f)$ representing a solution to $\mathcal{I}$ that is not contained in the rectangle $R$. We first deal with the part of $\Gamma(G_f)$ that overflows above $R$; refer to \cref{fi:of-1} for an illustration. Let $y_{t}$ be the topmost coordinate of $\Gamma(H_f)$ and let $y'_{t}$ be the topmost coordinate of $R$. Let $\epsilon = \frac{y'_{t}-y_{t}}{2}$. Consider a rectangle $B$ such that: (a) its bottommost horizontal side is slightly above $y_{t}$ and does not contain any feature point, (b) no feature point of $\Gamma(G_f)$ is above/to the left/to the right of $B$. Then $B$ is an h-selection and by \cref{le:box} we can scale-down its $B$-selected drawing to have height at most $\epsilon$; see \cref{fi:of-2}. With a similar argument we can deal with the parts of $\Gamma(G_f)$ that overflow to the left of, to the right of, and below $R$; see \cref{fi:of-3,fi:of-4}. The original and final drawings are \sequiv and hence have the same number of bends.
\end{proof}

Based on \cref{ob:frame}, we shall assume that any instance $\mathcal{I}$ is modified such that the outer face of $\Gamma(H_f)$ is a rectangle $R$ containing no anchors (e.g., with four dummy vertices at its corners connected in a cycle). Notice that, while this ensures that $f$ is no longer the outer face, $f$ now contains a hole (that is, $H_f$ is not connected anymore). The aim for the rest of this section is to remove this hole by connecting it to the boundary of $R$. 

To do so, let us consider an arbitrary horizontal or vertical line-segment $\zeta$ that connects the boundary of $R$ with an edge-segment in the drawing $\Gamma(H_f)$ and intersects no other edge-segment of $\Gamma(H_f)$. Observe that, w.l.o.g., we can assume that each edge-segment in a solution $\Gamma(G_f)$ only intersects $\zeta$ in single points (and not in a line-segment); otherwise, one may shift $\zeta$ by a sufficiently small $\epsilon$ to avoid such intersections.
Roughly speaking, our aim will be to show that the instance $\mathcal{I}$ can be ``cut open'' along $\zeta$ to construct an equivalent instance where the boundary of the polygon includes $R$, and to branch in order to determine how the edges in a hypothetical solution cross through $\zeta$. However, to do so we need to ensure that there is a solution, in which the number of such crossings through $\zeta$ is bounded. 

Let us consider the drawing of a missing edge $e\in E_X$ in $\Gamma(G_f)$. The intersection points of $e$ with $\zeta$ partition the drawing of $e$ into polylines $e^{\zeta}_1$, $e^{\zeta}_2$, \dots, $e^{\zeta}_q$, where each pair of consecutive polylines $e^{\zeta}_i$ and $e^{\zeta}_{i+1}$ touch $\zeta$ at a point, which we denote by $z_i$ ($i=1,\dots,q-1$). We distinguish two cases depending on the structure of these polylines. A polyline $e^{\zeta}_j$, $1<j<q$, is called a $\zeta$-\emph{handle} if the unique region of the plane enclosed by $e^{\zeta}_j$ and $\zeta$ does not contain $\Gamma(H_f)$; otherwise the polyline is called a $\zeta$-\emph{spiral}. See \cref{fig:outerface2} for an illustration.




\begin{figure}[t]
    \centering
    \begin{minipage}[t]{0.42\textwidth}
    \centering
        \includegraphics[page=1,width=\textwidth]{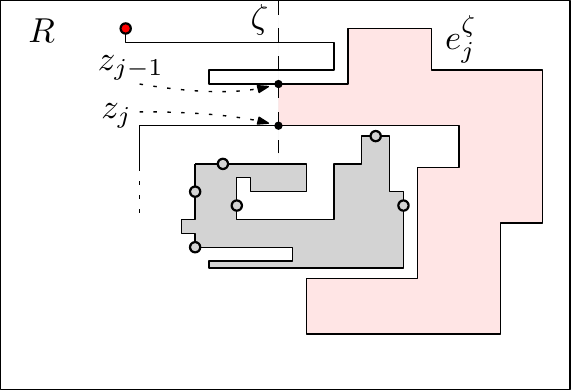}
    \end{minipage}\hfill
    \begin{minipage}[t]{0.42\textwidth}
    \centering
        \includegraphics[page=2,width=\textwidth]{figs/outerface-2.pdf}
    \end{minipage}
    \caption{Illustration of $\zeta$-handles (left) and $\zeta$-spirals (right).}
    \label{fig:outerface2}
\end{figure}

\begin{lemma}\label{le:handles}
Assume $\mathcal{I}$ and $\zeta$ are fixed as above. Then $\mathcal{I}$ admits a solution such that no missing edge contains a $\zeta$-handle.
\end{lemma}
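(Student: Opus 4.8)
The plan is to show that any $\zeta$-handle in a solution can be removed by a local redrawing operation that does not increase the number of bends and does not affect the rest of the drawing, so that by iterating we obtain a solution with no $\zeta$-handle at all. First I would fix a solution $\Gamma(G_f)$ and suppose some missing edge $e$ has a $\zeta$-handle $e^\zeta_j$; among all handles over all missing edges, I would pick one that is \emph{innermost}, meaning the closed region $D$ bounded by $e^\zeta_j$ together with the sub-segment of $\zeta$ between the two crossing points $z_{j-1}$ and $z_j$ contains no other feature point of $\Gamma(G_f)$ in its interior (such an innermost handle exists since the handles are nested or disjoint and each encloses no copy of $\Gamma(H_f)$). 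Because $D$ contains $\Gamma(H_f)$ on its outside, $D$ lies entirely in the marked face $f$, and since $D$ is empty, the only things it interacts with are edge-segments of other missing edges crossing $\zeta$ and passing through $D$.

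The key step is then the following surgery. Cut $e$ open at the two consecutive crossing points $z_{j-1}$ and $z_j$, discarding the portion $e^\zeta_j$; this leaves the two dangling ends of $e^\zeta_{j-1}$ and $e^\zeta_{j+1}$ sitting on $\zeta$ at $z_{j-1}$ and $z_j$ respectively. Using \cref{le:box} applied to a suitable $v$- or $h$-selection $B$ (according to whether $\zeta$ is vertical or horizontal) straddling the strip of $\zeta$ between $z_{j-1}$ and $z_j$, I would first compress everything that lies inside $D$ (which, because $D$ is an innermost handle, consists only of crossing edge-segments of \emph{other} missing edges, each of which enters and leaves $D$ through $\zeta$) into an arbitrarily thin strip next to $\zeta$; then I reroute the cut-open edge $e$ along $\zeta$ directly between $z_{j-1}$ and $z_j$, hugging $\zeta$ on the side of $D$ and staying outside the thin compressed strip. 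Reconnecting along $\zeta$ uses at most the same number of bends as $e^\zeta_j$ did (in fact it uses at most two bends, while a handle necessarily has at least two bends since it must leave $\zeta$, turn around, and come back), and does not alter the positions of $\Gamma(H_f)$ or of any feature point outside a small neighborhood of $D$; hence the new drawing is still planar, still an extension of $\Gamma(H_f)$ respecting $\mathcal{P}$, and has at most as many bends as $\Gamma(G_f)$. It therefore is still a solution, and it has strictly fewer $\zeta$-handles (we removed $e^\zeta_j$ and, since $D$ was innermost and emptied, created no new ones). Iterating finitely often yields a solution with no $\zeta$-handle, proving the lemma.

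The main obstacle I anticipate is making the rerouting step fully rigorous when the handle $D$ is \emph{not} empty in the original solution — i.e. when other missing edges cross $\zeta$ inside the span of $e^\zeta_j$ and wind through $D$. The clean way around this is precisely the choice of an innermost handle together with the strip-compression lemma: by \cref{le:box} we may shrink the $B$-selected sub-drawing so that all such traversing segments occupy negligible width next to $\zeta$, after which there is room to draw the short replacement arc for $e$ flush against $\zeta$ without crossings. A secondary point requiring care is that the replacement arc must connect $z_{j-1}$ to $z_j$ while both of these are on $\zeta$ — since consecutive crossings of $e$ with $\zeta$ are joined by a monotone-ish arc hugging $\zeta$, this is routine, but one must check the bend count and that the arc stays inside $f$ and on the correct side of the (possibly re-used) segment of $\zeta$; both follow from the definition of $\zeta$ as a segment meeting only one edge-segment of $\Gamma(H_f)$ and from the planarity of the compressed configuration. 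No interaction with the port-function $\mathcal{P}$ occurs because the operation touches neither endpoint of $e$ nor any anchor.
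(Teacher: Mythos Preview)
Your proposal has a genuine gap in the rerouting step. When you place the replacement arc ``hugging $\zeta$ on the side of $D$ and staying outside the thin compressed strip'', the new polyline still lies on the $D$ side of $\zeta$, while the dangling pieces $e^\zeta_{j-1}$ and $e^\zeta_{j+1}$ approach $z_{j-1}$ and $z_j$ from the \emph{non}-$D$ side. Hence $e$ still crosses $\zeta$ at both $z_{j-1}$ and $z_j$, and your replacement arc is itself a (smaller) $\zeta$-handle. The handle count is unchanged and the iteration does not terminate. To actually eliminate the two crossings, the replacement arc must lie on the non-$D$ side of $\zeta$; but then compressing the contents of $D$ does nothing to make room for it, since the obstructions (in particular the edge-segments that approach $\zeta$ at the intermediate crossing points $p',\dots,q'$ from the non-$D$ side) live on the wrong side.

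A second, smaller issue: your ``innermost'' claim is not justified. Taking an innermost handle in the nested/disjoint sense only rules out other handles inside $D$; it does not rule out missing vertices in $D$. A missing vertex $v\in D$ has its incident edges exiting $D$ through $\zeta$, but those pieces are first or last pieces of their edges, not intermediate ones, so they are not handles and do not contradict innermostness. Thus $D$ may contain feature points even for an innermost handle, contrary to what you assume.

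The paper avoids both issues with a different maneuver. It takes an arbitrary handle $e^*$ (no innermost choice), lets $\Gamma_\zeta$ be \emph{everything} in the interior of the region bounded by $e^*$ and $\zeta$ (missing vertices included), compresses $\Gamma_\zeta$ in width via \cref{le:box} and additionally in height via two further h-selections $B'$, $B''$ so that it fits between the outermost crossings $p,q$ and the next crossings $p',q'$, and then \emph{translates $\Gamma_\zeta$ across $\zeta$ to the non-$D$ side}. After this translation the intermediate crossings at $p',\dots,q'$ disappear, the region between $p$ and $q$ is clear on both sides of $\zeta$, and $e^*$ can be redrawn with two bends entirely on the non-$D$ side. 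This cross-$\zeta$ translation is the key step your proposal is missing.
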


\begin{proof}
\begin{figure}[t]
    \centering
    \begin{minipage}[t]{0.32\textwidth}
    \centering
        \includegraphics[page=3,width=\textwidth]{figs/outerface-2.pdf}
    \end{minipage}\hfil
    \begin{minipage}[t]{0.32\textwidth}
    \centering
        \includegraphics[page=4,width=\textwidth]{figs/outerface-2.pdf}
    \end{minipage}\hfil
    \begin{minipage}[t]{0.32\textwidth}
    \centering
        \includegraphics[page=5,width=\textwidth]{figs/outerface-2.pdf}
    \end{minipage}
    \caption{Illustration for the proof of \cref{le:handles}.}
    \label{fig:zhandles}
\end{figure}
Let $\Gamma(G_f)$ be a solution to $\mathcal{I}$ and let $e^*$ be a $\zeta$-handle. Observe that, by planarity, the polyline $e^*$ is not crossed by any edge (except possibly at common endpoints). Consider the subdrawing $\Gamma_\zeta$ of $\Gamma(G_f)$ formed by all vertices and edge-segments in  the interior of the unique region of the plane enclosed by $e^*$ and $\zeta$. 

If $\Gamma_\zeta$  is empty, we can safely redraw $e^*$ with two bends (hence without increasing its total number of bends) such that it does not intersect $\zeta$ anymore. Thus, suppose $\Gamma_\zeta$ is not empty.

Assume $\zeta$ is vertical, as the argument is analogous if $\zeta$ is horizontal; refer to    \cref{fig:zhandles} for an illustration. Let $\epsilon>0$ be such that no feature point of $\Gamma(G_f)$ is at horizontal distance smaller than $\epsilon$ from $\zeta$. Let $p$ and $q$ be the top and bottom intersection points of $e^*$ with $\zeta$, respectively. Also, let $p'$ and $q'$ between $p$ and $q$ (if any) be the intersection points made by edges that are part of $\Gamma_\zeta$  that are closest to $p$ and $q$, respectively. First, we define a rectangle $B$ such that one vertical side corresponds to $\zeta$ and contains $\Gamma_\zeta$ in its interior. Rectangle $B$ is a v-selection for $\Gamma_\zeta$ whose $B$-selected drawing coincides with the whole $\Gamma_\zeta$. (On the other hand, $B$ may not be a valid v-selection for $\Gamma(G_f)$.) We apply \cref{le:box} to scale-down $\Gamma_\zeta$ such that its width becomes $\frac{\epsilon}{2}$. Note that the current drawing now may not be planar anymore. Consider a new rectangle $B'$ such that: (a) its bottommost horizontal side is between $p$ and $p'$ (or between $p$ and $q$ if $p'$ does not exist) and contains no feature point; (b) its topmost horizontal side is above any feature point of $\Gamma_\zeta$; (c) its leftmost and rightmost vertical sides are to the left and to the right, respectively, of any feature point of $\Gamma_\zeta$. By construction, $B'$ is an $h$-selection for $\Gamma_\zeta$ and we can apply \cref{le:box}  to scale-down the $B$-selected drawing such that its height is smaller than the vertical distance between $p$ and $p'$. Similarly, we can define and scale-down a rectangle $B''$ such that: (a) its topmost horizontal side is between $q'$ and $q$ (or between $p$ and $q$ if $q'$ does not exist) and contains no feature point; (b) its bottommost horizontal side is below any feature point of $\Gamma_\zeta$; (c) its leftmost and rightmost vertical sides are to the left and to the right, respectively, of any feature point of $\Gamma_\zeta$. The resulting drawing of $G_f$ is now planar. Indeed, $\Gamma_\zeta$ is now small enough that we can translate it horizontally on the other side of $\zeta$ without creating any edge crossing with other edges. After this operation, we are again in the position to redraw $e^*$ with two bends such that it does not intersect $\zeta$ anymore. 

By iterating this procedure for each $\zeta$-handle, we obtain the desired solution.
\end{proof}

Next we deal with $\zeta$-spirals: while they cannot be completely avoided, we show that one can bound the number of $\zeta$-spirals for each edge by a function of the parameter $k$.

\newcommand{\fspirals}{4k(k+1)}
\newcommand{\fspiralsperedge}{k+1}
\begin{lemma}\label{le:spirals}
Assume $\mathcal{I}$ and $\zeta$ are fixed as above. Then $\mathcal{I}$ admits a solution with no $\zeta$-handles and at most $\fspirals$ $\zeta$-spirals.
\end{lemma}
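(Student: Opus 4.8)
The plan is to prove the stronger statement that \emph{every} solution without $\zeta$-handles---one of which exists by \cref{le:handles}---already contains at most $\fspirals$ $\zeta$-spirals; fixing any such solution $\Gamma(G_f)$ then finishes the proof. Since $\Gamma(G_f)$ has no $\zeta$-handle, for every missing edge $e$ each of its internal polylines $e^\zeta_j$ ($1<j<q$) is a $\zeta$-spiral and thus winds exactly once around $\Gamma(H_f)$. I would first observe that, because $e$ is a simple arc and the marked face $f$ is an annular region cut open by $\zeta$, the crossings $z_1,\dots,z_{q-1}$ of $e$ with $\zeta$ must occur in this order along $\zeta$: passing to the universal cover of $f$, each spiral lifts to an arc crossing exactly the next lift of $\zeta$, and a ``backtracking'' crossing would either make $e$ self-intersect or turn a spiral into a handle. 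Hence the spirals of $e$ are nested: $e^\zeta_j$ together with the sub-arc of $\zeta$ between $z_{j-1}$ and $z_j$ bounds a region $R_j$ with $\Gamma(H_f)\subseteq R_j$, and $R_2\subsetneq R_3\subsetneq\dots\subsetneq R_{q-1}$.

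The core of the argument is the claim that, for every missing edge $e$ and every pair of consecutive spirals $e^\zeta_j,e^\zeta_{j+1}$, the ``gap'' $A_j=R_{j+1}\setminus\overline{R_j}$ contains a missing vertex of $X_f$. Assume not. A short case analysis then shows that the interior of $A_j$ is free of all feature points and edge-segments: $\Gamma(H_f)\subseteq R_j$ and the frame $R$ lies outside $R_{q-1}$, so neither meets $A_j$; no anchor lies in $A_j$ since anchors lie on $\Gamma(H_f)$; no other spiral of $e$ meets $A_j$, by the nesting; and any piece of a missing edge $e'\neq e$ inside $A_j$ would have to cross $e^\zeta_j$ or $e^\zeta_{j+1}$ (forbidden by planarity) or else enter and leave $A_j$ through $\zeta$---but then it would form a $\zeta$-handle of $e'$, or it would contain an endpoint (a missing vertex or an anchor) of $e'$ inside $A_j$, each of which is impossible. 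Because $A_j$ is empty, we can reroute the subpath $e^\zeta_j\cup e^\zeta_{j+1}$ of $e$, which currently runs from $z_{j-1}$ to $z_{j+1}$ while winding twice around $\Gamma(H_f)$, so that it winds only once (staying inside $A_j$) and no longer crosses $\zeta$ at $z_j$. Removing one full winding around the orthogonal obstacle $\Gamma(H_f)$ saves more bends than the constant overhead incurred at $z_{j-1}$ and $z_{j+1}$, so the new drawing is a valid extension respecting $\mathcal{P}$ with strictly fewer bends and still no $\zeta$-handle, contradicting that $\Gamma(G_f)$ is a solution.

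Granting the claim, the counting is immediate. The gaps $A_2,\dots,A_{q-2}$ of a fixed missing edge $e$ are pairwise disjoint (again by the nesting), and each contains one of the at most $k$ missing vertices, so $e$ has at most $k$ of them and hence at most $\fspiralsperedge$ $\zeta$-spirals. Finally, since $H_f$ is an induced subgraph of $G_f$ and $G_f$ has maximum degree four, every missing edge has an endpoint in $X_f$, so $|E_X|\le 4k$; summing the per-edge bound over all missing edges yields at most $\fspirals$ $\zeta$-spirals in $\Gamma(G_f)$.

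I expect the emptiness claim for $A_j$ together with the rerouting step to be the main obstacle: one must carefully rule out every way a piece of another missing edge (or a later spiral of $e$ itself) could thread through $A_j$---this is precisely where the absence of $\zeta$-handles and the fact that anchors lie off the frame are used---and then verify that the rerouted curve remains a planar orthogonal extension honoring $\mathcal{P}$ with strictly fewer bends. The nesting (monotonicity) of the spirals of a single edge, though intuitively clear, also requires a genuine topological argument and is best isolated as a separate claim.
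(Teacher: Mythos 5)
Your overall plan---nesting of the spirals of a single edge, looking at the gap between two consecutive spirals, distinguishing whether the gap contains a missing vertex, and then counting at most $\fspiralsperedge$ spirals per edge times at most $4k$ edges---mirrors the paper's argument. However, there is a genuine gap in the central emptiness claim. You assert that if the gap $A_j$ between two consecutive spirals $e^\zeta_j,e^\zeta_{j+1}$ of $e$ contains no missing vertex, then its interior is free of all edge-segments, and your case analysis for a piece of another missing edge $e'$ entering and leaving $A_j$ through $\zeta$ offers only two outcomes: it is a $\zeta$-handle of $e'$ or it contains an endpoint of $e'$. This trichotomy misses the main case: the piece of $e'$ between two consecutive crossings with $\zeta$ can itself be a $\zeta$-spiral of $e'$ that lies entirely inside the annular gap $A_j$, winding once around $R_j\supseteq\Gamma(H_f)$, crossing neither $e^\zeta_j$ nor $e^\zeta_{j+1}$ and containing no endpoint of $e'$. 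This interleaving of spirals of several edges is exactly the situation the paper must handle separately: it shows that the cyclic pattern of foreign spirals seen along $\zeta$ inside the gap is the same on both sides (the sequences $O_1^*$ and $O_2^*$ coincide, else planarity or the not-filled property is violated), and then reroutes \emph{all} of these edges simultaneously before shortcutting $e$. Without this, your conclusion that an empty-of-vertices gap can be collapsed does not follow.

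A second, related issue is that you aim for the stronger statement that \emph{every} bend-minimal handle-free solution already has few spirals, by arguing that each collapse strictly decreases the bend count. For a single edge over a truly empty gap this is plausible (a full winding costs at least four bends versus a bounded junction overhead), but once the gap contains interleaved spirals of other edges, the required simultaneous rerouting is not obviously bend-decreasing. The paper sidesteps this entirely: it only claims that the redrawing does not \emph{increase} the number of bends, and iterates to produce \emph{some} solution with at most $\fspirals$ spirals, which is all \cref{le:spirals} asserts. I would recommend weakening your goal accordingly; otherwise you take on an additional proof obligation (strict decrease under multi-edge rerouting) that the lemma does not require and that is not clearly true.
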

\begin{proof}
\begin{figure}[t]
    \centering
    \begin{minipage}[t]{0.42\textwidth}
    \centering
        \includegraphics[page=6,width=\textwidth]{figs/outerface-2.pdf}
    \end{minipage}\hfil
    \begin{minipage}[t]{0.42\textwidth}
    \centering
        \includegraphics[page=7,width=\textwidth]{figs/outerface-2.pdf}
    \end{minipage}\hfil
    \caption{Illustration for the proof of \cref{le:spirals}.}
    \label{fig:zspirals}
\end{figure}
By \cref{le:handles}, we know that $\mathcal{I}$ admits a solution $\Gamma(G_f)$ with no $\zeta$-handles. Observe that, by its definition, $\zeta$ can be crossed only be the edges in $E_X$ that are drawn inside $f$, which are at most $4k$. Let $\sigma$ be the greatest number of $\zeta$-spirals made by an edge crossing $\zeta$. If $\sigma \le \frac{\fspirals}{4k}=\fspiralsperedge$, we are done. Hence, suppose $\sigma > \fspiralsperedge$. 

Let $e$ be an edge containing a number of $\zeta$-spirals larger than  $\fspiralsperedge$. Let $e^\zeta_1,e^\zeta_2,\dots,e^\zeta_q$ be the $\zeta$-spirals made by $e$ such that $e^\zeta_i$ an $e^\zeta_{i+1}$ are consecutive ($i=1,2,\dots,q-1$), i.e., they touch at a point on $\zeta$. Consider two consecutive $\zeta$-spirals, $e^\zeta_i$ and $e^\zeta_{i+1}$ and the region $R_i$ of the plane bounded by $\zeta$ and by $e^\zeta_i$ and $e^\zeta_{i+1}$; refer to \cref{fig:zspirals} for an illustration. If $R_i$ contains a missing vertex in its interior, we call it \emph{filled}. Otherwise $R_i$ is not filled, and its interior is either completely empty or it contains some edge-segments. 

Suppose first that $R_i$ is completely empty. Then we can simply redraw edge $e$ by replacing $e^\zeta_i$ with a vertical segment between its endpoints, which we can slightly move to the left (or the right) so that it is not contained in $\zeta$. 

Suppose now that $R_i$  contains some edge-segments (but no vertices because it is not filled). In this case, any edge-segment in $R_i$ is part of a $\zeta$-spiral made by some other edge $e'$. (Otherwise the drawing would either be not connected or contain a $\zeta$-handle.) Then let $O_1$ be the ordered sequence of $\zeta$-spirals that we encounter when walking along $\zeta$ between the first and the last endpoint of $e^\zeta_i$, and similarly let $O_2$ be the ordered sequence of $\zeta$-spirals that we encounter when walking along $\zeta$ between the first and the last endpoint of $e^\zeta_{i+1}$. Let $O^*_1$ and $O^*_2$ be the sequences obtained from $O_1$ and $O_2$ by replacing each $\zeta$-spiral with the corresponding edge it belongs to (note that no edge appears more than once in any of the two sequences). We claim that $O^*_1$ and $O^*_2$ are identical. If this is true,  then we can apply, for each edge in $O^*_1=O^*_2$, a  rerouting operation analogous as the one described above (see \cref{fig:zspirals}). As a result, again $e$ can be redrawn so that it contains one less $\zeta$-spiral. To see that $O^*_1$ and $O^*_2$ are identical, suppose for a contradiction they are not. Then either they contain the same edges in different order, or they contain different edges. In the first case, two edges would cross each other contradicting planarity. In the second case, one edge would have one end-vertex in $R_i$ hence contradicting the fact that $R_i$ is not filled. 

By iterating this procedure we obtain a drawing in which each edge $e$ either contained at most $\fspiralsperedge$ $\zeta$-spirals per edge since the beginning and hence has not been modified, or any region defined by two consecutive $\zeta$-spirals of $e$ is filled. Since there exist at most $k$ such regions per edge, again $e$ contains at most $\fspiralsperedge$ $\zeta$-spirals, as desired.
\end{proof}

With \cref{le:spirals}, we obtain that there exists a solution where the total number of edge-segments crossing through $\zeta$ is at most $\fspirals$. We can use this to branch on which edges cross through $\zeta$ and use this to make a ``bridge'' connecting $R$ to the hole in $f$, thus resulting in an equivalent instance where $f$ is modified to become an inner face with no holes.

\begin{lemma}
There is an algorithm that takes as input an instance $\mathcal{I}$ of \FTBOE where~$f$ is the outer face and solves it in time $2^{\bigoh(k^2 \log k)}\cdot Q(|\mathcal{I}|,k)$, where $Q(a,b)$ is the time to solve an instance of \FTBOE\ with instance size $a$ and parameter value $b$ such that~$f$~is~the~inner~face.
\end{lemma}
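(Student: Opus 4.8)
The plan is to combine \cref{ob:frame}, \cref{le:handles}, and \cref{le:spirals} with an exhaustive branching over the way an optimal solution interacts with the cut segment $\zeta$, followed by a ``cut‑open'' construction that turns the annular face into a simply connected inner face. First I would invoke \cref{ob:frame} to enclose $\Gamma(H_f)$ in a rectangular frame $R$ (four dummy vertices at its corners joined in a cycle), after which the marked face $f$ is an inner face but contains exactly one hole, namely the boundary of the former drawing. I then fix an arbitrary axis‑parallel segment $\zeta$ joining the boundary of $R$ to an edge‑segment of $\Gamma(H_f)$ and crossing no other edge‑segment. By \cref{le:spirals}, $\mathcal{I}$ has a solution $\Gamma^\star$ with no $\zeta$‑handle and at most $\fspirals$ $\zeta$‑spirals; since a missing edge with no $\zeta$‑handle that crosses $\zeta$ has exactly one more crossing than it has $\zeta$‑spirals, and at most $4k$ missing edges are drawn in $f$, the total number $N$ of points where missing edges cross $\zeta$ in $\Gamma^\star$ is $\bigoh(k^2)$.

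Second, I would branch over the \emph{interface pattern} of a hypothetical solution along $\zeta$: a guess of (i) the number $N'\le N$ of crossing points; (ii) the linear order in which they appear along $\zeta$; and (iii) for each crossing point, which of the $\le 4k$ missing edges it belongs to, its position in that edge's own sequence of crossings, and the direction in which the edge traverses $\zeta$ there. Since $N=\bigoh(k^2)$ and the alphabet of labels has size $\bigoh(k)$, the number of interface patterns is bounded by $2^{\bigoh(k^2\log k)}$.

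Third, for each interface pattern I build an inner‑face instance $\mathcal{I}'$: cut $\Gamma(H_f)$ open along $\zeta$, replacing $\zeta$ by two banks $\zeta',\zeta''$ that, together with the part of $R$ on one side and the hole boundary, bound a single simply connected orthogonal polygon, which becomes the new marked face. For each guessed crossing point I add one dummy vertex to $\zeta'$ and one to $\zeta''$ (in the guessed order), place these into the pre‑drawn graph — so they are \emph{not} missing vertices — and subdivide the corresponding missing edge of $E_X$ at that point, so an edge crossing $\zeta$ exactly $t$ times becomes $t+1$ missing edges whose interior endpoints are these dummy vertices (the two outermost endpoints being unchanged); the port at each new dummy vertex is forced by the guessed crossing direction, which defines $\mathcal{P}'$. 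Hence $|X_{f'}|=|X_f|=k$ and $|\mathcal{I}'|$ is only polynomially larger than $|\mathcal{I}|$. The output of $\mathcal{I}$ is then the minimum over all interface patterns of the output of $\mathcal{I}'$, discarding patterns for which $\mathcal{I}'$ returns ``No''.

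The correctness of this last step is the main obstacle, and it is where the redrawing tools of \cref{se:preliminaries} come in. One direction is easy: any solution of $\mathcal{I}'$ can be glued back along $\zeta'=\zeta''=\zeta$ into a drawing realizing $\mathcal{I}$ with the same number of bends. For the converse I would start from $\Gamma^\star$ and its (necessarily realized) interface pattern and use \cref{le:box} together with strip operations to confine the part of each $\zeta$‑spiral that winds around the hole to an arbitrarily thin corridor about $\zeta$; this makes it possible to re‑embed the cut‑open drawing as a \emph{valid} planar orthogonal drawing of the same shape, hence with the same number of bends, yielding a solution of the $\mathcal{I}'$ associated with that pattern. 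The care needed here is (a) to verify that unrolling a thin annular corridor into a disk introduces no extra bends — this is where the bound of \cref{le:spirals} on how often each edge winds is essential — and (b) to keep the bookkeeping of crossing directions and of the port hook‑ups consistent between $\mathcal{P}$ and $\mathcal{P}'$. Finally, running over all $2^{\bigoh(k^2\log k)}$ patterns, each requiring polynomial‑time preprocessing plus a single call to the inner‑face solver, gives total running time $2^{\bigoh(k^2\log k)}\cdot Q(|\mathcal{I}|,k)$.
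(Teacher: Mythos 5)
Your proposal is correct and follows essentially the same route as the paper: frame the drawing via \cref{ob:frame}, fix $\zeta$, invoke \cref{le:spirals} to bound the crossings by $\bigoh(k^2)$, branch over the $2^{\bigoh(k^2\log k)}$ crossing patterns (edge identity, order, and traversal direction along $\zeta$), and convert each branch into an inner-face instance whose equivalence is certified with the strip/box redrawing tools. The only differences are presentational: the paper realizes the cut as a bridge edge along $\zeta$ with equispaced subdivision vertices through which the crossing edges are rerouted as paths, and proves the converse direction by strip-addition operations that align the actual crossing points with these prescribed positions, rather than your two-bank cut-open with forced ports and corridor-confinement argument.
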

\begin{proof}
Let $\mathcal{I}=\langle G_f,H_f,\Gamma(H_f),f \rangle$ be an instance of \FTBOE such that $f$ is an inner face bounded by a rectangle $R$ and containing a segment $\zeta$ defined as above.

By \cref{le:spirals}, it is not restrictive to consider solutions such that  each missing edge drawn in $f$ contains no $\zeta$-handles and at most $\fspirals$ $\zeta$-spirals. That is, we shall consider solutions in which $\zeta$ is crossed at most $\fspirals$ times. 

Denote by $E_{X_f} = (E(G_f) \cap E(H_f))$, that is, the subset if missing edges that are drawn in $f$. Assume $E_{X_f}$ be arbitrarily ordered. Also, recall that this set contains at most $4k$ edges. We encode the $\fspirals$ potential crossings on $\zeta$ that appear in a solution to $\mathcal I$ as an array $C_\zeta$. Namely, for each $i \in [\fspirals]$, $C_\zeta[i]=j$ means that the $i$-th crossing along $\zeta$ from top to bottom (assuming $\zeta$ is a vertical segment up to a temporary rotation of the drawing) is used by the $j$-th edge in $E_{X_f}$, while  $\C_\zeta[i]=\diamond$ means that no edge actually uses that potential crossing. Observe that there are $2^{\bigoh(k^2 \log k)}$ such arrays.

Fix an array $C_\zeta$ and modify the boundary of $f$ (and hence $\Gamma(H_f)$) as follows. Add two dummy vertices at the endpoints of $\zeta$, thus subdividing the two edges these two endpoints lie on. Add the edge-segment $\zeta$ and subdivide it $\fspirals$ times such that any two consecutive subdivision vertices are equispaced. Let $z_1,z_2,\dots,z_{\fspirals}$ be the subdivision vertices along $\zeta$ ordered from top to bottom (always assuming $\zeta$ is vertical). Consider any edge $uv$ of $E_{X_f}$ whose index $j$ appears at least once in $C_\zeta$. Let $i_1,i_2, \dots, i_q$ be the indexes such that $C_\zeta[i_1]=C_\zeta[i_2]=\dots=C_\zeta[i_q]=j$. If $q>1$, then $e$ forms $q-1$ $\zeta$-spirals, which we encode in the instance by modifying $G_f$ as follows. 

We further guess whether when walking along edge $uv$ from $u$ to $v$, in a hypothetical solution, the first crossing encountered is $z_{i_1}$ or not.  In the former case, we replace $uv$ with the path $u-z_{i_1}-z_{i_q}-z_{i_2}-z_{i_q-1}-\dots-v$. In the latter case, we replace $uv$ with the path $v-z_{i_1}-z_{i_q}-z_{i_2}-z_{i_q-1}-\dots-u$.  This gives $2^{\bigoh(k^2)}$ branches for each array. Also, observe that we have generated $2^{\bigoh(k^2 \log k)}$ new instances of \FTBOE in which $f$ is an inner face without holes. We now argue that $\cal I$ admits a solution if and only if at least one of these transformed instances does. 

One direction is easy, namely consider a transformed instance $\cal I'$ which admits an orthogonal drawing $\Gamma'$ as a solution. Undoing the transformation yields a solution to $\cal I$. More precisely, replacing all dummy vertices along $\zeta$ with inner points of the corresponding edges and removing the edge-segments along $\zeta$ yields an orthogonal drawing of $G_f$ with the same number of bends as $\Gamma'$. 

Suppose now that $\cal I$ admits a solution $\Gamma(G_f)$. Consider the transformed instance $\cal I'$ whose array $C_\zeta$ suitably encodes the crossings along $\zeta$ and in which each edge crossing $\zeta$ has been transformed in the correct path. In order to transform $\Gamma(G_f)$ into a solution $\Gamma'$ of $\cal I'$ transform each crossing of an edge $e$ with $\zeta$ into  a dummy vertex that subdivides $e$. Let $z'_1,z'_2,\dots,z'_q$ be the obtained dummy vertices and add the edges of the path $z'_1-z'_2-\dots-z'_q$ (which represent $\zeta$). It only remains to vertically align the dummy vertices such that $z'_i$ can be identified with $z_i$ ($i=1,2,\dots,q)$. 

We proceed as follows. We first define a rectangle $B$ whose topmost horizontal side coincides with the one of $R$ and whose bottommost horizontal side is slightly above the bottommost endpoint of $\zeta$ such that it contains no feature point. Since $B$ is an h-selection by construction, we apply \cref{le:box} and scale down the $B$-selected drawing such that its topmost feature point  is below the $y$-coordinate of $z_q$ and its height is smaller than the vertical distance between any two consecutive dummy vertices $z_i$ and $z_{i+1}$ ($i=1,2,\dots,q-1$). Now observe that $z'_1$ is also below the $y$-coordinate of $z_q$. Next, for $i=q,q-1,\dots,1$ (i.e., from bottom to top), we iterate the following procedure. By construction, $z'_i$ is below the $y$-coordinate of $z_i$, then we identify a horizontal line $\ell_i$ slightly below $z'_i$ such that it contains no feature point, and we apply a $(\sigma_i,\ell_i)$-strip addition (see \cref{prop:sigma}) such that the $y$-coordinate of $z'_i$ will coincide with the one of $z_i$.  This yields the desired solution $\Gamma'$ of $\mathcal{I}'$.
\end{proof}

\section{Discretizing the Instances}
\label{sec:discret}
Our next aim is to define the sector graph and show that it suffices to consider only a bounded number of possible points in each sector for extending $\Gamma(H_f)$. Essentially, this allows us to combinatorially extract those properties of $\Gamma(H_f)$ that are relevant for solving \FTBOE.

\subsection{Sectors and the Sector Graph}
 Recall that a \emph{port candidate} is a tuple $(a,d)$ where $a$ is an anchor and $d\in \directions$.

For a point $p\in f$, the \emph{bend distance} $\bdist(p, (a,d))$ to a port candidate $(a,d)$ is the minimum integer $q$ such that there exists an orthogonal polyline with $q$ bends connecting $p$ and $a$ in the interior of $f$ which arrives to $a$ from direction $d$.

\begin{definition}
Let $\mathcal{P}=((a_1,d_1),\dots,(a_q,d_q))$ be an ordered set of port candidates. For each point $p\in f$, we  define its bend-vector as the tuple $\bvect(p)=(\bdist(p, (a_1,d_1)),\dots,\bdist(p, (a_q,d_q)))$.
\end{definition}

\begin{definition}
Given an ordered set of port candidates $\mathcal{P}$, a \emph{sector} $F$ is a maximal connected set of points with the same bend-vector w.r.t.\ $\mathcal{P}$.
\end{definition}

When $\mathcal{P}$ is not specified explicitly, we will assume it to be the set of port candidates provided by the considered instance of \FTBOE.
The face $f$ is now partitioned into a set $\mathcal{F}$ of sectors. It is worth noting that sectors are connected regions in the face $f$ by the definition, which, in particular, can be degenerate: a sector may be a single point, or a line-segment.

At this point, we can define a graph representation capturing the adjacencies between the sectors in our instance; see \cref{fig:sgraph} for an illustration.

\begin{definition}
Sectors $A$ and $B$ are \emph{adjacent} if there exists a point $p$ in $A$ and a direction $d\in \directions$ such that the first point outside of $A$ hit by the ray starting from $p$ in direction $d$ is in $B$. 
\end{definition}

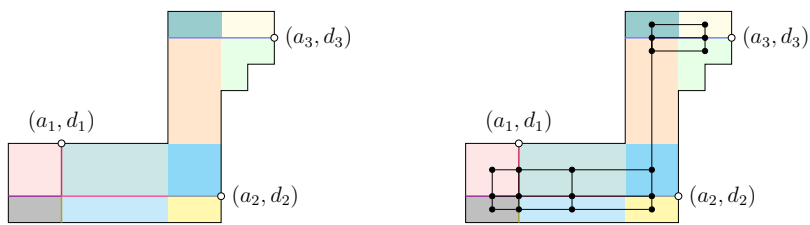
\begin{figure}[H]
\centering
\scalebox{0.7}{
\begin{tikzpicture}

\fill[fill=red!10!white] (0,.5) rectangle (1,2);
\fill[fill=gray!50!white] (0,0.5) rectangle (1,1);
\fill[fill=cyan!20!white] (1,.5) rectangle (3,1);
\fill[fill=cyan!40!white] (3,1) rectangle (4,2);
\fill[fill=yellow!40!white] (3, .5) rectangle (4,1);
\fill[fill=orange!20!white] (3, 2) rectangle (4,4);
\fill[fill=green!10!white] (4, 3) rectangle (5,4);
\fill[fill=yellow!10!white] (4, 4) rectangle (5,4.5);
\fill[fill=teal!20!white] (1, 1) rectangle (3,2);
\fill[fill=teal!40!white] (3, 4) rectangle (4,4.5);
\fill[fill=white!10!white] (4.5, 3) rectangle (5,3.5);
\draw
(0, .5) -- (1.5, .5) -- (2.5, .5) -- (4, .5) -- (4, 3)  -- (4.5, 3) -- (4.5, 3.5) -- (5, 3.5) -- (5, 4.5) -- (3, 4.5) -- (3, 2) -- (0, 2) -- (0, .5);

\draw[thick] (1.01, 1) -- (0.99, 1);
\draw[thick, blue!50!white] (5, 4) -- (3, 4);
\draw[thick, purple!80!white] (1, 2) -- (1, 1);
\draw[thick, violet!80!white] (0, 1) -- (1, 1);
\draw[thick, olive!80!white] (1, 1) -- (1, .5);
\draw[thick, magenta!80!white] (1, 1) -- (3, 1);
\draw[thick, gray!80!white] (4, 1) -- (3, 1);

\node[circ, label=above:{\Large $(a_1, d_1)$}] (v) at (1, 2) {};
\node[circ, label=right:{\Large $(a_2, d_2)$}] (v) at (4, 1) {};
\node[circ, label=right:{\Large $(a_3, d_3)$}] (v) at (5, 4) {};

\end{tikzpicture}
}
\scalebox{0.7}{
\begin{tikzpicture}

\fill[fill=red!10!white] (0,.5) rectangle (1,2);
\fill[fill=gray!50!white] (0,0.5) rectangle (1,1);
\fill[fill=cyan!20!white] (1,.5) rectangle (3,1);
\fill[fill=cyan!40!white] (3,1) rectangle (4,2);
\fill[fill=yellow!40!white] (3, .5) rectangle (4,1);
\fill[fill=orange!20!white] (3, 2) rectangle (4,4);
\fill[fill=green!10!white] (4, 3) rectangle (5,4);
\fill[fill=yellow!10!white] (4, 4) rectangle (5,4.5);
\fill[fill=teal!20!white] (1, 1) rectangle (3,2);
\fill[fill=teal!40!white] (3, 4) rectangle (4,4.5);
\fill[fill=white!10!white] (4.5, 3) rectangle (5,3.5);
\draw
(0, .5) -- (1.5, .5) -- (2.5, .5) -- (4, .5) -- (4, 3)  -- (4.5, 3) -- (4.5, 3.5) -- (5, 3.5) -- (5, 4.5) -- (3, 4.5) -- (3, 2) -- (0, 2) -- (0, .5);

\draw[thick] (1.01, 1) -- (0.99, 1);
\draw[thick, blue!50!white] (5, 4) -- (3, 4);
\draw[thick, purple!80!white] (1, 2) -- (1, 1);
\draw[thick, violet!80!white] (0, 1) -- (1, 1);
\draw[thick, olive!80!white] (1, 1) -- (1, .5);
\draw[thick, magenta!80!white] (1, 1) -- (3, 1);
\draw[thick, gray!80!white] (4, 1) -- (3, 1);

\node[circ, label=above:{\Large $(a_1, d_1)$}] (v) at (1, 2) {};
\node[circ, label=right:{\Large $(a_2, d_2)$}] (v) at (4, 1) {};
\node[circ, label=right:{\Large $(a_3, d_3)$}] (v) at (5, 4) {};

\node[bcirc] (v1) at (.5, 1.5) {};
\node[bcirc] (v2) at (.5, 1) {};
\node[bcirc] (v3) at (.5, .75) {};
\node[bcirc] (v4) at (1, 1.5) {};
\node[bcirc] (v5) at (1, 1) {};
\node[bcirc] (v6) at (1, .75) {};
\node[bcirc] (v7) at (3.5, 1.5) {};
\node[bcirc] (v19) at (2, 1.5) {};
\node[bcirc] (v8) at (2, 1) {};
\node[bcirc] (v9) at (2, .75) {};
\node[bcirc] (v11) at (3.5, 4) {};
\node[bcirc] (v12) at (4.5, 4) {};
\node[bcirc] (v13) at (3.5, 4.25) {};
\node[bcirc] (v14) at (4.5, 4.25) {};
\node[bcirc] (v15) at (3.5, 3.75) {};
\node[bcirc] (v16) at (4.5, 3.75) {};
\node[bcirc] (v17) at (3.5, 1) {};
\node[bcirc] (v18) at (3.5, .75) {};
\draw (v1) -- (v2)
(v1) -- (v4)
(v2) -- (v5)
(v2) -- (v3)
(v3) -- (v6)
(v4) -- (v5)
(v5) -- (v6)
(v4) -- (v7)
(v5) -- (v8) 
(v6) -- (v9)
(v8) -- (v9)
(v8) -- (v19)
(v7) -- (v11)
(v11) -- (v12)
(v9) -- (v18)
(v18) -- (v17)
(v17) -- (v8)
(v17) -- (v7)
(v15) -- (v16)
(v13) -- (v14)
(v14) -- (v16)
(v13) -- (v15)
;
\end{tikzpicture}
}
\caption{Left: partioning a face $f$ into a set $\cal F$ of sectors, with three anchors marked using white circles. Right: the  graph representation of $\cal F$.
    \label{fig:sgraph}}
\end{figure}

Observe that the relationship of being adjacent is symmetric; furthermore, for a specific direction $d$ we say that sector $A$ is \emph{$d$-adjacent} to $B$ if $A$ is adjacent to $B$ for this choice of $d$. The \emph{sector graph} $\mathcal{G}$ is the graph whose vertex set is the set of sectors $\mathcal{F}$, and adjacencies of vertices are defined via the adjacency of sectors.

It will be useful to establish some basic properties of the sector graph. For instance, it is not difficult to observe that the sector graph is a connected planar graph. Furthermore, we can show that the boundary between two sectors is, in a sense, simple. Concerning its size, we observe that each sector contains at least one intersection point between two projections and that any such intersection point can be shared by at most nine sectors (four non-degenerate sectors plus five degenerate sectors). Hence:

\begin{observation}
\label{obs:sectorbound}
The number of vertices in $\calG$ is upper-bounded by $9x^2$, where $x$ is the number of feature points in $\Gamma(H_F)$.
\end{observation}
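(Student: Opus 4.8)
The plan is to view the sectors as the faces, of every dimension, of a low-complexity arrangement of axis-parallel segments, and then to bound their number by charging against the crossings of that arrangement --- exactly along the lines suggested just before the statement.

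First I would make precise that the boundaries between sectors consist of \emph{projections}. For a fixed port candidate $(a_i,d_i)$, the function $p\mapsto \bdist(p,(a_i,d_i))$ is a rectilinear link-distance-type function on the orthogonal polygon $f$, hence piecewise constant, and it can only change value across a horizontal or vertical segment that emanates from a reflex corner of $f$ (a ``window'', i.e.\ a projection in the sense of Section~\ref{sec:prep}) or that passes through the anchor $a_i$; in either case this segment lies on the horizontal or vertical line through a feature point of $\Gamma(H_f)$. Taking all such segments over all port candidates together with $\partial f$ produces an arrangement $\mathcal A$ on each face of which $\bvect(\cdot)$ is constant, so every sector is a union of faces of $\mathcal A$; moreover, since each projection lies on the line through a feature point and each feature point contributes at most one horizontal and one vertical line, the projections of $\mathcal A$ lie on at most $x$ horizontal and at most $x$ vertical lines. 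Note also that, because a projection runs from a reflex corner to $\partial f$, two projections can only meet by crossing transversally (never in a ``T''), which will be convenient below.

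Next I would do the counting. Call a vertex of $\mathcal A$ a \emph{crossing} if it is the transversal intersection of a horizontal projection with a vertical projection. Every crossing lies at the intersection of one of the $\le x$ horizontal lines with one of the $\le x$ vertical lines, so there are at most $x^2$ crossings. Locally $\mathcal A$ looks like a ``$+$'' at a crossing, so its closure meets at most nine sectors: the four full-dimensional quadrant cells, the four one-dimensional edge-fragments incident to it, and the crossing point itself --- this is precisely the ``four non-degenerate plus five degenerate'' count. If in addition every sector has at least one crossing in its closure, then counting the sector--crossing incidences ``$\overline F \ni c$'' in the two obvious ways yields $|\mathcal F| = \sum_{F}1 \le \sum_F \#\{c: c\in\overline F\} = \sum_c \#\{F: c\in\overline F\} \le 9x^2$.

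The step I expect to be the main obstacle is exactly the claim that the closure of every sector $F$ contains a crossing. The intuition is that two distinct sectors cannot be separated solely by a piece of $\partial f$ (they carry different bend-vectors, and that difference must be witnessed by a projection lying between them), so the boundary of $F$ must meet the interior of $f$, and there it runs along projections; since projections only meet transversally, every turn of $\partial F$ that happens in the interior of $f$ is a crossing, and a short case analysis (according to whether $F$ is $2$-, $1$-, or $0$-dimensional, and whether $\partial F$ turns in the interior of $f$ at all) should produce a crossing in $\overline F$ --- the delicate point being degenerate, strip-like sectors, where one has to use the structure of link-distance level sets more carefully. Should this turn out to be unexpectedly stubborn, a cruder but self-contained fallback is available: extend every projection and every edge of $\partial f$ to a full line, obtaining an arrangement of at most $x$ horizontal and $x$ vertical lines whose total number of faces (of all dimensions) is $4x^2+4x+1 \le 9x^2$ for $x\ge 1$, and observe again that every sector is a union of such faces.
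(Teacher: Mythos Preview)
Your approach is exactly the one the paper sketches in the sentence preceding the observation: each sector contains (in its closure) an intersection of two projections, each such intersection is shared by at most nine sectors (four full-dimensional quadrants plus up to five degenerate ones), and there are at most $x^2$ such intersections. You supply considerably more detail than the paper does---in particular you explicitly flag the step ``every sector's closure contains a crossing'' and propose a clean arrangement-counting fallback---but the underlying argument is the same.
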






\subsection{The Sector-Grid}

A property of sectors that will become important later is that, inside each sector, we only need  a bounded number of positions for the placement of feature points in a hypothetical solution. In particular, our aim will be to construct a ``universal'' point-set with the property that there exists a solution which places feature points only on these points, and where the intersection of the point-set with each sector is upper-bounded by a function of the parameter.
Before we construct such a universal point set, we will first need to subdivide sectors into ``subsectors'' which have grid-like connections to each other. Crucially, we will show that the number of subsectors in each sector is upper-bounded by a function of $k$.

Let us fix a sector $S$ and a direction $d\in \directions$, say w.l.o.g.\ $d=\rightarrow$. Let a reflex corner be \emph{critical} if it is incident to at least two distinct sectors, and $(S,d)$-\emph{critical} if it is critical and also can be reached by a ray from some point in $S$ traveling in direction $d$. To construct the subsectors of $S$, let us project all $(S,d)$-critical reflex corners (for all four choices of $d$) into $S$ to obtain a grid, and make each induced grid cell in $S$ a \emph{subsector} of $S$. 
Observe that for each subsector in each sector $S$, it holds that its entire boundary in each direction is either the boundary of $f$, or touches the boundary of a single other ``adjacent'' subsector (which may or may not belong to $S$).

Crucially, we show that the number of such subsectors obtained from each sector is not too large. This will be important when using sectors for dynamic programming in Section~\ref{sec:tw}, since it will allow us to bound the size of the universal point set in each sector.

\begin{lemma}
\label{lem:subsectors}
For each $S$, $d$, there are at most $4k$ $(S,d)$-\emph{critical} reflex corners.
\end{lemma}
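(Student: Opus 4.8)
The plan is to bound the number of $(S,d)$-critical reflex corners by charging each of them to a missing edge of the instance, using the fact that the instance has been made \emph{clean} by \cref{lem:globalcut}. Fix $S$ and, without loss of generality, $d = \rightarrow$. By definition, an $(S,d)$-critical reflex corner $p$ is a critical reflex corner (i.e., incident to at least two distinct sectors) that can be reached by a horizontal ray travelling rightward from some point of $S$. I would first argue that such a $p$ must be non-essential only in a limited way: recall that in a clean instance, every projection of every \emph{non-essential} reflex corner splits $f$ into two faces, each carrying at least one port on its boundary. The key observation will be that a critical reflex corner is precisely a reflex corner at least one of whose projections separates two sectors, and I want to show that this forces a port candidate's ``influence boundary'' to pass through $p$.

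The core of the argument is the following dichotomy for each $(S,d)$-critical reflex corner $p$. Either (i) $p$ is essential, i.e., $p$ is an anchor, in which case $p$ is incident to a missing edge; there are at most $4k$ anchor incidences in total since each of the $k$ missing vertices contributes at most $4$ ports, so essential critical reflex corners are already bounded. Or (ii) $p$ is non-essential, and then I would argue that $p$ is a point where the bend-vector changes because a shortest orthogonal polyline from $S$ to some port candidate $(a_i,d_i)$ is forced to ``turn the corner'' at $p$ — that is, $p$ witnesses the boundary between the region from which $(a_i,d_i)$ is reachable with $q$ bends and the region from which it needs $q+1$ bends, for some $i$. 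Since each such monotone ``bend-count frontier'' associated with a single port candidate $(a_i,d_i)$ and a single direction can pass through only a bounded number of reflex corners visible along direction $d$ from $S$ (intuitively: the frontier is itself a staircase, and moving along it rightward past two distinct reflex corners strictly decreases the remaining bend budget to reach $(a_i,d_i)$, which is bounded), the number of non-essential $(S,d)$-critical reflex corners attributable to $(a_i,d_i)$ is $\bigoh(1)$ per port candidate. With at most $4k$ port candidates in $\mathcal{P}$, this yields the bound $4k$ after tightening the constants.

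Concretely, the steps I would carry out are: (1) characterize $(S,d)$-critical reflex corners as exactly those reflex corners $p$ reachable by a $d$-ray from $S$ such that a projection of $p$ separates two sectors; (2) show, using cleanness, that every non-essential reflex corner $p$ whose projection separates two sectors has the property that both sides of the projection contain a port, hence the bend-vector genuinely differs across the projection in some coordinate $i$; (3) for a fixed coordinate $i$, argue that the set of $(S,d)$-critical reflex corners ``caused by'' $(a_i,d_i)$ lies on a monotone staircase along which the bend distance $\bdist(\cdot,(a_i,d_i))$ is strictly monotone, and that the relevant range of bend distances, restricted to points reachable from $S$ by a $d$-ray, spans only $\bigoh(1)$ values — because within a single visibility direction from $S$, consecutive corners on this frontier change the bend distance to $(a_i,d_i)$ by at least one and the total variation is bounded by a small constant; (4) sum over the at most $4k$ port candidates and add the $\bigoh(k)$ essential corners, then absorb constants to reach exactly $4k$.

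I expect step (3) to be the main obstacle: making precise the claim that the ``bend-count frontier'' of a single port candidate, intersected with the cone of points $d$-visible from $S$, meets only $\bigoh(1)$ critical reflex corners. The delicate part is that a reflex corner can force an extra bend for a polyline going around it, but one has to rule out the scenario where a single port candidate's frontier zig-zags through many reflex corners of the face without changing its bend value. I would handle this by exploiting that $f$ is a simple orthogonal polygon (after pruning, $f$ is an inner face with no holes and bounded by the frame), so that the shortest-bend path to a fixed port candidate is ``staircase-monotone'' in an appropriate sense, and each time such a path is forced to bend at a reflex corner the bend count increases by exactly one; combined with the fact that along a fixed ray direction from $S$ the reachable bend values to $(a_i,d_i)$ form a contiguous short interval, this caps the count. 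If the constant coming out of this analysis is larger than what is needed, the looseness is absorbed because $4k$ already has slack (each missing vertex has $4$ ports and $k$ of them), but I would aim to present the charging so that the final tally is cleanly $\le 4k$.
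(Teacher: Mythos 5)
Your high-level strategy---charge each $(S,d)$-critical reflex corner to a port candidate and bound the charges per port---is the same skeleton as the paper's proof, but the step you yourself flag as ``the main obstacle'' (your step (3)) is precisely where all the work lies, and your sketch of it does not close the gap. You claim that the ``bend-count frontier'' of a single port candidate, restricted to the points $d$-visible from $S$, meets only $\bigoh(1)$ critical reflex corners, justified by a heuristic about staircase-monotonicity. This is not established: a frontier between bend distance $\psi$ and $\psi+1$ is a set along which the bend value does \emph{not} change, so nothing in your argument rules out that it threads through many reflex corners all visible from $S$ at the same level. The paper closes exactly this hole with a two-part argument you would need to reproduce: (a) an orientation analysis showing that every $(S,d)$-critical reflex corner must be \emph{vertical} and, moreover, oriented so that the incident sector with the \emph{smaller} bend distance lies on the smaller-$x$ side---otherwise the corner's projection in direction $\leftarrow$ would separate vertically aligned points \emph{inside $S$ itself} into different bend-vectors, contradicting that $S$ is a sector; and (b) ordering the surviving corners $c_1,\dots,c_q$ by $x$-coordinate and proving the distinguishing ports are \emph{pairwise distinct}, because points immediately left and right of the vertical projection of a later corner $c_j$ reach the projection of an earlier corner $c_i$ with equal bend counts and hence agree on $(x_i,d_i)$. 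This gives an injective map from critical corners to ports, hence the bound $4k$ exactly.

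Two further accounting problems would remain even if your step (3) were repaired. First, $\bigoh(1)$ corners per port over $4k$ ports yields $ck$ for some unspecified $c$, and $4k$ has no slack to absorb $c>1$; the lemma's constant is used downstream (e.g., in the $(8k)^2$ subsector count and the grid sizes), so ``absorbing constants'' is not free. Second, you add the essential (anchor) corners as a separate $\bigoh(k)$ term on top; the paper does not, because an essential critical corner still has a distinguishing port and is covered by the same injective charge. Finally, your step (2) conflates two things: cleanness guarantees ports on both sides of a non-essential corner's projection, but that does not imply the bend-vector differs across it; what you actually need (and what the definition of ``critical'' already gives you for free) is that the corner is incident to two distinct sectors, whence a distinguishing port exists.
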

\begin{proof}
As before, let us describe the case where $d=\rightarrow$ whereas the remaining three cases are completely symmetric. By definition, a projection of each $(S,d)$-critical reflex corner $c$ in at least one of the three directions other than $d$ (i.e., $\uparrow$, $\downarrow$ or $\leftarrow$) must run along the boundary of some two sectors, say $A$ and $B$, incident to $c$. This in particular means that there is a port (say $(x,d')$) which distinguishes $A$ from $B$; without loss of generality, assume that the bend distance from $A$ to $(x,d')$ is $\psi$ while the bend distance from $B$ to the same port is $\psi+1$. 

Consider the case where the projection of $c$ separating $A$ from $B$ is in the direction $\leftarrow$. In that case, we observe that the bend distance to $(x,d')$ differs for every pair of vertically aligned points $c^\uparrow$, $c^\downarrow$ which are placed directly above and below the projection of $c$, respectively. In particular, this also holds if we place $c^\uparrow$ and $c^\downarrow$ directly inside the sector $S$, hence contradicting the assumption that $S$ is a sector.

This means that every $(S,d)$-critical reflex corner must be \emph{vertical}, i.e., must be incident to the same sector in the $\nwarrow$ and $\swarrow$ directions, but be incident to a different sector in at least one of the two remaining diagonal directions. Moreover, consider a vertical $(S,d)$-critical reflex corner such that the sector $A$ (i.e., the sector with the smaller bend distance to the port $(x,d')$ distinguishing $A$ from $B$) has a larger $x$ coordinate than $c$. In that case, the projection of $c$ in the direction $\leftarrow$ once again behaves as in the horizontal case: the bend distance to $(x,d')$ differs for every pair of vertically aligned points $c^\uparrow$, $c^\downarrow$ which are placed directly above and below this projection of $c$, respectively. In particular, this once again also holds if we place $c^\uparrow$ and $c^\downarrow$ inside the sector $S$, contradicting the assumption that $S$ is a sector.

Our aim is now to show that there can be at most $k$ vertical $(S,d)$-critical reflex corners such that the sector $A$ has a smaller $x$ coordinate than $c$. To this end, let us consider up to $q$ such reflex corners $c_1,\dots,c_q$ ordered from the smallest $x$ coordinate to the largest. For each such vertical $(S,d)$-critical reflex corner $c_i$, we observe that the port $(x_i,d_i)$ distinguishing its sector $A_i$ from $B_i$ must be different from the port of distinguishing the sectors $A_j$ from $B_j$ in every reflex corner $c_j$, $j>i$---indeed, points that are on the same horizontal coordinate but directly to the left and right of the vertical projection of $c_j$ can reach the vertical projection of $c_i$ with the same number of bends, which here implies that they have the same bend distance to $(x_i,d_i)$. The proof now follows by $4k$ upper-bounding the total number of ports.
\end{proof}

By applying Lemma~\ref{lem:subsectors} on all sides of each sector $S$, we obtain that $S$ is partitioned into at most $(8k)^2$ subsectors. Observe that we may refine the sector graph constructed earlier by partitioning sectors into subsectors, with adjacencies between subsectors defined in the same way as between sectors. Note that by definition, each pair of adjacent subsectors share the complete side of the boundary that connects them. Hence, we can define a \emph{subsector-column} as a set of subsectors which form a path in the subsector graph and span the same vertical strip in $\Gamma(H_f)$, and similarly a \emph{subsector-row} is a set of subsectors which forms a path in the subsector graph and span the same horizontal strip in $\Gamma(H_f)$.

With the above in mind, we proceed to build the universal point set. As our first step, we construct an auxiliary set of points we call a \emph{skeleton}.
Let us now choose an arbitrary horizontal line-segment for each subsector-row that intersects it, and similarly an arbitrary vertical line-segment for each subsector-column that intersects it. To construct the skeleton, for each subsector $v$, we define the point $p_v$ to be the point at the intersection of the two line-segments intersecting the subsector.

\newcommand{\gridsize}{\texttt{subgridsize}}
\newcommand{\newgridsize}{\texttt{gridsize}}
\newcommand{\fgrid}{112k^3+202k^2+85k}
Let $\gridsize(k)=\fgrid$. We place a set of $\gridsize(k)\times \gridsize(k)$ points in a grid-like arrangement into each subsector $v$, where the points are centered at $p_v$ and the grid underlying these points occupies a square area of $\epsilon \times \epsilon$ for a sufficiently small $\epsilon$. In particular, we choose $\epsilon$ to be sufficiently small so that a horizontal or vertical projection of any pair of grid points intersects with the same line-segment of $\Gamma(H_f)$. We call this point set $\mathcal{S}_v$ the \emph{subsector-grid} of a subsector $v$; in the degenerate cases where $v$ is a line-segment or single point, the subsector-grid is a set of points on that segment or just a single point, respectively.

 Towards proving that using one subsector-grid for each subsector gives a sufficiently large point-set to embed the missing subgraph, we begin with a technical lemma.

\begin{longlemma} \label{le:tech}
Let $\zeta$ be a horizontal or vertical line-segment such that it lies in the interior of $f$, except for its endpoints that are on the boundary of $\Gamma(H_f)$, and such that it does not intersect any other edge-segment of $\Gamma(H_f)$. There is a solution in which any missing edge $e$ crosses $\zeta$  at most $k$ times. 
\end{longlemma}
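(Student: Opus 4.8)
\textbf{Proof plan for Lemma~\ref{le:tech}.}
The plan is to mirror the structure of the proof of \cref{le:spirals}, but now working in a clean instance with an arbitrary chord $\zeta$ rather than only in the outer-face setting. First I would fix a solution $\Gamma(G_f)$ and, among all solutions, pick one that is in some sense ``simplest'' with respect to $\zeta$ (e.g.\ minimizing the total number of crossings with $\zeta$, with ties broken by total bend count). The endpoints of $\zeta$ lie on the boundary of $f$, so $\zeta$ together with a portion of $\partial f$ bounds a region; this lets me classify, for each missing edge $e$, the sub-polylines into which its intersections with $\zeta$ partition $e$: those whose enclosed region (with $\zeta$) avoids the rest of $\Gamma(H_f)$ and those whose enclosed region contains part of $\Gamma(H_f)$, exactly as the $\zeta$-handle/$\zeta$-spiral dichotomy in Section~\ref{sec:prep}. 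I would first argue, via the same box-scaling and translation argument used in \cref{le:handles} (invoking \cref{le:box} and \cref{prop:sigma}), that handles can be removed without increasing the bend count, so in a minimal solution no missing edge forms a handle.

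The main work is then bounding the number of spiral-like crossings per edge. Here the key new ingredient compared with \cref{le:spirals} is that we are in a \emph{clean} instance, which is what forces the bound to depend only on $k$ (rather than on the drawing size). Given an edge $e$ with two consecutive such sub-polylines $e^\zeta_i,e^\zeta_{i+1}$ and the region $R_i$ of the plane they bound together with $\zeta$, I would distinguish: (i) $R_i$ is empty — then reroute $e$ to drop a crossing, as before; (ii) $R_i$ contains only edge-segments — show the ordered sequences of spiral-edges met along $\zeta$ on the two sides coincide (else two edges cross, or some edge has an endpoint inside $R_i$), and reroute simultaneously to drop a crossing; (iii) $R_i$ contains a missing vertex. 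Since there are at most $k$ missing vertices, case (iii) can occur for at most $k$ of the regions $R_i$ per edge, hence each edge crosses $\zeta$ at most $k$ times once cases (i) and (ii) have been exhausted.

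The point I expect to be the main obstacle is the boundary-condition analysis that case (ii) need not arise from a reflex corner that would violate cleanness, and more generally making sure that the rerouting in cases (i) and (ii) stays inside $f$ and does not create new crossings with the fixed drawing $\Gamma(H_f)$ or with $\zeta$ itself — this is where I would lean on cleanness of the instance (every projection of a non-essential reflex corner separates $f$ into two parts each carrying a port) to rule out degenerate configurations, and on the box operations of \cref{le:box} to first make the relevant sub-drawings narrow enough to be translated across $\zeta$. Finally, I would note that the rerouting operations strictly decrease the crossing count, so by the minimality of the chosen solution we conclude that already no edge crosses $\zeta$ more than $k$ times, completing the proof.
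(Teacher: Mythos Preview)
Your overall shape---first eliminate ``handle-like'' sub-polylines, then bound the remaining ones by the filled-region argument---matches the paper, but the way you set up the dichotomy has a real gap. The $\zeta$-handle/$\zeta$-spiral distinction from \cref{le:handles,le:spirals} is meaningful only in the outer-face situation, where $\zeta$ connects the frame $R$ to a hole and a sub-polyline can either enclose the hole (spiral) or not (handle). Here $\zeta$ is an internal chord of a simply-connected face: every sub-polyline $e^\zeta_j$ together with the corresponding segment of $\zeta$ bounds a region lying entirely in the interior of $f$, so by your definition \emph{every} middle piece is a handle and there are no spirals at all. You therefore cannot ``remove handles and then count spirals''; either you would be claiming that every edge crosses $\zeta$ at most twice (too strong), or the handle-removal step breaks down because nested pieces of $e$ and of other edges live inside the enclosed region and cannot all be pushed across $\zeta$ without disconnecting them from their continuations on the other side.

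What the paper does instead is replace the handle notion by \emph{shortcut} pairs: two horizontal segments $s_i,s_j$ of $e$ whose crossing points with $\zeta$ are \emph{consecutive on $\zeta$} and whose outer endpoints lie on the same side. These are exactly the pieces that the \cref{le:handles}-style rerouting can eliminate. Once no shortcut pair remains, the crossing points along $\zeta$ are necessarily visited by $e$ in the nested order $z_1,z_q,z_2,z_{q-1},\dots$; it is this nesting (not a spiral-around-a-hole picture) that lets one define the regions $R_i$ and run the filled/unfilled argument to get the bound $k$. Also, cleanness plays no role here: the paper's proof of \cref{le:tech} never invokes it, and your plan to ``lean on cleanness'' to rule out degenerate configurations is unnecessary. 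So the fix is to replace your handle/spiral classification by the consecutive-on-$\zeta$ shortcut criterion and to add the observation about the nesting order; the rest of your plan then goes through essentially as in \cref{le:spirals}.
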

\begin{proof}
The proof adopts similar arguments as in \cref{le:handles,le:spirals}, although in a slightly different setting. Let $\Gamma(G_f)$ be a solution and let $e$ be an edge that crosses $\zeta$ more than $k$ times. By symmetry, we can assume $\zeta$ is vertical. Consider a walk along the polyline representing $e$ from one end-vertex to the other, and let $(s_1,s_2,\dots,s_h)$ be the ordered sequence of edge-segments traversed along this walk. Observe that only the horizontal edge-segments of this sequence can cross $\zeta$, hence it must be $h>2k$.  Also, let $z_1,z_2,\dots,z_q$ be the crossings between $e$ and $\zeta$ ordered from top to bottom. 

Next, consider two horizontal segments $s_i$ and $s_j$, with $1<i<j<h$ and $j-i>2$. Let $e_{ij}$ be the polyline $(s_i,s_{i+1},\dots,s_j)$. Denote by $z_{i^*}$ ($z_{j^*}$) the crossing point between $z_i$ ($z_j$) and $\zeta$. Also, denote by $p_i$ and $p_j$ the endpoints of $e_{ij}$. We say that $s_i$ and $s_j$ can be \emph{shortcut} if $j^*=i^*+1$, and  $p_i$ and $p_j$ are both to the left or both to the right of $\zeta$.  Then an argument analogous to \cref{le:handles} shows that an equivalent solution exists in which $e_{ij}$  can be redrawn such that all edge segments $s_l$ with $i<l<j$ are vertical  and $e^*_{ij}$ does not cross $\zeta$ anymore. 

Based on this, we shall assume that $e$ contains no pair of segments that can be shortcut. The crossing points along $\zeta$ partitions $e$ into $q+1$ polylines, called $\zeta$-\emph{pieces}, which we denote by $e^\zeta_1,e^\zeta_2,\dots,e^\zeta_{q+1}$, such that $e^\zeta_i$ an $e^\zeta_{i+1}$ touch at a point on $\zeta$. Observe that when walking along $e$, the crossing points are visited in the order $z_1,z_q,z_2,z_{q-1},\dots,z_{\lceil q/2 \rceil}$ (or its reverse). By observing that $\zeta$-pieces behave analogously as $\zeta$-spirals, we can follow the lines of the proof of \cref{le:spirals}. Consider two consecutive  $\zeta$-pieces $e^\zeta_i$ and $e^\zeta_{i+1}$, and the region $R_i$ of the plane bounded by $\zeta$ and by $e^\zeta_i$ and $e^\zeta_{i+1}$. If $R_i$ contains a missing vertex in its interior, we call it \emph{filled}. Otherwise $R_i$ is not filled, and its interior is either completely empty or it contains some edge-segments. Suppose first that $R_i$ is completely empty. Then we can simply redraw $e^*$ such that in the new instance the number of crossings along $\zeta$ is one less and no new bends are introduced (but some are removed). Suppose now that $R_i$  contains some edge-segments (but no vertices because it is not filled). In this case, any edge-segment in $R_i$ is part of a $\zeta$-piece made by some other polyline $e'$. (Otherwise the drawing would either be not connected or a pair that can be shortcut.) Then let $O_1$ be the ordered sequence of $\zeta$-pieces that we encounter when walking along $\zeta$ between the first and the last endpoint of $e^\zeta_i$, and similarly let $O_2$ be the ordered sequence of $\zeta$-pieces that we encounter when walking along $\zeta$ between the first and the last endpoint of $e^\zeta_{i+1}$. Let $O^*_1$ and $O^*_2$ be the sequences obtained from $O_1$ and $O_2$ by replacing each $\zeta$-piece with the corresponding edge it belongs to (note that no edge appears more than once in any of the two sequences). We claim that $O^*_1$ and $O^*_2$ are identical. If this is true,  then we can redraw each edge in $O^*_1=O^*_2$, and again redraw $e^*$  so that it crosses $\zeta$ one less time.  By iterating this procedure we obtain a drawing in which the number of crossings of $e$ along $\zeta$ is at most $k$ and the overall number of bends did not increase.

We conclude by observing that, if $e$ crosses $\zeta$ more than once, then one of its endpoints is enclosed in  a region of the plane bounded by a $\zeta$-piece $e^\zeta$ and the line-segment between the two crossing points that $e^\zeta$ makes with $\zeta$. This observation will be useful in the proof of next lemma.
\end{proof}

\begin{lemma}
\label{lem:sectorgrid}
There exists a solution such that each feature point not in $\Gamma(H_f)$ lies on a subsector-grid point of some subsector.
\end{lemma}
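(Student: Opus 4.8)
The plan is to start from an arbitrary solution $\Gamma(G_f)$ and transform it, in three phases and without ever increasing the number of bends, into a \sequiv solution in which every missing feature point coincides with a subsector-grid point. Throughout, I would use two structural facts: first, that each subsector is an axis-parallel rectangle, since by construction it contains no reflex corner of $f$ in its interior; and second, that each side of a subsector lies either on the boundary of $f$ or on a projection of a reflex corner, and hence (in the latter case) on a horizontal or vertical line-segment whose endpoints are on $\partial\Gamma(H_f)$ and whose interior is disjoint from every edge-segment of $\Gamma(H_f)$, so that \cref{le:tech} applies to it.

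\textbf{Phase 1 (local complexity).} I would first replace $\Gamma(G_f)$ by an equivalent solution in which, for every subsector $v$, the number of missing feature points in $v$ together with the number of missing edge-segments crossing $v$ is at most $\gridsize(k)$; consequently the number of distinct $x$-coordinates (``columns'') and $y$-coordinates (``rows'') among these points and crossing segments inside $v$ is at most $\gridsize(k)$ as well. To obtain this, apply \cref{le:tech} to every side of every subsector so that each missing edge crosses each such side at most $k$ times; combined with the fact that at most $4k$ missing edges are drawn in $f$, and with the closing observation in the proof of \cref{le:tech} (an edge crossing one of these cuts more than once has one of its two endpoints trapped in a bounded region delimited by the cut), this limits how often each missing edge can ``visit'' a single subsector. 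Since a subsector is a rectangle free of reflex corners, a \cref{le:tech}-style local rerouting then lets us assume each such visit uses only a number of bends bounded by a polynomial in $k$ — it needs to detour only around the at most $k$ missing vertices and the $\bigoh(k^2)$ other missing edge-segments present inside the subsector — which yields the stated bound and, upon careful accounting, the explicit value $\gridsize(k)=\fgrid$.

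\textbf{Main obstacle.} I expect this first phase to be the crux, for two reasons. First, \cref{le:tech} is stated for a single cut $\zeta$, whereas here it must hold at all (polynomially many) subsector sides simultaneously; I would handle this by applying the reroutings of \cref{le:handles,le:spirals,le:tech} in a carefully chosen order (innermost trapped regions first) and verifying that each such rerouting, which can be confined to an arbitrarily thin neighbourhood of its own cut, increases neither the number of crossings of any other cut nor the total number of bends. Second, pinning down the exact polynomial requires careful bookkeeping of vertices, visits, and bends per visit, which I would defer to a detailed calculation.

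\textbf{Phases 2 and 3 (compression and snapping).} Given the Phase-1 bound, recall the skeleton: each subsector $v$ carries a point $p_v$, with all points of a common subsector-row on one horizontal segment and all points of a common subsector-column on one vertical segment, and $\mathcal{S}_v$ is the $\gridsize(k)\times\gridsize(k)$ grid of side $\epsilon$ centred at $p_v$. Using repeated $(\sigma,\ell)$-strip removal/addition operations and \cref{le:box}, processing subsector-columns and then subsector-rows, I would shrink the drawing so that, simultaneously for every $v$, all missing feature points lying in $v$ fall inside $\mathcal{S}_v$; by \cref{prop:sigma} this keeps the drawing \sequiv, so the bend count is unchanged. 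Finally, snap every missing feature point onto the nearest point of the appropriate $\mathcal{S}_v$: since $v$ contains at most $\gridsize(k)$ relevant columns and rows, this can be done so that distinct columns (resp.\ rows) map to distinct grid columns (resp.\ rows) in the same order, which keeps every segment horizontal or vertical, introduces no new bend, and — because the skeleton aligns the grids of subsectors sharing a row or a column — stays consistent across adjacent subsectors. Planarity is preserved since the left-to-right and bottom-to-top orders of parallel segments are preserved and $\epsilon$ is small enough that projections of grid points meet the same segments of $\Gamma(H_f)$; and since each $\mathcal{S}_v\subseteq v\subseteq f$ and neither the anchors nor their incident ports are moved, the result remains an extension of $\Gamma(H_f)$ consistent with $\mathcal{P}$. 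Hence it is a solution with all missing feature points on subsector-grid points.
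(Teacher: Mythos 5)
Your overall architecture matches the paper's: bound the combinatorial complexity of the missing drawing inside each subsector-column and subsector-row, then remap feature points to grid points by rank while preserving horizontal/vertical orders and colinearities (which keeps the drawing \sequiv). However, the step you yourself flag as ``the crux'' --- Phase 1 --- is precisely where the paper's proof does its real work, and your sketch of it has a genuine gap. The assertion that a polyline visiting a subsector ``needs to detour only around the at most $k$ missing vertices and the $\bigoh(k^2)$ other missing edge-segments'' is an intuition about what a bend-minimal routing \emph{could} look like, not an argument that an arbitrary solution can be \emph{redrawn} that way without increasing its total bend count or breaking planarity against the other missing polylines (which may themselves be convoluted). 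The paper supplies the missing mechanism: it decomposes each polyline into maximal \emph{staircases} (alternating left/right turns), defines when a consecutive pair of segments is \emph{blocked} by an anchor or missing vertex, shows that any unblocked pair can be locally merged away (reducing bends), and then counts that at most $\bigoh(k)$ pairs can ever be blocked because each relevant vertex blocks at most two pairs. Combined with a cut $\zeta$ chosen to \emph{maximize} intersections with the polyline (so that a single application of Lemma~\ref{le:tech}, plus its closing observation about an endpoint being trapped, controls both the number of crossings and the number of distinct pieces of one edge in a subsector-column), this yields the explicit $28k+20$ bends per piece and hence $\gridsize(k)$. Without some such redrawing argument, Phase 1 does not go through, and the ``apply Lemma~\ref{le:tech} to all subsector sides simultaneously in a careful order'' route you propose is exactly the kind of interaction the paper's choice of a single maximizing $\zeta$ per polyline is designed to avoid.

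A secondary issue: your counting and snapping are phrased per subsector, whereas the paper counts distinct $x$-coordinates over an entire \emph{subsector-column} (and $y$-coordinates over a subsector-row) and assigns each feature point the rank of its column within that whole strip. This global-per-strip ranking is what guarantees that a vertical segment spanning several subsectors of the same column receives the same grid-column index in each of them; ranking independently inside each subsector, as your Phase 3 reads, could assign the same $x$-coordinate different ranks in adjacent subsectors and so introduce a jog (a new bend) at the boundary. Your appeal to the skeleton aligning the grids addresses positions but not ranks. Finally, the claim that every subsector is an axis-parallel rectangle is not established by the construction (only critical reflex corners are projected) and is not needed by the paper's argument.
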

\begin{proof}
Consider a solution $\Gamma(G_f)$. 
Let $\mathcal{A}$ be a subsector-column. Let $\Gamma_\mathcal{A}$ be the drawing formed by all edge-segments and vertices that do not belong to $\Gamma(H_f)$ and that lie in $\mathcal{A}$. We first argue that if a polyline representing (part of) an edge in $\Gamma_\mathcal{A}$ contains a large number of bends, then we can redraw it and obtain an equivalent solution $\Gamma'(G_f)$. This holds vacuously if the subsectors in $\mathcal{A}$ are line-segments or  single points, hence suppose this is not the case.

\begin{figure}[t]
    \centering
    \begin{minipage}[t]{0.3\textwidth}
    \centering
        \includegraphics[page=3,width=\textwidth]{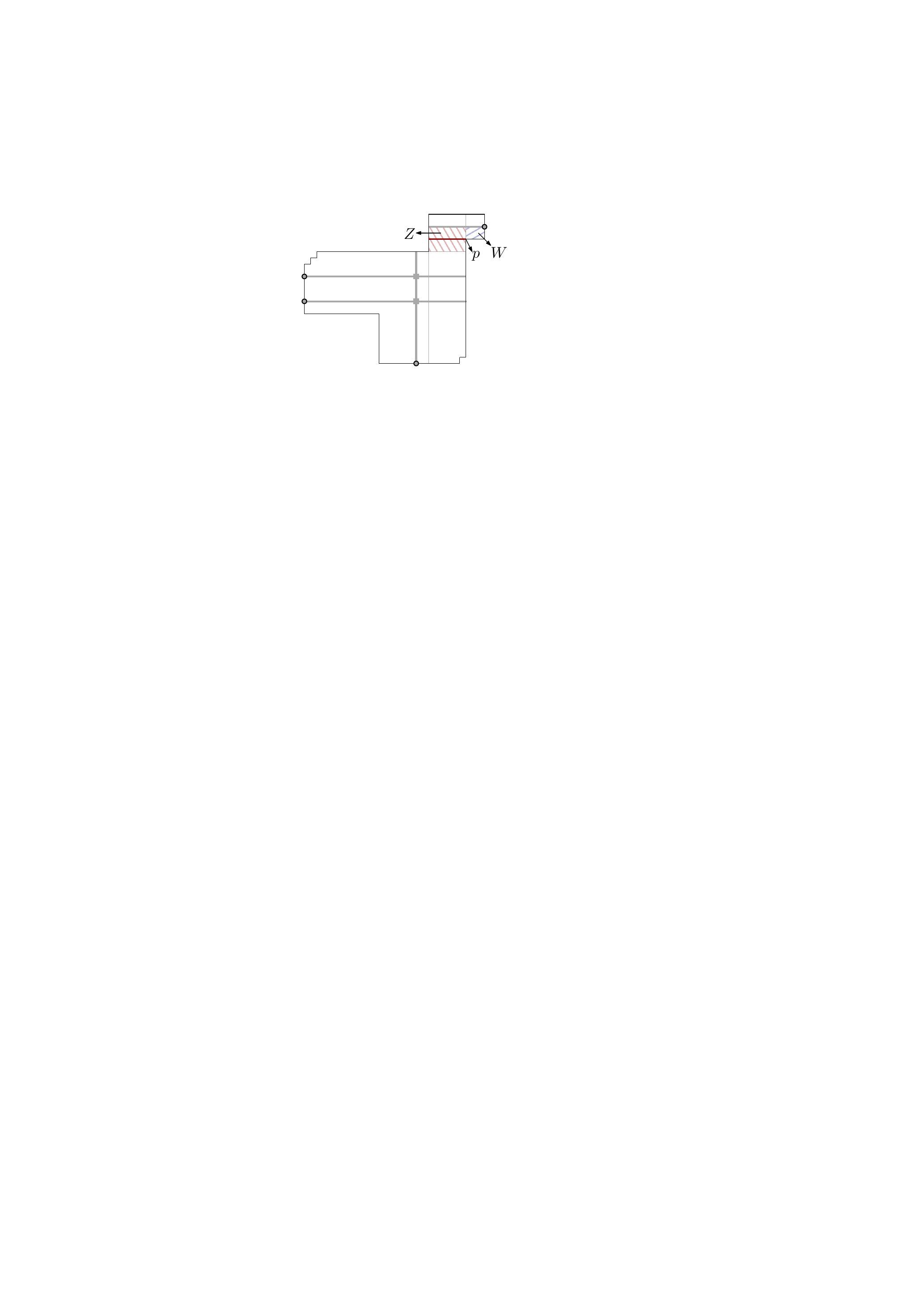}
        
    \end{minipage}\hfil
    \begin{minipage}[t]{0.3\textwidth}
    \centering
        \includegraphics[page=4,width=\textwidth]{figs/sectors.pdf}
    \end{minipage}\vspace{0.3cm}
    \caption{Illustration for the proof of \cref{lem:sectorgrid}.
    \label{fig:staircase}}
\end{figure}
Let $e^*$ be a polyline in $\Gamma_\mathcal{A}$ representing (part of) and edge $e$, and let $p_1$ and $p_2$ be its two endpoints. Observe that each $p_i$ ($i=1,2$) may be a vertex that lies in $\mathcal{A}$ or on its boundary, or an inner point of a longer polyline that enters $\mathcal{A}$ from another subsector-column. Consider a walk along $e^*$ from $p_1$ to $p_2$ and let $(s_1,s_2,\dots,s_h)$ be the sequence of edge-segments of $e^*$ ordered according to this walk. For each pair of consecutive edge-segments along this walk, we either make a left turn or a right turn (and hence a bend). We call  a \emph{staircase} a maximal sequence of edge-segments for which we alternate left and right turns.  

Let $e^*_s=(s_c,s_{c+1},\dots,s_d)$ be a staircase in $e^*$ with at least four segments, and let $s_i$ and $s_{i+1}$ be two consecutive edge-segments internal to $e^*_s$, that is, $i=c+1,d-1$. Without loss of generality, we can assume $s_i$ is horizontal and $s_{i+1}$ is vertical. Also, we say that $s_i$ and $s_{i+1}$ form an \emph{upward-step}, if the point shared between $s_i$ and $s_{i+1}$ is the topmost endpoint of $s_{i+1}$, while $s_i$ and $s_{i+1}$ form a \emph{downward-step} otherwise. A vertex $v$ \emph{blocks} pair $(s_i,s_{i+1})$ (and the pair $(s_i,s_{i+1})$ is  \emph{blocked}), if: (a) its $x$-coordinate falls within the horizontal range of $s_i$ (including its endpoints),  (b) it is below $s_{i+1}$ in case of an upward-step  or above $s_i$ in case of a downward-step, (c) it the end-vertex of a missing edge. Note that if $v$ blocks $(s_i,s_{i+1})$, then it is either an anchor or a missing vertex. On the other hand, for a pair of segments that is not blocked, we can apply a simple redrawing technique illustrated in \cref{fig:staircase}, which merges $s_i$ and $s_{i+1}$ with $s_{i-1}$ and $s_{i+2}$, respectively. As a consequence, we shall assume that each staircase contained in a polyline $e^*$ in $\Gamma_\mathcal{A}$ contains only blocked pairs. 

Next, let $\zeta$ be a horizontal or vertical line-segment having its endpoints on some edge-segments of $\Gamma(H_f)$ and that does not intersect any other edge-segment of $\Gamma(H_f)$. Also, choose $\zeta$ such that it maximizes the number of intersections with $e^*$.  If $\zeta$ crosses $e^*$ only once, then $e^*$ is a staircase. Also, no vertex can block more than two pairs of consecutive edge-segments, else there would be a line $\zeta'$ intersecting $e^*$ in two points, which would contradict the choice of $\zeta$. Then $e^*$ contains at most $10k$ blocked pairs, therefore it has at most $10k+2$ bends. Hence, suppose that $\zeta$ crosses $e^*$ more than once. By \cref{le:tech}, we shall assume that $e^*$ crosses $\zeta$ at most $k$ times. Also, at the end of the proof of \cref{le:tech}, we observed that if $e^*$ crosses $\zeta$ more than once, then one of its endpoints is enclosed in  a region of the plane bounded by a $\zeta$-piece $e^\zeta$ and the line-segment between the two crossing points that $e^\zeta$ makes with $\zeta$. Hence, there can be at most two line-segments, $\zeta$ and $\zeta'$, that cross $e^*$ more than once and on different edge-segments. 

Then $e^*$ can be partitioned in at most $k+1$ $\zeta$-pieces and $k+1$ $\zeta'$-pieces. Each of these pieces contains at least $2$ bends, which further partition it into $3$ smaller polylines that run between the endpoints of $e^*$ and these bends. Since we can always avoid pairs of edge-segments that can be shortcut, each of these polylines either contains at most $3$ bends or it is a staircase. Thus, in total, $e^*$ contains at most $18(k+1)+10k+2=28k+20$ bends. 

The next question towards proving our statement is what is the maximum number of disjoint polylines in $\Gamma_\mathcal{A}$ that represent the same edge $e$. By \cref{le:tech}, one verifies that there is always an equivalent solution in which this number is at most $\lceil k/2 \rceil$. Thus, in total, a single edge may contribute with at most $(k+1)(28k+20)$ bends in $\Gamma_\mathcal{A}$. (In fact, this is an over estimation because if an edge enters in the same subsector-column more than once, then each piece is a staircase in the worst case.)

Since a subsector-columnn contains at most $|X|=k$ vertices and $|E_X| \le 4k$ polylines, the total number of feature points in $\Gamma_\mathcal{A}$ is at most $k+4k(k+1)(28k+20)$. Also, for each polyline in $\Gamma_\mathcal{A}$, we further consider two feature points for its endpoints (which might not be vertices but inner points). This adds at most $4k(k+1)$ additional feature points, hence we have $k+4k(k+1)(28k+20)+4k(k+1)=\fgrid=\gridsize(k)$. A symmetric argument proves that in a subsector-row we need at most $\gridsize(k)$ points. 

Consider now any subsector-column $\mathcal{A}$. Recall that $\Gamma_\mathcal{A}$ partitions its vertices into columns, that is, maximal sets of vertices with the same $x$-coordinate (but different $y$-coordinates). By the argument above, we know that the number of different columns is at most $\gridsize(k)$ and we can order them from left to right. Next, we assign to each feature point $p$ in $\Gamma_\mathcal{A}$ a number $x_p$ equal to the rank of $p$'s column. By repeating this procedure for all subsector-columns (in any order), we have that all feature points $p$ receive a number $x_p$. 

Consider now any subsector-row $\mathcal{B}$ and its drawing $\Gamma_\mathcal{B}$. $\Gamma_\mathcal{B}$ partitions its vertices  into rows, that is, maximal sets of vertices with the same $y$-coordinate (but different $x$-coordinates). Again, we know that the number of different rows is at most $\gridsize(k)$ and we can order them from top to bottom. Next, we assign to each feature point $p$ in $\Gamma_\mathcal{B}$ a number $y_p$ equal to the rank of $p$'s row. By repeating this procedure for all subsector-rows (in any order), we have that all feature points $p$ receive a number $y_p$. 

Finally, we map each feature point $p$ of $\Gamma(G_f)$ to the point of the subsector grid of its subsector at the intersection between the $x_p$-th column of the grid and the $y_p$-th row of the grid. Since the  horizontal and vertical order between pairs of feature points is preserved, as well as horizontal and vertical colinearities, one verifies that the resulting drawing and $\Gamma(G_f)$ are \sequiv. 
\end{proof}

From Lemmas~\ref{lem:subsectors},~\ref{lem:sectorgrid}
and by setting $\newgridsize(k)=\gridsize(k)^2\cdot (8k)^2$, we obtain:

\begin{corollary}
\label{cor:grid}
Given an instance $\mathcal{I}$ of \FTBOE we can construct a point set (called a \emph{sector grid}) in time $\bigoh(|\mathcal{I}|)$ with the following properties: \textup{(1)} $\mathcal{I}$ admits a solution whose feature points all lie on the sector grid, and \textup{(2)} each sector contains at most $\newgridsize(k)$ points of the sector grid.
\end{corollary}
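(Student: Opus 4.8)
The plan is to let the \emph{sector grid} be the union of the subsector-grids $\mathcal{S}_v$ over all subsectors $v$ of all sectors of $f$, together with the (already fixed) feature points of $\Gamma(H_f)$. With this definition, property~(1) will follow essentially verbatim from \cref{lem:sectorgrid}, while property~(2) will reduce to bounding how many subsector-grid points can lie inside a single sector, which is where \cref{lem:subsectors} comes in; the time bound is then a matter of observing that each step of the construction is a direct local computation.

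First I would prove property~(2). Fix a sector $S$. By \cref{lem:subsectors}, for each direction $d \in \directions$ there are at most $4k$ $(S,d)$-critical reflex corners, so projecting all such corners into $S$ (over all four choices of $d$) yields a grid that partitions $S$ into at most $(8k)^2$ subsectors, as recorded in the discussion preceding \cref{cor:grid}. Each subsector $v$ carries a subsector-grid $\mathcal{S}_v$ consisting of exactly $\gridsize(k)\times\gridsize(k)=\gridsize(k)^2$ points (with $\gridsize(k)=\fgrid$), and the subsector-grids of distinct subsectors are pairwise disjoint. Hence $S$ contains at most $(8k)^2\cdot\gridsize(k)^2=\newgridsize(k)$ points of the sector grid, which is exactly property~(2).

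For property~(1), \cref{lem:sectorgrid} supplies a solution $\Gamma(G_f)$ of $\mathcal{I}$ in which every feature point not belonging to $\Gamma(H_f)$ coincides with a subsector-grid point of some subsector; the only remaining feature points of $\Gamma(G_f)$ are those of $\Gamma(H_f)$, which we placed into the sector grid by definition. Thus every feature point of this solution lies on the sector grid, giving property~(1). Finally, the claimed running time follows because each ingredient---the bend distances and bend-vectors and hence the sectors, the $(S,d)$-critical reflex corners and hence the subsectors, the skeleton points $p_v$, and the subsector-grids $\mathcal{S}_v$---is produced by a direct local procedure from $\Gamma(H_f)$ and $\mathcal{P}$.

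I do not expect a genuine obstacle here beyond bookkeeping: the two substantive ingredients---the existence of a solution whose missing feature points are confined to the subsector-grids, and the bound on the number of subsectors per sector---are precisely \cref{lem:sectorgrid} and \cref{lem:subsectors}. The only point that needs mild care is to fix the internal choices of each subsector, namely the two line-segments determining the skeleton point $p_v$ and the common scale $\epsilon$ of the subsector-grids, consistently across all subsectors, so that horizontal and vertical projections of grid points meet the edge-segments of $\Gamma(H_f)$ exactly as the preceding construction requires; this has already been arranged when the subsector-grids were set up.
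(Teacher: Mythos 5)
Your proposal is correct and follows exactly the route the paper takes: the corollary is obtained by taking the union of the subsector-grids, bounding the subsectors per sector by $(8k)^2$ via \cref{lem:subsectors}, multiplying by the $\gridsize(k)^2$ points per subsector-grid to get $\newgridsize(k)$, and invoking \cref{lem:sectorgrid} for property (1). The paper gives no further detail than this, so your bookkeeping (including folding the pre-drawn feature points of $\Gamma(H_f)$ into the point set) is a faithful and slightly more careful rendering of the same argument.
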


\section{Exploiting the Treewidth of Sector Graphs}
\label{sec:tw}
In this section, we complete the proof of our fixed-parameter tractability result by first showing that the sector graphs in fact have treewidth bounded by a function of the parameter $k$, and then by using this fact to design a dynamic programming algorithm solving \FTBOE.

\subsection{Sector Graphs Are Tree-Like}
We begin by introducing some notation that will be useful in this subsection. Let $\mathcal{P}=((a_1,d_1),\dots,(a_q,d_q))$ be the ordered set of port candidates for the considered face $f$. Also, $q\leq 4k$, because the degree of the vertices being added is at most $4$. For each $1\leq i\leq q$, let $\mathcal{P}_i=((a_1,d_1),\dots,(a_i,d_i))$ be a prefix of length $i$ of $\calP$. 
For each $1\leq i\leq q$, we denote by $\mathcal{F}_i$ and $\calG_i$ the set of sectors and the sector graph, respectively, obtained by considering the bend distances to $\mathcal{P}_i$.
Using this terminology, we obtain that the graph $\mathcal{G}_q$ is precisely the sector graph of our initial instance, which we will also simply denote as $\mathcal{G}$. Furthermore, for a sector $F\in V(\calG_t)$ we denote by $\calU_{F}^{t+1}$ the set of sectors in $\calG_{t+1}$ that $F$ is partitioned into when one additionally considers bend distances to $(a_{t+1},d_{t+1})$; in other words, $\calU_{F}^{t+1}$ is the unique set with the property that $\bigcup_{Q\in \calU_{F}^{t+1}} Q=F$.

\begin{lemma}
\label{lem:secttw-1}
The sector graph $\mathcal{G}_1$ is a tree.
\end{lemma}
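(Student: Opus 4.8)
The plan is to analyze the structure of sectors with respect to a single port candidate $(a_1,d_1)$. Here each sector is a maximal connected set of points in $f$ with the same bend distance to $(a_1,d_1)$, so the sectors are precisely the ``level sets'' $L_j = \{p \in f : \bdist(p,(a_1,d_1)) = j\}$ partitioned into their connected components. First I would observe that $\mathcal{G}_1$ is connected (this is already noted in the excerpt for the general sector graph, and the same argument applies), so it suffices to show $\mathcal{G}_1$ is acyclic. I would prove acyclicity by establishing the key monotonicity fact: \emph{if sectors $A$ and $B$ are adjacent in $\mathcal{G}_1$, then their bend distances to $(a_1,d_1)$ differ by exactly $1$}. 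Indeed, if $A$ is $d$-adjacent to $B$ via a ray from $p \in A$, then moving a short distance along that ray from a point of $A$ just across the boundary into $B$ changes the bend distance by at most $1$ (one can adapt any optimal polyline by appending/removing a short straight segment), and it must change by at least $1$ since otherwise $A$ and $B$ would be the same sector.

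**Ruling out cycles.** Given this, I would argue as follows. Orient each edge of $\mathcal{G}_1$ from the endpoint of smaller bend distance to the endpoint of larger bend distance; by the monotonicity fact every edge goes ``up'' by exactly one level. Now suppose for contradiction that $\mathcal{G}_1$ contains a cycle $C$. Consider a sector $F$ on $C$ whose bend distance $j$ is maximal among sectors of $C$. Then both neighbors of $F$ on $C$ have bend distance $j-1$, i.e., both edges of $C$ incident to $F$ are oriented \emph{into} $F$. So $F$ has two neighbors $A, B$ on the cycle, both at level $j-1$, and $A \neq B$. The heart of the argument is to show this is impossible for a single-port instance: I would show that the set of points at bend distance $j$ forms, at each level, a region that is adjacent ``from below'' to a \emph{single} component of the lower level. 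Concretely, the region of points reachable from $(a_1,d_1)$ with at most $j$ bends via polylines entering $a_1$ from direction $d_1$, intersected with $f$, can be shown to be connected (it is a monotone union of axis-parallel ``reachable rectangles'' growing outward from $a_1$), and its topological boundary within level $j{-}1 \cup$ level $j$ consists of a single sector at level $j-1$ bordering $F$. Hence $A = B$, a contradiction.

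**The main obstacle.** The delicate part — and where I would spend the most care — is the claim that for a single port candidate the ``$\le j$ bends reachable region'' inside $f$ is connected, and more importantly that each component of level $j$ borders exactly one component of level $j{-}1$. This requires a genuine geometric argument about how bend-distance level sets propagate from a single source direction: the reachable region after $j$ bends is obtained from the reachable region after $j{-}1$ bends by sweeping horizontal/vertical visibility rays, and one must check that this sweep, carried out inside the (possibly non-convex but simply connected, after the cleaning step) polygon $f$, cannot split one component into two or merge the boundary with two distinct lower components. I would handle this by induction on $j$, using that $f$ is a simply connected orthogonal polygon (after the preprocessing of Section~\ref{sec:prep}) and that a single bend-distance increment corresponds to one ``staircase expansion'' step, which preserves the property that each level component is incident to a unique lower-level component. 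Once that structural lemma is in hand, the acyclicity — and hence, together with connectivity, the tree property — of $\mathcal{G}_1$ follows immediately.
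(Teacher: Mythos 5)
Your reduction is the same one the paper uses: orient each adjacency from the smaller to the larger bend distance (adjacent sectors indeed differ by exactly one), take a sector $F$ of maximal bend distance $j$ on a hypothetical cycle, and derive a contradiction from the claim that every sector at level $j$ borders exactly one sector at level $j-1$. The problem is that this last claim --- which you yourself identify as ``the heart of the argument'' and ``the main obstacle'' --- is never actually proved. Your plan for it (``induction on $j$\,\dots\ a single bend-distance increment corresponds to one staircase expansion step, which preserves the property that each level component is incident to a unique lower-level component'') restates the statement to be shown as the property to be preserved; the inductive step is exactly the missing content. Connectivity of the ``reachable with at most $j$ bends'' region is the easy half and does not by itself prevent a single level-$j$ component from touching two distinct level-$(j-1)$ components.

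The missing idea, which the paper supplies, is the following. Call an \emph{$i$-interface} a maximal line segment with points of bend distance $i-1$ on one side and a level-$i$ sector on the other. Each such interface must touch the boundary of $f$ at \emph{both} of its endpoints --- otherwise it could be prolonged, contradicting maximality. Consequently, when traversing the boundary of the union of all sectors of bend distance at most $i-1$ between two distinct $i$-interfaces, one must pass through a piece of $\partial f$. If some level-$i$ sector touched two distinct $i$-interfaces, one could form a closed curve inside $f$ running from $a_1$ through the first interface, across that sector, back through the second interface and to $a_1$; this curve would enclose a piece of $\partial f$, which is impossible since $f$ is (after the preprocessing) a simply connected inner face. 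That is the geometric fact your sweep/staircase induction would have to reproduce, and without it the acyclicity argument does not close.
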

\begin{proof}
We prove the claim by construction. First, observe that the $0$-bend sector is merely a ray extending out of the port $(a_1,d_1)$, and the $1$-bend sectors are obtained as projections of this ray in the two directions orthogonal to the ray. For $1\leq i \leq q$, let the $i$-\emph{interface} be the set of line-segments that touch one sector at bend distance $i$ on one side and points at bend distance $i-1$ on the other side. In the case of the $0$-bend sector, we set the $0$-interface to be the port. Crucially, for every $i>0$ we observe that each $i$-interface touches the boundary of $f$ at both of its endpoints---indeed, otherwise the given $i$-interface could be extended in the direction in which it does not touch the boundary of $f$.

Let the $j$-\emph{subface} be the union of all sectors with bend distance at most $j$.
For each integer $i$, all sectors at bend distance $i$ can be constructed from the $i$-interfaces by simply projecting them into the previously unprocessed part of $f$. Crucially, for any pair of distinct $i$-interfaces $p$, $q$, we note that when traversing the boundary of the $(i-1)$-subface between $p$ and $q$, we must intersect the boundary of the original face $f$ due to the observation concerning the endpoints of interfaces at the end of the previous paragraph. Hence, each sector at bend distance $i$ can only touch precisely one $i$-interface---indeed, it must touch at least one such interface by the connectivity of $f$, but if it were to touch two such interfaces there would exist a closed curve in $f$ from $a_1$ through one such interface, the given sector, the other interface and back to $a_1$ which would enclose a piece of the boundary of $f$.

Hence, we conclude that each sector at bend distance $i$ can only be adjacent to a single sector at bend distance $i-1$, and this rules out the existence of cycles in the sector graph.
\end{proof}

 Lemma~\ref{lem:secttw-1} will be used as a base of an inductive argument establishing a bound on the treewidth of $\calG$. See \cref{fig:sectors-example} for an example of the sectors for two port candidates. We start by considering how each sector $F\in \calF_t$ maps to a subset $\calU_{F}^{t+1}$ of sectors in $\calF_{t+1}$. Towards this aim, let us now consider an arbitrary sector $F\in \calF_t$ for some $1\leq t\leq q$. We say that a line segment $\delta$ on the boundary of $F$ is an $F$-\emph{baseline} if (1) each point in $F$ can be reached by a ray starting at and orthogonal to $\delta$, and (2) $\delta$ touches $F$ on one side and points in $f\setminus F$ on the other side. When $F$ is clear from context, we simply use baseline for brevity.

\begin{figure}
\centering
\scalebox{0.7}{
\begin{tikzpicture}

\fill[fill=red!10!white] (0,.5) rectangle (4,2);
\fill[fill=blue!10!white] (1.5, 0) rectangle (2.5,.5);
\fill[fill=blue!10!white] (3, 2) rectangle (4,4);
\fill[fill=green!10!white] (4, 3) rectangle (5,4);
\fill[fill=white!10!white] (4.5, 3) rectangle (5,3.5);
\draw[thin] 
(0, .5) -- (1.5, .5) -- (1.5, 0) -- (2.5,0) -- (2.5, .5) -- (4, .5) -- (4, 3)  -- (4.5, 3) -- (4.5, 3.5) -- (5, 3.5) -- (5, 4) -- (3, 4) -- (3, 2) -- (0, 2) -- (0, .5);

\draw [->] (0.9, 1.2) -- (.6, 1.2);
\draw [->] (1.1, 1.2) -- (1.4, 1.2);
\draw [->] (2, .4) -- (2, .1);
\draw [->] (3.5, 2.1) -- (3.5, 2.4);
\draw [->] (4.1, 3.5) -- (4.4, 3.5);

\draw[very thick, red!50!white]
(0.98, 2) -- (0.98, .5)
(1.02, 2) -- (1.02, .5);

\draw[very thick, blue!50!white]
(1.5, .5) -- (2.5, .5)
(3, 2) -- (4, 2);

\draw[very thick, green!50!white]
(4, 3) -- (4, 4);

\draw
(1, 2) -- (1, .5);

\node[circ, label=above:{\Large $(a_1, d_1)$}] (v) at (1, 2) {};

\end{tikzpicture}
}
\scalebox{0.7}{
\begin{tikzpicture}

\fill[fill=red!10!white] (0,.5) rectangle (4,2);
\fill[fill=red!10!white] (1.5, 0) rectangle (2.5,.5);
\fill[fill=red!10!white] (3, 2) rectangle (4,4);
\fill[fill=blue!10!white] (4, 3) rectangle (5,4);
\fill[fill=white!10!white] (4.5, 3) rectangle (5,3.5);

\draw[thin] 
(0, .5) -- (1.5, .5) -- (1.5, 0) -- (2.5,0) -- (2.5, .5) -- (4, .5) -- (4, 3)  -- (4.5, 3) -- (4.5, 3.5) -- (5, 3.5) -- (5, 4) -- (3, 4) -- (3, 2) -- (0, 2) -- (0, .5);

\draw [->] (2, 1.1) -- (2, 1.4);
\draw [->] (2, 0.9) -- (2, 0.6);
\draw [->] (4.1, 3.5) -- (4.4, 3.5);

\draw[very thick, red!50!white]
(0, 1.02) -- (4, 1.02)
(0, 0.98) -- (4, 0.98);

\draw[very thick, blue!50!white]
(4, 3) -- (4, 4);

\draw
(0, 1) -- (4, 1);

\node[circ, label=right:{\Large $(a_2, d_2)$}] (v) at (4, 1) {};

\end{tikzpicture}}
\scalebox{0.7}{
\begin{tikzpicture}

\fill[fill=red!10!white] (0,.5) rectangle (1,2);
\fill[fill=yellow!20!white] (0,0.5) rectangle (1,1);
\fill[fill=cyan!20!white] (1,.5) rectangle (4,1);
\fill[fill=blue!10!white] (1.5, 0) rectangle (2.5,.5);
\fill[fill=orange!20!white] (3, 2) rectangle (4,4);
\fill[fill=green!10!white] (4, 3) rectangle (5,4);
\fill[fill=teal!20!white] (1, 1) rectangle (4,2);
\fill[fill=white!10!white] (4.5, 3) rectangle (5,3.5);
\draw
(0, .5) -- (1.5, .5) -- (1.5, 0) -- (2.5,0) -- (2.5, .5) -- (4, .5) -- (4, 3)  -- (4.5, 3) -- (4.5, 3.5) -- (5, 3.5) -- (5, 4) -- (3, 4) -- (3, 2) -- (0, 2) -- (0, .5);

\draw[thick] (1.01, 1) -- (0.99, 1);
\draw[thick, purple!80!white] (1, 2) -- (1, 1);
\draw[thick, violet!80!white] (0, 1) -- (1, 1);
\draw[thick, olive!80!white] (1, 1) -- (1, .5);
\draw[thick, magenta!80!white] (1, 1) -- (4, 1);

\node[circ, label=above:{\Large $(a_1, d_1)$}] (v) at (1, 2) {};
\node[circ, label=right:{\Large $(a_2, d_2)$}] (v) at (4, 1) {};

\end{tikzpicture}
}
\caption{Sectors with respect to (a) the first port; (b) the second port; (c) $\calP_{2}$. For a sector of each color, the segment on the border highlighted with the same color is its baseline; for (c) different sectors have different colors, and notice that at the intersection of the rays from $(a_1, d_1)$ and $(a_2, d_2)$ there is also a single point sector.}
\label{fig:sectors-example}
\end{figure}
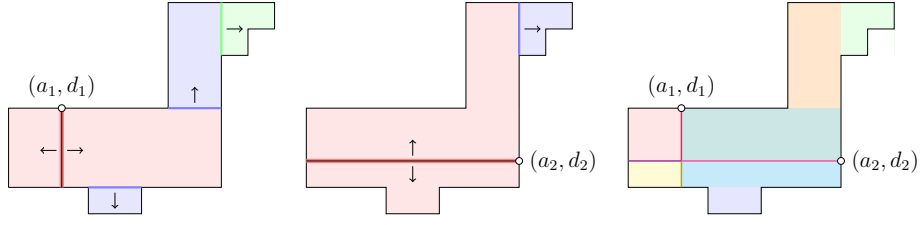

\begin{lemma}
\label{lem:sector-prorerty}
Each sector in $\calF_t$, $1\leq t \leq q$, admits at least one baseline.
\end{lemma}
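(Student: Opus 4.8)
The plan is to prove the claim by induction on $t$, in fact establishing the stronger statement that every sector $F\in\calF_t$ is a \emph{histogram}: there exist an axis-parallel segment $\delta\subseteq\partial F$ and a direction $d$ orthogonal to $\delta$ such that $F$ equals the orthogonal shadow of $\delta$ in direction $d$ (the union, over all $p\in\delta$, of the maximal sub-segment of the $d$-ray from $p$ that stays in $\overline F$), and $f\setminus F$ lies on the side of $\delta$ opposite to $d$. Any such witness $\delta$ is, by definition, an $F$-baseline, so the lemma follows. Degenerate sectors (single points, single segments) admit a baseline trivially, so I would focus on the non-degenerate case. For the base case $t=1$, I would invoke the structure already extracted in the proof of \cref{lem:secttw-1}: the $0$-bend sector is the ray out of $(a_1,d_1)$, and every sector $F$ at bend distance $i\ge 1$ is the orthogonal shadow inside $f$ of the unique $i$-interface $\delta$ it touches; here one side of $\delta$ lies in $F$ while the other touches the bend-distance-$(i-1)$ region, which is disjoint from $F$, so $\delta$ is an $F$-baseline and $F$ is a histogram over $\delta$. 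It is also useful to record that such a $\delta$ has both of its endpoints on $\partial f$.

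For the inductive step, first observe that $\calF_{t+1}$ is the common refinement of $\calF_t$ and the partition of $f$ by bend distance to $p^{*}:=(a_{t+1},d_{t+1})$, the cells of which are exactly the sectors with respect to the single port $p^{*}$. Since a connected component of the intersection of two regions lies inside---and hence coincides with---a connected component of $F\cap N$ for a unique component $F$ of the first and a unique component $N$ of the second, each sector $F'\in\calF_{t+1}$ is a connected component of $F\cap N$ for a unique $F\in\calF_t$ and a unique single-port sector $N$ of $p^{*}$; thus $F'\in\calU_{F}^{t+1}$. By the induction hypothesis $F$ is a histogram, say over a horizontal segment $\delta$ with $F$ lying above it; by the base case applied to the port $p^{*}$, $N$ is a histogram over a $p^{*}$-interface $\gamma$ whose endpoints both lie on $\partial f$. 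One then distinguishes the cases according to the orientation of $\gamma$ relative to $\delta$ and, when they are parallel, the relative opening directions; using that each line orthogonal to $\delta$ meets $F$ in at most one segment and each line orthogonal to $\gamma$ meets $N$ in at most one segment, I would show that the bend distance to $p^{*}$, restricted to the histogram $F$, has \emph{monotone staircase} level-set boundaries, so that each $F'\in\calU_{F}^{t+1}$ is again a histogram, with baseline either a sub-segment of $\delta$, a sub-segment of $\gamma$, or the single segment of $\partial F'$ separating $F'$ from the neighbour in $\calU_{F}^{t+1}$ at bend distance one smaller (the exception being when $F'$ attains the minimum bend distance to $p^{*}$ over $F$, in which case the $p^{*}$-wave enters $F$ through a portion of $\partial F$ that is itself an $F$-baseline or lies on $\partial f$, and that portion serves).

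The hard part is precisely the monotone-staircase claim in the last step: verifying that the candidate boundary piece of $F'$ is a single axis-parallel segment (not a general staircase) and that orthogonal rays from it remain inside $F'$. A priori the intersection of two histograms need not be a histogram over any of its boundary segments (think of a plus-shaped region, or a band with staircase top and bottom), so this requires more than orthogonal convexity. Ruling it out for $F'$ is exactly where one must use that $N$ is the shadow of a $p^{*}$-interface with both endpoints on $\partial f$---so that $\gamma$ can cross the histogram $F$ only by entering and leaving through $\delta$, through the roof of $F$, or through $\partial f$, in a controlled left-to-right pattern---rather than treating $N$ as an arbitrary histogram. I expect this to be a somewhat delicate but entirely elementary case analysis, relying only on planarity and on the interface/shadow description of single-port sectors already established in the proof of \cref{lem:secttw-1}; no extra machinery should be needed.
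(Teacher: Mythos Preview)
Your approach matches the paper's: induction on $t$, with the base case read off from the proof of \cref{lem:secttw-1}, and the inductive step using that each $F'\in\calF_{t+1}$ is a connected component of the intersection of some $F\in\calF_t$ (a histogram by induction) with a single-port sector for $(a_{t+1},d_{t+1})$ (a histogram by the base case), followed by a case split on whether the two baselines are parallel or orthogonal. The paper's inductive step is in fact terser than yours---it simply asserts that the projection of $\delta$ meets $\partial F'$ in a single line segment and declares this ``easy to see''---so your care in flagging this as the delicate point, and in isolating that the endpoints of $\gamma$ lie on $\partial f$ as the reason bad intersections (plus-shapes, two-sided staircases) cannot occur, is well placed and makes explicit what the paper leaves implicit.
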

\begin{proof}
We prove the claim by induction in the number of ports, where for $t=1$ the claim follows directly from the proof of Lemma~\ref{lem:secttw-1}, for which we observed that each sector with $\bdist=i$ is a projection of its $i$-interface (thus we can take it as the sector's baseline). For the inductive step, assume that for $\calP_{t-1}$, each sector $F\in \calF_{t-1}$ has an $F$-baseline. In particular, the unique sector $F\in \calF_{t-1}$ such that $F'\subseteq F$ has an $F$-baseline, say $\delta$, and hence a projection of $\delta$ also reaches every point on the boundary of $F'$. Recall that $F'$ is obtained as the intersection of $F$ with some particular sector with respect to bend distances only to a single port, which---as was argued in Lemma~\ref{lem:secttw-1}---is a projection of some interface $\omega$. In both cases of $\omega$ being either orthogonal or parallel to $\delta$, it is easy to see that the projection of $\delta$ reaches the boundary of $F'$ in only a single line segment, which establishes the claimed existence of an $F'$-baseline.
\end{proof}

The existence of a baseline is already quite helpful to obtain the desired bound on the treewidth, but not yet sufficient on its own. In particular, this implies that each sector has the shape of a histogram. Next, we show that the bend distances to any ``additional port'' cannot differ too much within a sector.

\begin{lemma}
\label{lem:recdepth}
For every sector $F\in \calF_{t}$, $t\in [1,q-1]$, and every pair $F_1, F_2\in \calU_{F}^{t+1}$, $|\bdist(p, (a_{t+1}, d_{t+1}))-\bdist(q, (a_{t+1}, d_{t+1}))|\leq 3$ for every pair of points $p\in F_1$, $q\in F_2$.
\end{lemma}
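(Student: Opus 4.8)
The plan is to exploit the structural description of sectors established in Lemmas~\ref{lem:secttw-1}, \ref{lem:sector-prorerty}, and the proof of Lemma~\ref{lem:secttw-1} (in particular, that the refinement of $F$ by a single additional port $(a_{t+1},d_{t+1})$ follows the same "ray plus orthogonal projections" pattern). Recall that $F\in\calF_t$ has an $F$-baseline $\delta$ by Lemma~\ref{lem:sector-prorerty}, so $F$ is a histogram with respect to $\delta$: every point of $F$ is reached by a ray orthogonal to $\delta$. The sets in $\calU_F^{t+1}$ are obtained by intersecting $F$ with the sectors of the single-port partition for $(a_{t+1},d_{t+1})$, which by the proof of Lemma~\ref{lem:secttw-1} are the $0$-bend ray $\rho$ out of $a_{t+1}$ together with its iterated orthogonal projections, forming concentric "L-shaped" layers at bend distances $0,1,2,\dots$. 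So the key object to understand is how the level sets of $\bdist(\cdot,(a_{t+1},d_{t+1}))$ restricted to $F$ behave, and the claim is that this function varies by at most $3$ across all of $F$.

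First I would fix two points $p,q$ in $F$ (in possibly different members of $\calU_F^{t+1}$) and let $p' \in \rho$, $q' \in \rho$ be the feet of the natural connection of $p$ (resp. $q$) to the ray $\rho$. The core estimate is a triangle-inequality-style bound: $\bdist(p,(a_{t+1},d_{t+1})) \le \bdist(q,(a_{t+1},d_{t+1})) + c$ for a small constant $c$, obtained by routing from $p$ "into" $q$ inside $F$ and then following $q$'s optimal polyline. Because $F$ is a histogram over its baseline $\delta$, any two points $p,q\in F$ can be joined inside $F$ by an orthogonal polyline using only a bounded number of bends — specifically, go from $p$ orthogonally to $\delta$, travel along (parallel to) $\delta$, then orthogonally back out to $q$, which costs at most $2$ bends, and possibly one more bend to align the outgoing direction with the direction in which $q$'s optimal polyline to $a_{t+1}$ departs. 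Concatenating this short connector with an optimal polyline from $q$ to $a_{t+1}$, and then straightening/absorbing the $O(1)$ redundant bends at the junction (two collinear segments merge, an L-turn stays a bend), yields a polyline from $p$ to $a_{t+1}$ arriving from direction $d_{t+1}$ with at most $\bdist(q,(a_{t+1},d_{t+1}))+3$ bends. By symmetry the reverse inequality holds, giving $|\bdist(p,\cdot)-\bdist(q,\cdot)|\le 3$.

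The main obstacle I anticipate is making the "$O(1)$ bends for the connector, and $O(1)$ bends lost at the junction" bookkeeping tight enough to land exactly at $3$ rather than a looser constant, and — more delicately — ensuring the connector polyline actually stays inside $f$ (so that the bend distances, which are defined via polylines in the interior of $f$, are the right quantities to compare). The histogram property of $F$ guarantees the connector stays in $F\subseteq f$, which handles the containment; for the constant, I would carefully case-split on whether the new port's interface $\omega$ (whose projection cuts $F$ into the members of $\calU_F^{t+1}$) is parallel or orthogonal to the baseline $\delta$ of $F$, mirroring the two cases in the proof of Lemma~\ref{lem:sector-prorerty}, and in each case exhibit the explicit short route. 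A secondary subtlety is that $p$ or $q$ might lie on a degenerate (line-segment or point) member of $\calU_F^{t+1}$, but since such degenerate sectors lie on the boundary of non-degenerate ones and the bend distance is continuous in the appropriate sense (a point on the boundary has bend distance differing by at most $1$ from an adjacent interior point), this only costs an extra additive constant already absorbed into the bound of $3$. I would double-check that the worst case genuinely needs all three units — one for the turn realigning the connector with $q$'s departure direction, and up to two for the in/out turns of the connector — and that no configuration forces a fourth.
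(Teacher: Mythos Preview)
Your proposal is correct and follows essentially the same approach as the paper: use the baseline/histogram structure of $F$ (Lemma~\ref{lem:sector-prorerty}) to connect any two points $p,q\in F$ by an orthogonal polyline with at most $2$ bends, then add one more bend at the junction to align with the optimal polyline from the other point to $(a_{t+1},d_{t+1})$, yielding the bound of $3$. The paper presents this as a short contradiction argument rather than your direct triangle-inequality framing, and it does not need the parallel/orthogonal case split or the degenerate-sector discussion you anticipate --- the baseline connector already stays in $F\subseteq f$ and the count $2+1=3$ falls out immediately.
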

\begin{proof}
Assume for a contradiction that there are two sectors $F_1, F_2\in \calU_{F}^{t+1}$ and $p\in F_1$, $q\in F_2$, such that $|\bdist(p, (a_{t+1}, d_{t+1}))-\bdist(q, (a_{t+1}, d_{t+1}))|>3$.
Without loss of generality, let $\bdist(p, (a_{t+1}, d_{t+1}))<\bdist(q, (a_{t+1}, d_{t+1}))$. Then every polyline connecting $p$ to $q$ must have at least $3$ bends; if this is not the case, one could use this polyline to connect $q$ to $p$ with at most $2$ bends, and then use an additional bend at $p$ to obtain that $\bdist(p, (a_{t+1}, d_{t+1}))+3\geq \bdist(q, (a_{t+1}, d_{t+1}))$, which is a contradiction.

And yet, at the same time there always exists a polyline between $p$ and $q$ lying completely in $F$ that has at most $2$ bends. Indeed, by Lemma \ref{lem:sector-prorerty}, we can draw a polyline that starts at $p$ and proceeds to the baseline of $F$, uses a bend to reach another point on the baseline of $F$, and then performs one final bend to reach $q$. This results in a contradiction.
\end{proof}

\begin{figure}
\centering
\scalebox{0.5}{
\begin{tikzpicture}

\draw[thin] 
(0, 0) -- (11, 0)
(11, 2) -- (11, 0)
(11, 2) -- (10, 2)
(10, 2) -- (10, 1)
(9, 1) -- (10, 1)
(9, 1) -- (9, 5)
(8, 5) -- (9, 5)
(8, 5) -- (8, 3)
(7, 3) -- (8, 3)
(7, 3) -- (7, 2)
(6.5, 2) -- (7, 2)
(6.5, 2) -- (6.5, 4)
(5.5, 4) -- (6.5, 4)
(5.5, 4) -- (5.5, 2)
(5, 2) -- (5.5, 2)
(5, 2) -- (5, 3)
(4.5, 3) -- (5, 3)
(4.5, 3) -- (4.5, 2.5)
(4, 2.5) -- (4.5, 2.5)
(4, 2.5) -- (4, 4.5)
(2.5, 4.5) -- (4, 4.5)
(2.5, 4.5) -- (2.5, .5)
(2, .5) -- (2.5, .5)
(2, .5) -- (2, 2)
(2, 2) -- (1, 2)
(1, 2) -- (1, 1.5)
(1, 1.5) -- (.5, 1.5)
(.5, 1.5) -- (.5, .5)
 (.5, .5) --  (0, .5)
(0, .5) -- (0, 0)
;

\draw[very thick]
(0, 0) -- (11, 0);

\draw[very thick, red] 
(11, 2) -- (10, 2)
(8, 5) -- (9, 5)
(5.5, 4) -- (6.5, 4)
(4.5, 3) -- (5, 3)
(2.5, 4.5) -- (4, 4.5)
(2, 2) -- (1, 2)
;

\draw[very thick, blue] 
(2, .5) -- (2.5, .5)
(4, 2.5) -- (4.5, 2.5)
(5, 2) -- (5.5, 2)
(9, 1) -- (10, 1)
(6.5, 2) -- (7, 2)
;

\node (tau) at (.25, -.3) {\Large{$\delta$}};
\node (tau) at (-.25, .25) {\Large{$\alpha$}};

\end{tikzpicture}
}
\caption{The segments colored red (blue) are local maxima (minima). \label{fig:maxmin}}
\end{figure}
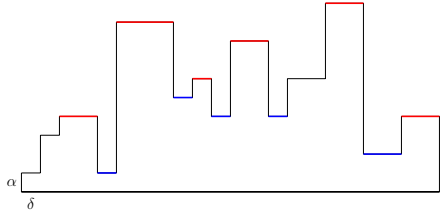

With Lemmas~\ref{lem:sector-prorerty} and~\ref{lem:recdepth}, we are ready to proceed to the most difficult part of establishing our bound on the treewidth of the sector graph. Let us fix some $F$-baseline $\delta$ for a sector $F$ in the sector graph $\calG_t$, $1\leq t\leq q$. Consider the polyline $\alpha$ obtained when traversing $F$ in clockwise fashion from one endpoint of $\delta$ to the other, where $\alpha$ does not intersect $\delta$. We call a line segment in $\alpha$ a \emph{local maximum} (\emph{minimum}) if $\alpha$ makes a right (left) turn both before and after the line segment (see Figure~\ref{fig:maxmin}). Let $\xi_{\max}(F)$ ($\xi_{\min}(F)$) denote the number of local maxima (local minima) in $F$; note that since each sector is a histogram, $\xi_{\max}(F)=\xi_{\min}(F)+1$.

\begin{figure}
\centering
\scalebox{0.5}{
\begin{tikzpicture}

\draw[thin] 
(0, 0) -- (11, 0)
(11, 2) -- (11, 0)
(11, 2) -- (10, 2)
(10, 2) -- (10, 1)
(9, 1) -- (10, 1)
(9, 1) -- (9, 5)
(8, 5) -- (9, 5)
(8, 5) -- (8, 3)
(7, 3) -- (8, 3)
(7, 3) -- (7, 2)
(6.5, 2) -- (7, 2)
(6.5, 2) -- (6.5, 4)
(5.5, 4) -- (6.5, 4)
(5.5, 4) -- (5.5, 2)
(5, 2) -- (5.5, 2)
(5, 2) -- (5, 3)
(4.5, 3) -- (5, 3)
(4.5, 3) -- (4.5, 2.5)
(4, 2.5) -- (4.5, 2.5)
(4, 2.5) -- (4, 4.5)
(2.5, 4.5) -- (4, 4.5)
(2.5, 4.5) -- (2.5, .5)
(2, .5) -- (2.5, .5)
(2, .5) -- (2, 2)
(2, 2) -- (1, 2)
(1, 2) -- (1, 1.5)
(1, 1.5) -- (.5, 1.5)
(.5, 1.5) -- (.5, .5)
 (.5, .5) --  (0, .5)
(0, .5) -- (0, 0)
;

\draw[dashed] 
(2.75, 4.5) -- (2.75, 0) 
(3.75, 4.5) -- (3.75, 0);

\draw[dotted]
(0, .5) -- (2, .5)
(4.5, 2.5) -- (5, 2.5)
(5.5, 2) -- (9, 2)
(10, 1) -- (11, 1)
;

\node (tau) at (3.25, 0.35) {\Large{$F_{\min}$}};

\end{tikzpicture}
}
\scalebox{0.5}{
\begin{tikzpicture}

\draw[thin] 
(0, 0) -- (11, 0)
(11, 2) -- (11, 0)
(11, 2) -- (10, 2)
(10, 2) -- (10, 1)
(9, 1) -- (10, 1)
(9, 1) -- (9, 5)
(8, 5) -- (9, 5)
(8, 5) -- (8, 3)
(7, 3) -- (8, 3)
(7, 3) -- (7, 2)
(6.5, 2) -- (7, 2)
(6.5, 2) -- (6.5, 4)
(5.5, 4) -- (6.5, 4)
(5.5, 4) -- (5.5, 2)
(5, 2) -- (5.5, 2)
(5, 2) -- (5, 3)
(4.5, 3) -- (5, 3)
(4.5, 3) -- (4.5, 2.5)
(4, 2.5) -- (4.5, 2.5)
(4, 2.5) -- (4, 4.5)
(2.5, 4.5) -- (4, 4.5)
(2.5, 4.5) -- (2.5, .5)
(2, .5) -- (2.5, .5)
(2, .5) -- (2, 2)
(2, 2) -- (1, 2)
(1, 2) -- (1, 1.5)
(1, 1.5) -- (.5, 1.5)
(.5, 1.5) -- (.5, .5)
 (.5, .5) --  (0, .5)
(0, .5) -- (0, 0)
;

\draw[dashed] 
(9, 4) -- (8, 4)
(2.5, 1.5) -- (9, 1.5)
(2.5, 2) -- (6.5, 2);

\draw[dotted]
(2.5, .5) -- (2.5, 0)
(9, 1) -- (9, 0)
(.5, .5) -- (2, .5)
(10, 1) -- (11, 1)
;

\node (tau) at (8.35, 2) {\Large{$F_{\min}$}};

\end{tikzpicture}
}
\caption{Cases of relative location of the $F_{\min}$ sector in $F$ relative to the $F$-baseline, Lemma~\ref{lem:subsect}. }\label{fig:hist}
\end{figure}
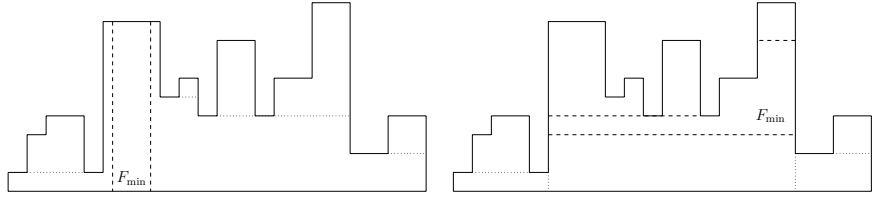

\begin{lemma}
\label{lem:subsect}
For every sector $F\in \calF_t$, $1\leq t \leq q-1$, we have $|\calU_{F}^{t+1}|\leq 4+\xi_{\max}(F)$ and $\max\limits_{F'\in\calU^{t+1}_F}\xi_{\max}(F')\leq \xi_{\max}(F)$.
\end{lemma}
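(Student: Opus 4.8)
The plan is to analyze how a single sector $F$, which by Lemma~\ref{lem:sector-prorerty} has an $F$-baseline $\delta$ and hence the shape of a histogram with base $\delta$, gets subdivided when one additionally refines by the bend distance to the single port $(a_{t+1},d_{t+1})$. Recall from the proof of Lemma~\ref{lem:secttw-1} that the sectors with respect to a single port are: one ray (the $0$-bend sector), its two orthogonal projections (the $1$-bend sectors), and then iteratively the $i$-bend sectors obtained as projections of the $i$-interfaces, each $i$-interface touching the boundary of $f$ at both endpoints. So $F$ is cut by the "level curves" of $\bdist(\cdot,(a_{t+1},d_{t+1}))$; by Lemma~\ref{lem:recdepth} only at most four consecutive bend-distance values occur inside $F$, so there are at most three such level curves passing through $F$, plus possibly a piece of the single $0$-bend ray. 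The key is to bound how many connected pieces these few cutting curves can carve $F$ into, and to control $\xi_{\max}$ of each piece.

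The main work is a case analysis on the relative position of the relevant structure of the single-port sectors with respect to the baseline $\delta$ of $F$, as suggested by Figure~\ref{fig:hist}. First I would observe that, within $F$, each bend-distance level set is itself an interface (a line segment that, restricted to $F$, is either parallel or orthogonal to $\delta$, because it is a projection of an interface of the single-port partition and $F$ is a histogram over $\delta$). I would split into the case where the cutting interfaces are parallel to $\delta$ and the case where the cutting ray/interface runs orthogonal to $\delta$ — corresponding to whether the "deeper" sector $F_{\min}$ sits stacked above part of $\delta$ or sits to the side, as in the two pictures of Figure~\ref{fig:hist}. In the parallel case, a single parallel cut at height roughly $h$ splits each "tooth" of the histogram that is tall enough into a top and bottom part; since the histogram has $\xi_{\max}(F)$ local maxima, each parallel cut adds at most $\xi_{\max}(F)$ new pieces, but because only a bounded number (at most three) of these parallel level curves occur and they are nested, a more careful count shows the total number of pieces is at most $\xi_{\max}(F)$ plus a constant. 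In the orthogonal case, the cutting curves are at most a constant number of vertical segments, and cutting a histogram by a constant number of vertical lines adds only a constant number of pieces. Combining the two, $|\calU_F^{t+1}| \le \xi_{\max}(F) + 4$.

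For the second inequality, $\max_{F'\in\calU_F^{t+1}}\xi_{\max}(F')\le \xi_{\max}(F)$, the point is that each piece $F'$ of $F$ is again a histogram (it inherits a baseline: either a sub-segment of $\delta$, or one of the newly introduced cutting interfaces, depending on which "side" of the cut it lies — and here I would invoke the reasoning of Lemma~\ref{lem:sector-prorerty} showing a baseline exists for $F'$), and cutting a histogram along its level curves cannot create new local maxima: each local maximum of $F'$ either was already a local maximum of $F$, or lies along a cutting interface, and an interface is straight so it contributes at most the local maxima it "inherits". Summing over the pieces $F'$, the local maxima of $F$ are distributed among the $F'$ without being duplicated beyond what a single straight cut allows, so no individual $F'$ can exceed $\xi_{\max}(F)$. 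I would make this precise by noting that a local maximum segment of $F'$ that is not on the boundary of $f$ and not on a cutting interface must be a local maximum segment of $F$ as well (its two neighbouring turns are unchanged), while the ones lying on cutting interfaces, being collinear, can be charged injectively.

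The hard part will be the parallel-cut case of the first inequality: getting the bound to be $\xi_{\max}(F)$ plus a \emph{constant} (rather than, say, $3\,\xi_{\max}(F)$) requires using that the at-most-three parallel level curves of a single port are \emph{nested} and that between consecutive bend-distance values the "staircase" structure of the single-port interface interacts with the teeth of $F$ in a very restricted way — essentially each tooth of $F$ is intersected by at most one of these curves in a way that actually separates it. Pinning down exactly why double-counting of teeth does not happen (so that one does not pay $\xi_{\max}(F)$ per cut) is the delicate geometric point, and it is precisely where Lemma~\ref{lem:recdepth} and the histogram shape from Lemma~\ref{lem:sector-prorerty} must be used in tandem.
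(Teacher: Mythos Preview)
Your plan is in the right ballpark and uses the correct structural ingredients (the histogram shape from Lemma~\ref{lem:sector-prorerty}, the depth bound from Lemma~\ref{lem:recdepth}, and a parallel/orthogonal case split), but it takes a noticeably harder route than the paper and leaves precisely the delicate point unresolved.

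The paper does \emph{not} count level curves globally. Instead it picks the subsector $F_{\min}\in\calU_F^{t+1}$ with minimum bend distance to $(a_{t+1},d_{t+1})$ and builds the partition outward from it. In the parallel case, the boundary of $F_{\min}$ touches the $F$-baseline in at most two vertical segments $\ell_1,\ell_2$; projecting these into $F$ yields two further sectors $F_{\ell_1},F_{\ell_2}$ at distance $m{+}1$, and everything in $F\setminus(F_{\min}\cup F_{\ell_1}\cup F_{\ell_2})$ is reached by projecting a single boundary segment of $F_{\ell_1}$ or $F_{\ell_2}$, one per local maximum of $F$. In the orthogonal case, $F_{\min}$ separates off some non-base regions (each a single sector and each containing at least one local maximum of $F$) plus one base region that is then treated as in the parallel case. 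This gives $1+2+\xi_{\max}(F)$ directly, with no need to track several nested or alternating level curves. The second inequality then falls out of the same construction: each piece inherits only a subset of the local maxima of $F$.

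Your approach can in principle be made to work, but the ``hard part'' you flag is real and your sketch does not close it. The interfaces of a single port alternate orientation at each bend-distance step, so your description of ``at most three parallel level curves that are nested'' is not accurate as stated; within $F$ you may see one parallel cut followed by orthogonal cuts (or vice versa), and ``nesting'' in the sense you want does not straightforwardly hold. The paper's trick of anchoring the whole decomposition at $F_{\min}$ and projecting outward in two steps is exactly what replaces the nesting argument you are reaching for, and it is what you are missing.
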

\begin{proof}
Let $F_{\min}\in \calU_{F}^{t+1}$ be a sector with a minimum bend distance to $(a_{t+1},d_{t+1})$, say $m$. 
Since we have already observed that each sector has the shape of a histogram extending from some baseline (cf. Lemma \ref{lem:sector-prorerty}), let us now consider $2$ cases depending on whether the $F_{\min}$-baseline is parallel with the $F$-baseline or orthogonal to it (see Fig.~\ref{fig:hist}).

For the first case, observe that there are at most two line segments on the border of $F_{\min}$ that touch the baseline of $F$ precisely at a single point. Let us proceed under the assumption that there are precisely two such segments $\ell_1$, $\ell_2$ (the cases with less than $2$ segments follow analogously). Projecting $\ell_1$ and $\ell_2$ out of $F_{\min}$ into $F$ results in two sectors of $\calU_{F}^{t+1}$, say $F_{\ell_1}$ and $F_{\ell_2}$, with bend distance to $(a_{t+1},d_{t+1})$ that is one larger than that of $F_{\min}$. Since $F$ is a histogram, every point in $F\setminus (F_{\min}\cup F_{\ell_1} \cup F_{\ell_2})$ can be reached by projecting a line segment on the boundary of $F_{\ell_1}$ or $F_{\ell_2}$ into $F$, and in particular each such line segment gives rise to a separate sector in $\calU_{F}^{t+1}$. We conclude this case by observing that the number of line segments that may be projected from $F_{\ell_1}$ or $F_{\ell_2}$ into $F$ is upper-bounded by the number of local maxima.

For the second case, $F_{\min}$ splits $F$ into connected regions, where at most one such region intersects the $F$-baseline; again, we describe the more difficult case where the projection of a line segment on the boundary of $F_{\min}$ intersects the $F$-baseline. In this case, removing $F_{\min}$ partitions $F$ into connected regions, where at most one such region contains the $F$-baseline (we call this the \emph{base region}) and the local maxima of $F$ are partitioned between these connected regions. More precisely, each non-base region must contain at least one local maximum of $F$ and will be a single sector in $\calU_{F}^{t+1}$. The analysis of the base region then follows analogously as the previous case, where the projection of the side of $F_{\min}$ into this region behaves as a sector $F'_{\min}$ with minimum bend distance to $(a_{t+1},d_{t+1})$ within the base region.

Based on this construction, it is easy to see that also the number of local maxima of every sector in $\calU_{F}^{t+1}$ cannot exceed $\xi_{\max}(F)$.
\end{proof}

To obtain the main result of this section (Theorem~\ref{thm:secttw}), we will combine Lemma~\ref{lem:subsect} with the following lemma that bounds the number of local maxima in each sector.

\begin{lemma}
\label{lem:sectsum}
For each sector $F$ in $V(\mathcal{G})$, $\xi_{\max}(F) \leq 4k$.
\end{lemma}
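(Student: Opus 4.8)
The plan is to reduce the statement to the case of a single port by means of the monotonicity provided by \cref{lem:subsect}, and then to settle that base case through a charging argument that relies crucially on the instance being clean.

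First I would invoke the second inequality of \cref{lem:subsect}, $\max_{F'\in\calU^{t+1}_F}\xi_{\max}(F')\le\xi_{\max}(F)$, iterated along the nested chain $F\subseteq F^{(q-1)}\subseteq\dots\subseteq F^{(1)}$, where $F^{(t)}\in\calF_t$ is the unique sector of $\calF_t$ containing $F\in\calF_q$. This yields $\xi_{\max}(F)\le\xi_{\max}(F^{(1)})$, so it suffices to show $\xi_{\max}(F_0)\le 4k$ for every $F_0\in\calF_1$ (the degenerate cases where $F_0$ is a single point or a line segment are immediate). For such an $F_0$, recall from the proof of \cref{lem:secttw-1} that $F_0$ is the projection of its interface $\delta$, which we take as the $F_0$-baseline; hence $F_0$ is a histogram over $\delta$, and since a ray shot orthogonally out of $\delta$ keeps the same bend distance until it meets $\partial f$ --- the only way to increase it being to turn, which produces the sectors flanking $F_0$ rather than sectors above it --- the ``ceiling'' of $F_0$ (the part of $\partial F_0$ other than $\delta$ and the two side segments) is a polyline lying on $\partial f$. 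As $F_0$ is a histogram, $\xi_{\max}(F_0)=\xi_{\min}(F_0)+1$, and each local minimum is a \emph{valley} of this ceiling: a lowest plateau flanked by two rising vertical walls, so that its two endpoints are reflex corners of $f$. It is therefore enough to assign to each of the $\xi_{\min}(F_0)$ valleys a port of $\calP$ distinct from $(a_1,d_1)$, injectively; that gives $\xi_{\min}(F_0)\le q-1\le 4k-1$ and hence $\xi_{\max}(F_0)\le 4k$.

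For the charging, fix a valley $\mu$ and let $c_\mu$ be its left bottom corner. If $c_\mu$ is essential, it is an anchor incident to a missing edge and thus carries a port, which we charge to $\mu$; this port sits at $c_\mu\ne a_1$. If $c_\mu$ is non-essential, then since the instance is clean (\cref{lem:globalcut}), the projection of $c_\mu$ that leaves it horizontally towards the side on which the ceiling is higher splits $f$ into two faces each carrying a port; the face containing $\delta$ also contains $a_1$, so the other face carries a port $\ne(a_1,d_1)$, which we charge to $\mu$. The main obstacle --- and the heart of the proof --- is to make this charging injective. The regions cut off ``above'' these projections for different valleys need not be pairwise disjoint (valleys of different floor heights give nested regions), so disjointness cannot be invoked directly; instead one argues valley by valley, processing them in order of increasing floor height and using that the two bottom corners of a single valley project horizontally in opposite directions, to show that a port not yet spent on a shallower valley is always available --- while handling uniformly the subcases where a bottom corner is essential, where it admits only its vertical projection, and where consecutive valleys share a reflex corner. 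Once injectivity is established, combining with the reduction of the first step gives $\xi_{\max}(F)\le\xi_{\max}(F_0)\le 4k$.
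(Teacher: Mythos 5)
Your reduction to the single-port case is exactly the paper's: iterate the inequality $\max_{F'\in\calU^{t+1}_F}\xi_{\max}(F')\leq \xi_{\max}(F)$ from \cref{lem:subsect} along the chain of sectors containing $F$, so that only the sectors of $\calF_1$ need to be handled. The gap is in the base case, and you have flagged it yourself: the injectivity of your valley-to-port charging is never established, only sketched as a plan (``one argues valley by valley, processing them in order of increasing floor height\dots''). This is not a routine verification. Because you charge local \emph{minima}, the regions cut off by the horizontal projections of the valleys' bottom corners can be nested, so several valleys may reach the same port; moreover you must additionally guarantee that $(a_1,d_1)$ is never charged, since you need $\xi_{\min}(F_0)\leq 4k-1$ rather than $\xi_{\max}(F_0)\leq 4k$. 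You also assert that both endpoints of a valley floor are reflex corners of $f$, which presupposes that the entire ceiling of $F_0$ lies on $\partial f$; parts of it may instead be interfaces with sectors of larger bend distance, and the argument does not address that case.

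The paper sidesteps all of this by charging local \emph{maxima} instead. For each local maximum of a sector $F'\in\calF_1$, traversing the boundary on at least one side leads to a reflex corner $\tau$ (take the closer one if both sides do); the projection from $\tau$ \emph{parallel to the $F'$-baseline} splits $F'$ into two regions, and the one containing the given local maximum is just a rectangle. Cleanness (\cref{lem:globalcut}) places at least one port in each such rectangle, and since each rectangle contains exactly one local maximum, distinct local maxima yield disjoint rectangles, so injectivity is immediate and $\xi_{\max}(F')\leq 4k$ follows from $q\leq 4k$. I would recommend replacing your valley charging with this maximum charging; if you prefer to keep the valley formulation, you must actually carry out the ordering argument you allude to, including the cases where a bottom corner is essential, where the desired horizontal projection does not exist, and where the floor of a valley is an interface rather than part of $\partial f$.
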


\begin{proof}
We begin by establishing a bound on $\xi_{\max}(F')$ for each sector $F'$ arising from $\calP_1$. Observe that all local maxima in $F'$ must be parts of the border of the initial $f$ of the solving instance. Moreover, either $F'$ contains only a single local maximum, or traversing the boundary of $F'$ on at least one side of each local maximum in $F'$ leads to a reflex corner $\tau$; if this occurs on both sides of the local maximum, we let $\tau$ be the closer of the two reflex corners. Now consider the projection from $\tau$ that is parallel to the $F'$-baseline; this projection splits $F'$ into two connected regions, where the region containing the given local maximum is just a rectangle. Since there must be at least one port in each such rectangle, we immediately see that the number of local maxima in $F'$ is at most $4k$. 

The statement of the lemma now follows directly by applying \Cref{lem:subsect} $k$ times on each sector of $V(\mathcal{G}_1)$.
\end{proof}

\begin{theorem}
\label{thm:secttw}
Let $\mathcal{G}$ be a sector graph of a face $f$ of the drawing $\Gamma(G)$. Then $\tw(\mathcal{G})\leq (4+4k)^{4k}$.
\end{theorem}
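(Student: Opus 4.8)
The plan is to prove the bound by induction on the length $t$ of the prefix $\calP_t$, exploiting the fact that $\calG_{t+1}$ is essentially a \emph{blow-up} of $\calG_t$: each sector $F\in\calF_t$ is refined into the set $\calU_F^{t+1}$ of sectors it splits into when the port $(a_{t+1},d_{t+1})$ is added, and every adjacency of $\calG_{t+1}$ lies either inside some $\calU_F^{t+1}$ or between $\calU_{F_1}^{t+1}$ and $\calU_{F_2}^{t+1}$ for an edge $F_1F_2$ of $\calG_t$. The base case is \cref{lem:secttw-1}, which gives $\tw(\calG_1)\leq 1$.

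For the inductive step I would first record the standard fact that if a graph $\calG'$ is obtained from a graph $\calG''$ by replacing each vertex $v$ with a nonempty vertex set $S_v$ of size at most $c$ and adding edges only within each $S_v$ and, for $uv\in E(\calG'')$, between $S_u$ and $S_v$, then $\tw(\calG')\leq c\cdot(\tw(\calG'')+1)-1$. Indeed, from a width-$\tw(\calG'')$ tree decomposition $(T,\chi)$ of $\calG''$, the assignment $\chi'(t)=\bigcup_{v\in\chi(t)}S_v$ is a tree decomposition of $\calG'$: every edge of $\calG'$ has its endpoints inside $S_u\cup S_v$ for some $uv\in E(\calG'')$ or inside a single $S_v$, and for any fixed $x\in S_v$ the nodes $t$ with $x\in\chi'(t)$ are exactly those with $v\in\chi(t)$, hence form a subtree; moreover $|\chi'(t)|\leq c\cdot(\tw(\calG'')+1)$.

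Applying this with $\calG''=\calG_t$ and $\calG'=\calG_{t+1}$ requires only the edge-locality claim. Since adding the port $(a_{t+1},d_{t+1})$ merely subdivides sectors, $\calF_{t+1}$ refines $\calF_t$ — every $Q\in\calF_{t+1}$ is contained in a unique $F\in\calF_t$ — so all sector boundaries of $\calF_t$ survive in $\calF_{t+1}$. Hence if $Q_1\in\calU_{F_1}^{t+1}$ and $Q_2\in\calU_{F_2}^{t+1}$ are adjacent in $\calG_{t+1}$ via a ray from some $p\in Q_1$ in a direction $d$, then the first point that ray meets outside $F_1$ already lies in $Q_2$, which forces $F_1=F_2$ or $F_1F_2\in E(\calG_t)$. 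By \cref{lem:subsect} we have $|\calU_F^{t+1}|\leq 4+\xi_{\max}(F)$ and $\xi_{\max}$ is nonincreasing under refinement, and by \cref{lem:sectsum} (together with its argument for the sectors of $\calG_1$) every sector $F$ arising along the way satisfies $\xi_{\max}(F)\leq 4k$; therefore we may take $c=4+4k$ and obtain
\[
\tw(\calG_{t+1})+1\ \leq\ (4+4k)\bigl(\tw(\calG_t)+1\bigr).
\]

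Unrolling this recursion from $\tw(\calG_1)+1\leq 2$ over the $q\leq 4k$ ports gives $\tw(\calG)+1=\tw(\calG_q)+1\leq 2\cdot(4+4k)^{q-1}\leq 2\cdot(4+4k)^{4k-1}\leq(4+4k)^{4k}$, which yields the claimed bound. I expect the main obstacle to be the careful justification of edge-locality — that refining $\calF_t$ into $\calF_{t+1}$ cannot create an adjacency between subsectors of two non-adjacent sectors of $\calG_t$ — and the bookkeeping needed to invoke \cref{lem:subsect}/\cref{lem:sectsum} so that the uniform blow-up factor $4+4k$ is valid at \emph{every} level of the induction; the closing arithmetic is routine.
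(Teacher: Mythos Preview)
Your proof is correct and follows essentially the same approach as the paper: induction on the prefix length $t$, with \cref{lem:secttw-1} as the base case, and the inductive step carried out by replacing each sector $F\in\chi(t)$ in a tree decomposition of $\calG_t$ by the set $\calU_F^{t+1}$, bounding the blow-up factor via \cref{lem:subsect} and \cref{lem:sectsum}. If anything, you are slightly more explicit than the paper about the edge-locality of the blow-up and about the final arithmetic, but the structure and key ingredients are identical.
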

\begin{proof}
We prove the claim by induction, where the base of an induction exactly follows from the result of the Lemma~\ref{lem:secttw-1}.
For the inductive step, assume that $\tw(\calG_t)=\calO(k^t)$ and our aim will be to show that $\tw(\calG_{t+1})$ is $\calO(k^{t+1})$.

Consider a tree decomposition $\calT_{\calG_t}=(\chi, T)$ of a graph $\calG_{t}$, and construct a tree decomposition $\calT_{\calG_{t+1}}=(\chi', T)$ for a graph $\calG_{t+1}$.
For each node $v\in T$, consider its bag $\chi(v)$. Then, to obtain $\chi'(v)$, each vertex $F\in\chi(v)$ will be replaced with the set of vertices $\calU_F^{t+1}$, i.e. $\chi'(v)=\bigcup\limits_{F\in \chi(v)}\calU_{F}^{t+1}$. All the properties of the tree decomposition are still preserved, since this operation merely replaces a vertex in $\chi(t)$ with a set of vertices in $\chi'(t)$ which always remain together; more precisely, $\calG_t$ can be obtained by contracting each $\calU_{F}^{t+1}$ into $F$.

Now, let us bound the treewidth of $\calT_{\calG_{t+1}}$. By Lemma~\ref{lem:subsect}, for each $F\in V(\calG_t)$, we have $|\calU_{F}^{t+1}|\leq 4+\xi_{\max}(F)$, and by Lemma~\ref{lem:sectsum} we have that $\xi_{\max}(F)\leq 4k$. Hence for each bag $t$, we have $|\chi'(t)|\leq (4+4k)\cdot |\chi(t)|$, which concludes the proof of the lemma.
\end{proof}

\subsection{The Final Step}

At this point, we have shown that an instance $\mathcal{I}=\langle G_f, H_f, \Gamma(H_f),\mathcal{P} \rangle$ with $k=|V(G_f)\setminus V(H_f)|$ of \FTBOE\ admits a sector graph $\mathcal{G}$ of treewidth at most $(4+4k)^{4k}$ (Theorem~\ref{thm:secttw}), and that a bend-minimal extension of $\Gamma(H_f)$ to an orthogonal planar drawing of $G_f$ can be assumed to only contain feature points on the sector-grid points as per Corollary~\ref{cor:grid}, of which there are at most $\newgridsize(k)$ many per sector. This allows us to proceed to the final ingredient for our algorithm:

\begin{lemma}
\label{lem:dynprog}
\FTBOE can be solved in time $2^{k^{\bigoh(1)}} \cdot |V(G_f)|$.
\end{lemma}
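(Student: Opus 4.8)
The plan is to combine the treewidth bound on the sector graph (Theorem~\ref{thm:secttw}) with the bounded per-sector sector-grid (Corollary~\ref{cor:grid}) into a dynamic program over a nice tree decomposition of $\mathcal{G}$. First I would compute the sector graph $\mathcal{G}$ explicitly together with its sector-grid (both in polynomial time by Observation~\ref{obs:sectorbound} and Corollary~\ref{cor:grid}), and then compute a nice tree decomposition $(T,\chi)$ of $\mathcal{G}$ of width $w\leq (4+4k)^{4k}$ using a standard FPT algorithm for treewidth (e.g.\ Bodlaender's algorithm, or the $2^{\mathcal{O}(w)}$-approximation). The key observation enabling a DP is that every feature point in the (normalized) solution lies on a sector-grid point of some sector, and each sector contains at most $\newgridsize(k)=\mathtt{gridsize}(k)^2\cdot(8k)^2$ such points; moreover there are at most $k$ missing vertices and at most $4k$ missing edges, and by Lemma~\ref{le:tech} (and the analysis in Lemma~\ref{lem:sectorgrid}) each missing edge only needs a bounded-per-sector number of bends, so the ``drawing restricted to one sector'' has only $2^{k^{\mathcal{O}(1)}}$ combinatorial types once we fix which grid points are used and how the $\leq 4k$ edges route through.

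The DP state at a node $t$ records, for the sectors in the bag $\chi(t)$, a \emph{partial extension signature}: for each sector $F\in\chi(t)$, the set of grid points of $F$ used as feature points (at most $\newgridsize(k)$ choices of subset, hence $2^{k^{\mathcal{O}(1)}}$ possibilities), the orthogonal shape of the edge-pieces incident to those points within $F$, and a matching/pairing information describing which missing edges pass through which of the (at most $d$-adjacency many) sides of $F$, together with their port directions at the boundary. Since $|\chi(t)|\leq w+1$ and $w$ itself is a function of $k$, and each sector contributes $2^{k^{\mathcal{O}(1)}}$ local choices, the total number of signatures per bag is $\bigl(2^{k^{\mathcal{O}(1)}}\bigr)^{w+1}=2^{k^{\mathcal{O}(1)}}$ (this is where the tower-type bound on $w$ gets absorbed into $k^{\mathcal{O}(1)}$ in the exponent, since $(4+4k)^{4k}=2^{k^{\mathcal{O}(1)}}$). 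For each signature we store the minimum number of bends incurred so far on the missing edges inside $\chi_\downarrow(t)$ that is consistent with it. The transitions at introduce, forget and join nodes are the standard ones: at an introduce node for sector $F$ we enumerate all locally consistent ways to draw the part of the extension inside $F$ (using only its grid points), checking planarity and compatibility with the neighbouring sectors already in the bag via their shared boundary sides; at a forget node we project out $F$ but must verify that its boundary conditions have been fully ``closed off'' consistently, i.e.\ every missing edge entering $F$ either terminates at a vertex placed in $F$ or exits through a side whose adjacent sector is still in the bag; and at a join node we merge the two children's signatures by requiring they agree on the shared bag and summing the bend counts of the missing edges while avoiding double-counting. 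The answer is read off at the root (empty bag) as the minimum total bend count over all global signatures that place every missing vertex and route every missing edge, or ``No'' if none exists.

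The main obstacle I expect is making the per-sector local state genuinely finite and of size $2^{k^{\mathcal{O}(1)}}$ while simultaneously capturing enough geometric information to guarantee that locally consistent signatures glue into a globally planar orthogonal drawing. Concretely, one must argue that (i) knowing only the grid points used and the shapes of edge-stubs at the boundary of each sector is enough to reconstruct a valid drawing---this relies on the fact that adjacent subsectors share their entire connecting side (noted after Lemma~\ref{lem:subsectors}) and on the Tamassia-style insight that once the shape is fixed, a consistent metric realizing it inside each sector can be found by the strip operations of Property~\ref{prop:sigma} and Lemma~\ref{le:box}; and (ii) the number of distinct routings of the $\leq 4k$ missing edges through the boundary of a single sector is bounded---here one invokes Lemma~\ref{le:tech} to cap crossings of any fixed line-segment by a missing edge at $k$, so each edge induces at most $\mathcal{O}(k)$ boundary crossings per sector and there are $\mathcal{O}(k)$ boundary sides (one per adjacent sector), giving a $k^{\mathcal{O}(1)}$-sized crossing pattern and hence $2^{k^{\mathcal{O}(1)}}$ patterns overall. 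Once these finiteness claims are in place, correctness of the DP is the usual bottom-up induction over $T$, and the running time is $2^{k^{\mathcal{O}(1)}}\cdot |V(\mathcal{G})|\cdot\mathrm{poly}(|\mathcal{I}|)=2^{k^{\mathcal{O}(1)}}\cdot|V(G_f)|$ after absorbing the polynomial factors (since $|V(\mathcal{G})|=\mathcal{O}(x^2)$ is polynomial in $|\mathcal{I}|$), which together with Lemma~\ref{lem:reduction} and the outer-face reduction yields fixed-parameter tractability of \TBOE.
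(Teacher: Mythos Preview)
Your proposal is correct and follows essentially the same approach as the paper: dynamic programming over a nice tree decomposition of the sector graph $\mathcal{G}$, with per-bag states that record, for each sector in the bag, the full local drawing on its sector-grid (the paper calls this $\gridsols(v)$) together with the sets $X_V,X_E$ of missing vertices and edges already fully drawn in forgotten sectors, and with standard introduce/forget/join transitions. The paper's formulation is slightly cleaner in that it records the entire grid-restricted drawing in each sector rather than separate boundary-crossing patterns, which sidesteps your ``gluing'' obstacle directly, but the two encodings are equivalent and the running-time accounting is the same.
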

\begin{proof}
We begin by using the recent algorithm of Korhonen~\cite{Korhonen21} to compute a nice tree decomposition $(T,\chi)$ of $\calG$ of width at most $(4+k)^k$. Our aim will be to dynamically process $\calG$ along $T$ and enumerate all possible options of how a bend-minimal extension can intersect the sector-grid points of the sectors in the current bag. As our first step towards this aim, we formalize the records that will be stored in the dynamic program.

Recall that Lemma~\ref{lem:sectorgrid} guarantees the existence of a ``nice'' solution where all added feature points and added vertices lie on the sector-grid. There are at most $k^{\bigoh(\newgridsize(k))}$ possible ways any hypothetical nice solution may intersect with sector $v$, and these can be exhaustively enumerated by considering all possible placements of missing vertices on the sector-grid points of $v$ and all possible placements of missing edges on the line-segments projected out of the grid points. Let $\gridsols(v)$ be the set of all such possible intersections between a nice solution and $v$.

The core ingredient in our records is the notion of a \emph{configuration} of a node $t$ in $T$, which is a tuple $(X_V, X_E, \theta)$ where: 
\begin{itemize}
\item $X_V\subseteq V(G_f)\setminus V(H_f)$,
\item $X_E\subseteq E(G_f)\setminus E(H_f)$, and
\item $\theta$ maps each $v\in \chi(t)$ to an element of $\gridsols(v)$.
\end{itemize}
Intuitively, a configuration uses $X_V$ and $X_E$ to capture which of the missing vertices and edges have already been completely drawn in the sectors that have been processed so far, while $\theta$ captures how the solution intersects with the individual sectors in the bag.

To define our records at $t$, we will need to link the configurations introduced above to the ``partial solutions'' that can be constructed for the sectors in $\chi_\downarrow(t)$. To this end, let us consider an arbitrary configuration $\mathcal{C}=(X_V,X_E,\theta)$. Let the \emph{past bend number} of $\mathcal{C}$ be the minimum number of required to extend the orthogonal drawing of $\Gamma(H_f)$ in all sectors in $\chi_\downarrow(t)\setminus \chi(t)$ with the vertices in $X_v$, the edges in $X_E$, as well as all edges that $\theta$ routes to the boundaries of the sectors in $\chi_\downarrow(t)\setminus \chi(t)$, while respecting the port assignment $p$ in the \FTBOE\ instance. If no such extension exists, we simply set the \emph{past bend number} of $\mathcal{C}$ to ``$\infty$'', where $\infty+z=\infty$ for every $z$. 

With this, we are finally ready to formally define the dynamic programming records at node $t$: $\rec(t)$ is the mapping from all configurations at $t$ to their past bend numbers. 

Observe that since our nice tree decomposition has an empty root bag, there is only a single relevant configuration at the root $r$ of $T$---notably $(V(G_f)\setminus V(H_f), E(G_f)\setminus E(H_f), \emptyset)$. Hence, if we are given the records at the root $r$ we can solve \FTBOE\ by simply outputting $\rec(r)(V(G_f)\setminus V(H_f), E(G_f)\setminus E(H_f), \emptyset)$. Moreover, since all leaves contain an empty bag as well, the records for each leaf node $t$ are simply $\rec(t)=\{(\emptyset,\emptyset,\emptyset)\mapsto 0\}$. Hence, to prove the theorem it now suffices to describe the dynamic programming steps that are to be carried out when computing the records of a join, introduce or forget node $t$ from the records of its children.

\smallskip
\noindent \textbf{$t$ is a Join node with children $t_1$ and $t_2$.} \quad
We gradually construct $\rec(t)$ as follows. We initialize by having $\rec(t)$ map each configuration of $t$ to $\infty$.
We loop over all records of $t_1$ and $t_2$, and for each pair $(X_V^1,X_E^1,\theta^1)\in \rec(t_1)$, $(X_V^2,X_E^2,\theta^2)\in \rec(t_2)$. We check whether (1) $\theta^1=\theta^2$ and whether (2) $X_V^1\cap X_V^2=\emptyset$. If these checks succeed, we set $\rec(t)(X_V^1\cup X_V^2, X_E^1\cup X_E^2, \theta^1):=\min(\rec(t)(X_V^1\cup X_V^2, X_E^1\cup X_E^2, \theta^1), \quad \rec(t_1)(X_V^1, X_E^1, \theta^1)+\rec(t_2)(X_V^2, X_E^2, \theta^2))$.

\smallskip
\noindent \textbf{$t$ is a Forget node with child $t'$.} \quad
We once again gradually construct $\rec(t)$ and initialize by having $\rec(t)$ map each configuration of $t$ to $\infty$. Let $w= \chi(t')\setminus \chi(t)$ be the sector forgotten at $t$. We loop over all configurations at $t'$, and for each such configuration $(X_V',X_E',\theta')$ we construct a configuration at $t$ by (1) restricting $\theta'$ only to $\chi(t')\setminus \{w\}$, and (2) adding to $X_V'$ all vertices which $\theta'$ placed in $w$, and (3) adding to $X_E'$ all edges which were present in $\theta'(w)$ and which are not present in $\theta'(v)$ for any other $V\in \chi(t)$. We remark that since $\chi(t)$ is a separator, point (3) guarantees that any such edge added to $X_E'$ must have been fully processed in $\chi_\downarrow(t)\setminus \chi(t)$. Let $(X_V,X_E,\theta)$ be the configuration at $t$ constructed in this way. We then set $\rec(t)(X_V,X_E,\theta):=\min(\rec(t)(X_V,X_E,\theta), \rec(t')(X_V',X_E',\theta') + \upsilon)$ where $\upsilon$ is the number of bends occurring in $\theta'(w)$.

\smallskip
\noindent \textbf{$t$ is an Introduce node with child $t'$.} \quad
Let $w=\chi(t)\setminus \chi(t'$ by the sector introduced at $t$, and let us once again initialize by having $\rec(t)$ assign all configurations at $t$ the value $\infty$. To compute the records at $t$, we loop over all configurations at $t'$ and also over all of the at most $k^{\bigoh(\gridsize(k)^2)}$ possible choices of $\gridsols(w)$. For each such configuration $(X_V',X_E',\theta')$ at $t'$ and each $\phi\in\gridsols(w)$, we construct the configuration $(X_V',X_E',\theta'\cup \{w\mapsto \phi\})$ and simply set $\rec(t)(X_V',X_E',\theta'\cup \{w\mapsto \phi\})=\rec(t')(X_V',X_E',\theta')$.

The total running time required to process each node can be upper-bounded by the time required to process join nodes, which is at most $2^{\bigoh(k)}$ (for looping over all choices of $X_V^1,X_E^1,X_V^2,X_E^2$) times $(k^{\bigoh(k\cdot \gridsize(k)^2)})$ (for looping over all choices of $\theta^1, \theta^2$).

To argue correctness, let us consider an instance $\mathcal{I}$ of \FTBOE\ with solution $\beta$. This means that $\mathcal{I}$ admits a $\beta$-extension, and by Lemma~\ref{lem:sectorgrid} $\mathcal{I}$ also admits a $\beta$-extension which uses only the points of the sector-grid for drawing vertices and bends. Then at the root node $r$ of $T$, assuming the dynamic programming algorithm correctly computed the records we would obtain that $\rec(r)(V(G_f)\setminus V(H_f), E(G_f)\setminus E(H_f), \emptyset)=\beta$, as desired. Hence, it suffices to verify the correctness of the computation of the records for join, introduce, and forget nodes. In all three cases, this verification follows by recalling the definition of the records and observing how the records of $t$ depend on the records of its children.
\end{proof}

By combining Lemma~\ref{lem:dynprog} with Lemma~\ref{lem:reduction} and Observation~\ref{obs:sectorbound}, we conclude:
\begin{corollary}
\label{cor:main}
\TBOE can be solved in time $2^{\kappa^{\bigoh(1)}} \cdot n$, where $n$ is the number of feature points of $\Gamma(H)$.
\end{corollary}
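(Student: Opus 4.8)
The plan is to obtain Corollary~\ref{cor:main} as a mechanical composition of the reductions and the algorithm built up in Sections~\ref{sec:red}--\ref{sec:tw}, so that the only real work is to check that the linear dependence on $n$ and the shape $2^{\kappa^{\bigoh(1)}}$ of the multiplicative factor survive all the intermediate steps. Concretely, let $\langle(G,H,\Gamma(H)),\beta\rangle$ be an instance of \TBOE\ with parameter $\kappa$, and let $n$ be the number of feature points of $\Gamma(H)$. Since $G$ is planar of maximum degree four, $|V(G)|\le|V(H)|+\kappa$ and $|E(G)|=\bigoh(|V(G)|)$, so the whole instance has size $\bigoh(n)$ up to the encoding of $\beta$, on which the running time will depend only through $\bigoh(1)$-cost arithmetic (a bend-minimal extension uses only $\mathrm{poly}(\kappa,n)$ bends, as follows from the proofs in Section~\ref{sec:discret}, so all intermediate bend counts have small bit-length). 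First I would apply Lemma~\ref{lem:reduction}: in time $3^{\bigoh(\kappa)}$ it produces $3^{\bigoh(\kappa)}$ instances of \FTBOE, each of size $\bigoh(n)$ — the reduction adds only $\bigoh(\kappa)$ dummy vertices — and with parameter $k\le\kappa$, such that the \TBOE\ instance is a yes-instance iff for one branch the sum of the outputs of the associated \FTBOE\ instances plus the branch-dependent offset is at most $\beta$. Hence it suffices to solve each \FTBOE\ instance in time $2^{\kappa^{\bigoh(1)}}\cdot n$.

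Next, for each \FTBOE\ instance whose marked face $f$ is the outer face, I would invoke the outer-face reduction of Section~\ref{sec:prep}, which in time $2^{\bigoh(k^2\log k)}$ replaces the instance by $2^{\bigoh(k^2\log k)}$ equivalent \FTBOE\ instances whose marked face is an inner face (again of size $\bigoh(n)$ and parameter $k\le\kappa$); every \FTBOE\ instance is then turned into an equivalent \emph{clean} one in polynomial time via Lemma~\ref{lem:globalcut}, adding only $\bigoh(n)$ further dummy vertices — cleanness being exactly what is needed downstream so that the treewidth bound of Theorem~\ref{thm:secttw} applies. Now Lemma~\ref{lem:dynprog} solves each such clean inner-face instance in time $2^{k^{\bigoh(1)}}\cdot|V(G_f)|$; here Observation~\ref{obs:sectorbound} is what certifies that the sector graph underlying the dynamic program has only $\bigoh(x^2)$ vertices, where $x=\bigoh(n)+\kappa^{\bigoh(1)}$ is the number of feature points of $\Gamma(H_f)$, so that the program runs over a tree decomposition of size polynomial in $n$ (plus an FPT additive term) and its cost is of the form $2^{\kappa^{\bigoh(1)}}$ times a low-degree polynomial in $n$, absorbed into the stated bound.

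Multiplying out the branching factors, the total running time is $3^{\bigoh(\kappa)}\cdot 2^{\bigoh(\kappa^2\log\kappa)}\cdot\mathrm{poly}(n)\cdot 2^{\kappa^{\bigoh(1)}}\cdot n=2^{\kappa^{\bigoh(1)}}\cdot n$, and correctness is immediate from the equivalences in Lemmas~\ref{lem:reduction} and~\ref{lem:globalcut}, the outer-face lemma, and Lemma~\ref{lem:dynprog}. I do not anticipate a genuine obstacle here, since all the difficulty is concentrated in the earlier sections; the two points one has to watch are (i) that each reduction step introduces only $\bigoh(n)$ new feature points and $\bigoh(\kappa)$ new missing vertices, so that both the linear dependence on $n$ and the bound $k\le\kappa$ are preserved along the whole pipeline (the single edge-subdivision trick in Lemma~\ref{lem:reduction} keeps $k\le\kappa$ because it is paid for by decreasing $\beta$), and (ii) that the arithmetic on $\beta$ carried out when combining the \FTBOE\ outputs stays cheap, which it does by the polynomial bound on the number of bends of a bend-minimal extension noted above.
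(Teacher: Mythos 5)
Your proposal follows essentially the same route as the paper, whose proof of this corollary is literally the one-line composition of Lemma~\ref{lem:reduction}, the preprocessing of Section~\ref{sec:prep}, Observation~\ref{obs:sectorbound}, and Lemma~\ref{lem:dynprog}; your write-up just makes the intermediate bookkeeping explicit. The only blemish is your final line, where a $\mathrm{poly}(n)$ factor (coming from the $\bigoh(x^2)$ bound on the number of sectors) is ``absorbed'' into a bound that is linear in $n$ --- an imprecision the paper's own terse statement shares rather than resolves.
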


\section{Concluding Remarks}
\label{sec:concl}
We have established the fixed-parameter tractability of the extension problem for bend-minimal orthogonal drawings, marking a notable addition to our understanding of drawing extension problems. What distinguishes this result from some of its predecessors on, e.g., extending 1-planar~\cite{eghkn-enc1dpt-20}, simple $k$-planar~\cite{GanianHKPV21} or crossing-minimal~\cite{HammH22} drawings, is that  these examples were topological while orthogonal planar drawings are geometric in nature. We believe this is one of the reasons why it seems impossible to use previously developed techniques in our setting, a fact which inspired the development of a novel machinery that we believe will find applications beyond the specific context of the problem studied here. 

As an example of this, a minor adjustment of our technique is already sufficient to also obtain a fixed-parameter algorithm for the problem of extending an orthogonal planar drawing while preserving a bound $\delta$ on the number of bends per edge~\cite{BlasiusRW16,bk-bhogd-98} parameterized by $\kappa+\delta$. But the technique could also possibly be applied to more general drawing styles, such as extending drawings restricted to boundedly many allowed edge slopes~\cite{DBLP:journals/jgaa/Hoffmann17,DBLP:journals/siamdm/KeszeghPP13}.

\bibliography{GD-ref.bib}

\begin{thebibliography}{10}

\bibitem{adfjkp-tppeg-15}
Patrizio Angelini, Giuseppe Di~Battista, Fabrizio Frati, Vít Jelínek, Jan
  Kratochvíl, Maurizio Patrignani, and Ignaz Rutter.
\newblock Testing planarity of partially embedded graphs.
\newblock {\em ACM Trans. Algorithms}, 11(4):32:1--32:42, 2015.
\newblock \href {http://dx.doi.org/10.1145/2629341}
  {\path{doi:10.1145/2629341}}.

\bibitem{AngeliniRS21}
Patrizio Angelini, Ignaz Rutter, and T.~P. Sandhya.
\newblock Extending partial orthogonal drawings.
\newblock {\em J. Graph Algorithms Appl.}, 25(1):581--602, 2021.
\newblock \href {http://dx.doi.org/10.7155/jgaa.00573}
  {\path{doi:10.7155/jgaa.00573}}.

\bibitem{arroyo2022inserting}
Alan Arroyo, Fabian Klute, Irene Parada, Birgit Vogtenhuber, Raimund Seidel,
  and Tilo Wiedera.
\newblock Inserting one edge into a simple drawing is hard.
\newblock {\em Discrete \& Computational Geometry}, pages 1--26, 2022.
\newblock \href {http://dx.doi.org/10.1007/978-3-030-60440-0_26}
  {\path{doi:10.1007/978-3-030-60440-0_26}}.

\bibitem{BattistaLV98}
Giuseppe~Di Battista, Giuseppe Liotta, and Francesco Vargiu.
\newblock Spirality and optimal orthogonal drawings.
\newblock {\em {SIAM} J. Comput.}, 27(6):1764--1811, 1998.
\newblock \href {http://dx.doi.org/10.1137/S0097539794262847}
  {\path{doi:10.1137/S0097539794262847}}.

\bibitem{bk-bhogd-98}
Therese Biedl and Goos Kant.
\newblock A better heuristic for orthogonal graph drawings.
\newblock {\em Comput. Geom. Theory Appl.}, 9(3):159--180, 1998.
\newblock \href {http://dx.doi.org/10.1016/S0925-7721(97)00026-6}
  {\path{doi:10.1016/S0925-7721(97)00026-6}}.

\bibitem{BLPS13}
Therese Biedl, Anna Lubiw, Mark Petrick, and Michael~J. Spriggs.
\newblock Morphing orthogonal planar graph drawings.
\newblock {\em ACM Trans. Algorithms}, 9(4):29, 2013.

\bibitem{BlasiusRW16}
Thomas Bl{\"{a}}sius, Ignaz Rutter, and Dorothea Wagner.
\newblock Optimal orthogonal graph drawing with convex bend costs.
\newblock {\em {ACM} Trans. Algorithms}, 12(3):33:1--33:32, 2016.
\newblock \href {http://dx.doi.org/10.1145/2838736}
  {\path{doi:10.1145/2838736}}.

\bibitem{BrucknerRS22}
Guido Br{\"{u}}ckner, Ignaz Rutter, and Peter Stumpf.
\newblock Extending partial representations of circle graphs in near-linear
  time.
\newblock In {\em 47th International Symposium on Mathematical Foundations of
  Computer Science, {MFCS} 2022, August 22-26, 2022, Vienna, Austria}, volume
  241 of {\em LIPIcs}, pages 25:1--25:14. Schloss Dagstuhl - Leibniz-Zentrum
  f{\"{u}}r Informatik, 2022.
\newblock \href {http://dx.doi.org/10.4230/LIPIcs.MFCS.2022.25}
  {\path{doi:10.4230/LIPIcs.MFCS.2022.25}}.

\bibitem{br-pclp-17}
Guido Brückner and Ignaz Rutter.
\newblock Partial and constrained level planarity.
\newblock In {\em Discrete Algorithms (SODA'17)}, pages 2000--2011. SIAM, 2017.
\newblock \href {http://dx.doi.org/10.1137/1.9781611974782.130}
  {\path{doi:10.1137/1.9781611974782.130}}.

\bibitem{cegl-dgpwpofpa-12}
Erin~W. Chambers, David Eppstein, Michael~T. Goodrich, and Maarten
  L{\"{o}}ffler.
\newblock Drawing graphs in the plane with a prescribed outer face and
  polynomial area.
\newblock {\em J. Graph Algorithms Appl.}, 16(2):243--259, 2012.
\newblock \href {http://dx.doi.org/10.7155/jgaa.00257}
  {\path{doi:10.7155/jgaa.00257}}.

\bibitem{CyganFKLMPPS15}
Marek Cygan, Fedor~V. Fomin, Lukasz Kowalik, Daniel Lokshtanov, D{\'{a}}niel
  Marx, Marcin Pilipczuk, Michal Pilipczuk, and Saket Saurabh.
\newblock {\em Parameterized Algorithms}.
\newblock Springer, 2015.
\newblock \href {http://dx.doi.org/10.1007/978-3-319-21275-3}
  {\path{doi:10.1007/978-3-319-21275-3}}.

\bibitem{DBLP:books/ph/BattistaETT99}
Giuseppe {Di Battista}, Peter Eades, Roberto Tamassia, and Ioannis~G. Tollis.
\newblock {\em Graph Drawing: Algorithms for the Visualization of Graphs}.
\newblock Prentice-Hall, 1999.

\bibitem{DBLP:journals/jcss/GiacomoLM22}
Emilio {Di Giacomo}, Giuseppe Liotta, and Fabrizio Montecchiani.
\newblock Orthogonal planarity testing of bounded treewidth graphs.
\newblock {\em J. Comput. Syst. Sci.}, 125:129--148, 2022.
\newblock \href {http://dx.doi.org/10.1016/j.jcss.2021.11.004}
  {\path{doi:10.1016/j.jcss.2021.11.004}}.

\bibitem{dl-codves-98}
Walter Didimo and Giuseppe Liotta.
\newblock Computing orthogonal drawings in a variable embedding setting.
\newblock In Kyung{-}Yong Chwa and Oscar~H. Ibarra, editors, {\em Algorithms
  and Computation (ISAAC'98)}, volume 1533 of {\em LNCS}, pages 79--88.
  Springer, 1998.
\newblock \href {http://dx.doi.org/10.1007/3-540-49381-6_10}
  {\path{doi:10.1007/3-540-49381-6_10}}.

\bibitem{dlop-oodp3lt-20}
Walter Didimo, Giuseppe Liotta, Giacomo Ortali, and Maurizio Patrignani.
\newblock Optimal orthogonal drawings of planar 3-graphs in linear time.
\newblock In {\em Discrete Algorithms (SODA'20)}, pages 806--825. SIAM, 2020.
\newblock \href {http://dx.doi.org/10.1137/1.9781611975994.49}
  {\path{doi:10.1137/1.9781611975994.49}}.

\bibitem{dg-popda-13}
Christian~A. Duncan and Michael~T. Goodrich.
\newblock Planar orthogonal and polyline drawing algorithms.
\newblock In Roberto Tamassia, editor, {\em Handbook of Graph Drawing and
  Visualization}, chapter~7, pages 223--246. CRC Press, 2013.

\bibitem{eghkn-enc1dpt-20}
Eduard Eiben, Robert Ganian, Thekla Hamm, Fabian Klute, and Martin Nöllenburg.
\newblock Extending nearly complete 1-planar drawings in polynomial time.
\newblock In Javier Esparza and Daniel Král', editors, {\em Mathematical
  Foundations of Computer Science (MFCS'20)}, volume 170 of {\em LIPIcs}, pages
  31:1--31:16. Schloss Dagstuhl -- Leibniz-Zentrum für Informatik, 2020.
\newblock \href {http://dx.doi.org/10.4230/LIPIcs.MFCS.2020.31}
  {\path{doi:10.4230/LIPIcs.MFCS.2020.31}}.

\bibitem{eghkn-ep1d-20}
Eduard Eiben, Robert Ganian, Thekla Hamm, Fabian Klute, and Martin Nöllenburg.
\newblock Extending partial 1-planar drawings.
\newblock In Artur Czumaj, Anuj Dawar, and Emanuela Merelli, editors, {\em
  Automata, Languages, and Programming (ICALP'20)}, volume 168 of {\em LIPIcs},
  pages 43:1--43:19. Schloss Dagstuhl--Leibniz-Zentrum für Informatik, 2020.
\newblock \href {http://dx.doi.org/10.4230/LIPIcs.ICALP.2020.43}
  {\path{doi:10.4230/LIPIcs.ICALP.2020.43}}.

\bibitem{efk-ogd-01}
Markus Eiglsperger, Sándor~P. Fekete, and Gunnar~W. Klau.
\newblock Orthogonal graph drawing.
\newblock In Michael Kaufmann and Dorothea Wagner, editors, {\em Drawing
  Graphs: Methods and Models}, volume 2025 of {\em LNCS}, chapter~6, pages
  121--171. Springer-Verlag, 2001.
\newblock \href {http://dx.doi.org/10.1007/3-540-44969-8_6}
  {\path{doi:10.1007/3-540-44969-8_6}}.

\bibitem{FialaRSZ22}
Jir{\'{\i}} Fiala, Ignaz Rutter, Peter Stumpf, and Peter Zeman.
\newblock Extending partial representations of circular-arc graphs.
\newblock In {\em Graph-Theoretic Concepts in Computer Science - 48th
  International Workshop, {WG} 2022, T{\"{u}}bingen, Germany, June 22-24, 2022,
  Revised Selected Papers}, volume 13453 of {\em Lecture Notes in Computer
  Science}, pages 230--243. Springer, 2022.
\newblock \href {http://dx.doi.org/10.1007/978-3-031-15914-5\_17}
  {\path{doi:10.1007/978-3-031-15914-5\_17}}.

\bibitem{GanianHKPV21}
Robert Ganian, Thekla Hamm, Fabian Klute, Irene Parada, and Birgit Vogtenhuber.
\newblock Crossing-optimal extension of simple drawings.
\newblock In Nikhil Bansal, Emanuela Merelli, and James Worrell, editors, {\em
  48th International Colloquium on Automata, Languages, and Programming,
  {ICALP} 2021}, volume 198 of {\em LIPIcs}, pages 72:1--72:17. Schloss
  Dagstuhl - Leibniz-Zentrum f{\"{u}}r Informatik, 2021.
\newblock URL: \url{https://doi.org/10.4230/LIPIcs.ICALP.2021.72}, \href
  {http://dx.doi.org/10.4230/LIPIcs.ICALP.2021.72}
  {\path{doi:10.4230/LIPIcs.ICALP.2021.72}}.

\bibitem{GargT01}
Ashim Garg and Roberto Tamassia.
\newblock On the computational complexity of upward and rectilinear planarity
  testing.
\newblock {\em {SIAM} J. Comput.}, 31(2):601--625, 2001.
\newblock \href {http://dx.doi.org/10.1137/S0097539794277123}
  {\path{doi:10.1137/S0097539794277123}}.

\bibitem{HammH22}
Thekla Hamm and Petr Hlinen{\'{y}}.
\newblock Parameterised partially-predrawn crossing number.
\newblock In Xavier Goaoc and Michael Kerber, editors, {\em 38th International
  Symposium on Computational Geometry, SoCG 2022}, volume 224 of {\em LIPIcs},
  pages 46:1--46:15. Schloss Dagstuhl - Leibniz-Zentrum f{\"{u}}r Informatik,
  2022.
\newblock \href {http://dx.doi.org/10.4230/LIPIcs.SoCG.2022.46}
  {\path{doi:10.4230/LIPIcs.SoCG.2022.46}}.

\bibitem{DBLP:journals/jgaa/Hoffmann17}
Udo Hoffmann.
\newblock On the complexity of the planar slope number problem.
\newblock {\em J. Graph Algorithms Appl.}, 21(2):183--193, 2017.
\newblock \href {http://dx.doi.org/10.7155/jgaa.00411}
  {\path{doi:10.7155/jgaa.00411}}.

\bibitem{jkr-ktppeg-13}
Vít Jelínek, Jan Kratochvíl, and Ignaz Rutter.
\newblock A {Kuratowski}-type theorem for planarity of partially embedded
  graphs.
\newblock {\em Comput. Geom. Theory Appl.}, 46(4):466--492, 2013.
\newblock \href {http://dx.doi.org/10.1016/j.comgeo.2012.07.005}
  {\path{doi:10.1016/j.comgeo.2012.07.005}}.

\bibitem{DBLP:journals/siamdm/KeszeghPP13}
Bal{\'{a}}zs Keszegh, J{\'{a}}nos Pach, and D{\"{o}}m{\"{o}}t{\"{o}}r
  P{\'{a}}lv{\"{o}}lgyi.
\newblock Drawing planar graphs of bounded degree with few slopes.
\newblock {\em {SIAM} J. Discrete Math.}, 27(2):1171--1183, 2013.
\newblock \href {http://dx.doi.org/10.1137/100815001}
  {\path{doi:10.1137/100815001}}.

\bibitem{Korhonen21}
Tuukka Korhonen.
\newblock A single-exponential time 2-approximation algorithm for treewidth.
\newblock In {\em 62nd {IEEE} Annual Symposium on Foundations of Computer
  Science, {FOCS} 2021}, pages 184--192. {IEEE}, 2021.
\newblock URL: \url{https://doi.org/10.1109/FOCS52979.2021.00026}, \href
  {http://dx.doi.org/10.1109/FOCS52979.2021.00026}
  {\path{doi:10.1109/FOCS52979.2021.00026}}.

\bibitem{lbf-eupgd-20}
Giordano~Da Lozzo, Giuseppe~Di Battista, and Fabrizio Frati.
\newblock Extending upward planar graph drawings.
\newblock {\em Comput. Geom. Theory Appl.}, 91:101668, 2020.
\newblock \href {http://dx.doi.org/10.1016/j.comgeo.2020.101668}
  {\path{doi:10.1016/j.comgeo.2020.101668}}.

\bibitem{mnr-ecpdg-15}
Tamara Mchedlidze, Martin Nöllenburg, and Ignaz Rutter.
\newblock Extending convex partial drawings of graphs.
\newblock {\em Algorithmica}, 76(1):47--67, 2016.
\newblock \href {http://dx.doi.org/10.1007/s00453-015-0018-6}
  {\path{doi:10.1007/s00453-015-0018-6}}.

\bibitem{DBLP:books/ws/NishizekiR04}
Takao Nishizeki and Md.~Saidur Rahman.
\newblock {\em Planar Graph Drawing}, volume~12 of {\em Lecture Notes Series on
  Computing}.
\newblock World Scientific, 2004.
\newblock \href {http://dx.doi.org/10.1142/5648} {\path{doi:10.1142/5648}}.

\bibitem{Patrignani06}
Maurizio Patrignani.
\newblock On extending a partial straight-line drawing.
\newblock {\em Int. J. Found. Comput. Sci.}, 17(5):1061--1070, 2006.
\newblock \href {http://dx.doi.org/10.1142/S0129054106004261}
  {\path{doi:10.1142/S0129054106004261}}.

\bibitem{t-eggmn-87}
Roberto Tamassia.
\newblock On embedding a graph in the grid with the minimum number of bends.
\newblock {\em {SIAM} J. Comput.}, 16(3):421--444, 1987.
\newblock \href {http://dx.doi.org/10.1137/0216030}
  {\path{doi:10.1137/0216030}}.

\end{thebibliography}

\end{document}